\DeclareMathOperator*{\argmin}{arg\,min}
\DeclareMathOperator*{\argmax}{arg\,max} 
\newtheorem{remark}{Remark}
\newtheorem{theorem}{Theorem}
\newtheorem{corollary}{Corollary}
\theoremstyle{definition}
\definecolor{lightgray}{gray}{0.8}
\definecolor{darkgray}{gray}{0.9}
\definecolor{lgray}{rgb}{0.9, 0.9, 0.9}
\title{ Bivariate Variable Ranking for censored time-to-event data via  Copula Link Based Additive models
}
\author{
      Danilo Petti\thanks{Corresponding author.}\thanks{School of Mathematics, University of Essex, Wivenhoe Park, Colchester.\texttt{d.petti@essex.ac.uk}.}
       \and
       Marcella Niglio \thanks{Department of Statistics, University of Salerno, Via Giovanni Paolo II, 132, Fisciano. \texttt{mnigli@unisa.it}.}
       \and
       Marialuisa Restaino \thanks{Department of Statistics, University of Salerno, Via Giovanni Paolo II, 132, Fisciano. \texttt{mlrestaino@unisa.it}.}
}
\begin{document}
\maketitle

\begin{abstract}
In this paper, we present a variable ranking approach established on a novel measure to select important variables in bivariate Copula Link-Based Additive Models \citep{Marra2020}. The proposal allows for identifying two sets of relevant covariates for the two time-to-events without neglecting the dependency structure that may exist between the two survivals. The procedure suggested is evaluated via a simulation study and then is applied for analyzing the Age-Related Eye Disease Study dataset. The algorithm is implemented in a new \texttt{R} package, called \texttt{BRBVS}.
\end{abstract}

\keywords{ Bivariate survival data \and Copula \and  Mixed censoring scheme \and Variable selection.}

\clearpage
\tableofcontents

\section{Introduction}\label{intro}
In numerous studies of various scientific fields such as economics, genomics, sociology, biomedicine, and clinical trials, a large amount of variables is often monitored, sometimes exceeding even the number of observations ($p >> n$). Consequently, selecting the most relevant and significant variables becomes crucial to enhance the quality of estimation, prediction, and interpretation within models (among the others \cite{fan2001variable}).

During the last decades, an extensive body of literature on variable selection methods has played a pivotal role in nearly every field, especially in the context of linear models (see the reviews, \cite{fan2010overviewvarableselectio, desboulets2018review, HeinzeEtAl2018}). 

The traditional techniques widely applied for variable selection are, among others, forward selection, backward elimination, stepwise selection, and best-subset selection \citep{harrell2001}. Then, penalized variable selection methods based on the introduction of a penalty term in the likelihood function are largely studied in linear regression \citep{tibshirani1996regression}, in generalized linear models (GLM) \citep{Friedman2010}, accelerated failure time models (AFT) \citep{ park2018aft}, Cox's proportional hazards and frailty models \citep{tibshirani1997lasso, fan2002variable, zou2008note, fan2001variable}, in copula survival models \citep{Kwon2019copula}, and for multivariate survival data \citep{cai2005variable}.

However, all the penalized methods may not perform well when the dimension of data is high and ultra-high, leading to some problems related to computational expediency, statistical accuracy, and algorithmic stability. A proposal to overcome these difficulties is given by the screening procedures (see, e.g. \cite{Fan2009VS}) that identify the relevant covariates evaluating their importance, measured in terms of the association between the dependent variable and each feature. 

The screening procedure based on this idea is the Sure Independence Screening (SIS), originally proposed by \cite{fan2008} in ordinary linear regression, where the importance of variables is measured by estimating the marginal correlation between each covariate and the response variable. Subsequently, after deleting the uninformative variables, standard penalized variable selection methods (e.g. LASSO, SCAD, Adaptive LASSO, etc.) can be applied to the reduced dataset. 

SIS technique is extended to other models and several measures for variable ranking are proposed (i.e.\ generalized linear models \citep{FanSong2010}, Cox proportional hazards and Accelerator Failure models \citep{fan2010high, Khan2022, song2014censored, wang2022spearman, li2016survival} and more complex models \citep{Hall2009, Fryzlewicz2012, Shao2014MartingaleDC, li2012b, he2013quantile}).

In the regression domain, \cite{Baranowski2020} proposed 
a procedure, called Ranking-Based Variable Selection (RBVS), that simultaneously identifies subsets of covariates that consistently appear to be important across subsamples extracted from the data.

To the best of our knowledge, there is no extensive literature on screening and variable selection methods for the bivariate copula survival models in the presence of censoring and even less for high dimensional datasets, i.e.\ when the number of covariates is higher than the number of observations.

However, the procedures mentioned above, proposed in the literature for regression models, cannot be automatically adapted to the domain of bivariate survival copulas \citep{Marra2020, PETTI22}, 
for two main reasons: i) they do not produce unique sets of important variables for each time-to-event;  ii) they mainly focus on assessing linear relationships between times and covariates, neglecting the structure of the copula, which is essential in the bivariate survival data context.

Therefore, our contribution aims to fill this gap with two aims. First, we propose an approach able to identify the relevant covariates for bivariate copula survival data by considering marginally the contribution of each feature (as described in the following sections). For this reason, it can also be easily adopted in high-dimensional settings.  In particular, we extend the algorithm proposed in \cite{Baranowski2020}  to the bivariate survival models where the dependence is measured by the copula parameter, and the ranking of the covariates is marginally made for both events of interest. Therefore, since different features may influence the times-to-event, the proposed procedure might capture the differences in the effects of all covariates on each event without neglecting the dependence structure captured via the copula function.

Second, we propose a measure to evaluate the importance of variables and consequently rank them. It is based on the Fisher information matrix, which can handle the probabilistic copula structure between the two survival variables, and it also has an easy geometrical interpretation, overcoming the drawbacks of other measures available in the literature \citep[e.g.,][among others]{zhu2011model,he2013quantile,song2014censored,wang2022spearman}.

The proposed procedure is implemented from scratch and in a new \texttt{R} package called \texttt{BRBVS}, hence enabling any user to carry out the analysis and generate easy-to-interpret numerical and visual summaries.

The paper is organized as follows. In Section \ref{Ch:modelformulation}, an overview of the  Copula link-based additive survival model is provided. In Section \ref{Ch:BRBVS}, the bivariate variable selection algorithm is presented, and in particular, a new variable ranking measure is introduced; in Section \ref{Ch:Main_simulation}, the main results of the simulation study are presented;  Section \ref{Ch:Application} shows the empirical performance of the variable selection algorithm in the presence of real data. Section \ref{ch:conclusion} concludes the paper with a discussion. The online Supplementary Material includes: i) details about the theoretical properties of the measure introduced and the consistency of the bivariate variable selection procedure; ii) further results from the simulation study; and iii) a comprehensive discussion about the examined data and the \texttt{R} implementation of the BRBVS procedure.

\section{Theoretical Framework}
\label{Ch:modelformulation}
In this section, we will recall the theoretical framework of the Copula Link-Based Survival Additive model. All technical details and proofs can be found in \cite{Marra2020}, where the model is originally proposed and then extended to all censoring scenarios in \cite{PETTI22}.

Let $\left(T_{1i}, T_{2i}\right)$ and $\left(C_{1 i}, C_{2 i}\right)$ be a pair of survival times and the censoring times respectively, for unit $i$, with $i=1,\dots,n$, where $n$ is the sample size and $\left(C_{1 i}, C_{2 i}\right)$ are assumed to be independent of $\left(T_{1i}, T_{2i}\right)$.

Let $\mathbf{x}_i$ be the $i^{\text{th}}$ row-vector of the design matrix $\mathbf{X}$ having  dimension $(n \times p)$, where $p$ is the number of covariates. 

The conditional marginal survival functions for $T_{\nu i}$ and the conditional joint survival function are, respectively, given by \[S_{\nu}(t_{\nu i}  \mid \mathbf{x}_{\nu i}; \boldsymbol{\beta}_{\nu})=P\left(T_{\nu i}>t_{\nu i} \mid \mathbf{x}_{\nu i}; \boldsymbol{\beta}_{\nu}\right)\]
for $\nu=1,2$, and $S\left(t_{1 i}, t_{2 i} \mid \mathbf{x}_{i}; \boldsymbol{\delta}\right)=P\left(T_{1 i}>t_{1 i}, T_{2 i}>t_{2 i} \mid \mathbf{x}_{i}; \boldsymbol{\delta}\right)$.

Then, the marginal survivals for the observed time values $t_{1i}$ and $t_{2i}$  are linked by a copula function:
\begin{align}
\label{eq:jointSurvival}
S\left(t_{1 i}, t_{2 i} \mid \mathbf{x}_{i}; \boldsymbol{\delta}\right)=C\left(S_{1}\left(t_{1 i} \mid \mathbf{x}_{1 i}; \boldsymbol{\beta}_{1}\right), S_{2}\left(t_{2 i} \mid \mathbf{x}_{2 i}; \boldsymbol{\beta}_{2}\right); m\left\{\eta_{3 i}\left(\mathbf{x}_{3 i}; \boldsymbol{\beta}_{3}\right)\right\}\right),
\end{align}
where 
\begin{itemize}
    \item $\boldsymbol{\delta}^{\top}=(\boldsymbol{\beta}_1^{\top}, \boldsymbol{\beta}_2^{\top}, \boldsymbol{\beta}_3^{\top}) \in \mathbb{R}^W$ is the vector of parameters  such that $\boldsymbol{\beta}_1 \in \mathbb{R}^{W_1}$, $\boldsymbol{\beta}_2\in \mathbb{R}^{W_2}$, $\boldsymbol{\beta}_3\in \mathbb{R}^{W_3}$, with $W_{\nu} \subseteq p$ and $W=\sum_{\nu}W_{\nu} \ge p$, for $\nu = 1, 2, 3$.
    \item $\mathbf{x}_i^{\top}=(\mathbf{x}_{1i}^{\top}, \mathbf{x}_{2i}^{\top}, \mathbf{x}_{3i}^{\top})\in \mathbb{R}^W$  the vectors of covariates such that $\mathbf{x}_{1i}\in \mathbb{R}^{W_1}$, $\mathbf{x}_{2i}\in \mathbb{R}^{W_2}$, $\mathbf{x}_{3i}\in \mathbb{R}^{W_3}$ can be sub-vectors of (or be equal to) $\mathbf{x}_i$. 
    \item $C(\cdot):(0,1)^2\to (0,1)$ is the copula function, with coefficient \\ $m\{\eta_{3 i}$ $(\mathbf{x}_{3 i} ; $$\boldsymbol{\beta}_{3})\}$ that captures the possibly varying conditional dependence of $\left(T_{1 i}, T_{2 i}\right)$ across observations \citep[e.g.,][]{Marra2020, pattonT, Sklar}.
    \item  $\eta_{3 i}\left(\mathbf{x}_{3 i} ; \boldsymbol{\beta}_{3}\right) \in \mathbb{R}$ is a predictor which includes generic additive covariate effects.
    \item $m(\cdot)$ is an inverse monotonic and differentiable link function which ensures that the dependence parameter lies in a proper range. 
\end{itemize}

The marginal survivals $S_1(\cdot)$ and $S_2(\cdot)$ are modeled by generalized survival or link-based models \citep{Liu2018, Royston2002}, leading to 
$S_{\nu}\left(t_{\nu i} \mid \mathbf{x}_{\nu i}; \boldsymbol{\beta}_{\nu}\right)=G_{\nu}\left\{\eta_{\nu i}\left(t_{\nu i}, \mathbf{x}_{\nu i}; \boldsymbol{\beta}_{\nu}\right)\right\}$, where $G_\nu(\cdot)$ is the inverse link function of $g(\cdot):[0,1]\to \mathbb{R}$ and $\eta_{\nu i}\left(t_{\nu i}, \mathbf{x}_{\nu i}; \boldsymbol{\beta}_{\nu}\right) \in \mathbb{R}$, for $\nu=1,2$, are the additive predictors that must include baseline functions of time (or a stratified set of functions of time) as clarified in \cite{Marra2020}.

The main difference between $\eta_{\nu i}\left(t_{\nu i}, \mathbf{x}_{\nu i}; \boldsymbol{\beta}_{\nu}\right)$ for $\nu=1,2$ and $\eta_{3 i}\left(\mathbf{x}_{3 i}; \boldsymbol{\beta}_{3}\right)$ is that in $\eta_{\nu i}$ two smooth functions of time are included, and $t_{\nu i}$ can be treated as regressors. Dropping the dependence on the covariates, a vector $\mathbf{z}_{\nu i}$ for $\nu =1,2,3$ is introduced, where for $\nu=1,2$,  $\mathbf{z}_{\nu i}=(\mathbf{x}_{\nu i}, t_{\nu i})$, and for $\nu=3$, $\mathbf{z}_{3 i}=\mathbf{x}_{3 i}$. 
Thus, the three additive predictors can be written as:
\begin{equation}
\label{eq:AdditivePredictor}
\eta_{\nu i}=\beta_{\nu 0}+\sum_{k_{\nu}=1}^{K_{\nu}} s_{\nu k_{\nu}}\left(\mathbf{z}_{\nu k_{\nu} i}\right), \quad i=1, \ldots, n; \quad \nu=1,2,3,
\end{equation}
where $\beta_{\nu 0} \in \mathbb{R}$ denotes an overall intercept, $\mathbf{z}_{\nu k_{\nu} i}$ is the $k_{\nu}^{t h}$ sub-vector of the complete vector $\mathbf{z}_{\nu i}$ and the $K_{\nu}$ functions $s_{\nu k_{\nu}}\left(\mathbf{z}_{\nu k_{\nu}}\right.$) represent generic effects which are chosen according to the type of covariate(s). As clarified in \cite{Marra2020}, these functions can be expressed as a linear combination of basis functions $\boldsymbol{b}_{\nu k_{\nu}}(\mathbf{z}_{\nu k_{\nu} i})=(b_{\nu k_{\nu} 1}, \dots, b_{\nu k_{\nu} J_{\nu k_{\nu}}})^{\top} \in \mathbb{R}^{J_{\nu k_\nu}}$  and  the vector of regression coefficients as $\boldsymbol{\beta}_{\nu k_{\nu}}=\left(\beta_{\nu k_{\nu} 1}, \ldots, \beta_{\nu k_{\nu} J_{\nu k_{\nu}}}\right)^{\top} \in \mathbb{R}^{J_{\nu k_{\nu}}}$ such that $\mathbf{s}_{\nu k_{\nu}} (\mathbf{z}_{\nu k_{\nu } i })=\boldsymbol{b}_{\nu k_{\nu}}(\mathbf{z}_{\nu k_{\nu}})^{\top} \boldsymbol{\beta}_{\nu k_{\nu}} $. The aforementioned formulation allows for taking into account a wide range of covariate effects, including linear, nonlinear, and spatial effects \citep{Wood2017}. 
Additionally, this framework incorporates various copula functions and link functions.
For a detailed overview, see Table 1 in \cite{Marra2020}.

Due to the presence of censoring, $T_{\nu i}$ are observed only if $T_{\nu i}\in(L_{\nu i}, R_{\nu i})$, where $L_{\nu i}$ and $R_{\nu i}$ denote the left- and right- censored times, respectively (for $\nu=1,2$). Given that there are different potential combinations of censoring scenarios to be accounted for, two indicator functions  $\gamma_{C}$ and  $\gamma_{U}$ are introduced. Thus,  $\gamma_{C}$ takes a value of 1 if the observation is interval/right/left-censored, and 0 otherwise; $\gamma_{U}$ assumes a value of 1 if the observation is uncensored, and 0 otherwise. More details are in  \cite{PETTI22}.

The log-likelihood is given by:
\begin{align}
\label{eq:lik}
\ell(\boldsymbol \delta) & = \gamma_{U_{1i}} \gamma_{U_{2i}}\sum_{i=1}^n \log [\nabla^2_{t_{1i} t_{2i}}  C_{t_{1i} t_{2i}}] + \gamma_{C_{1i}} \gamma_{C_{2i}} \sum_{i=1}^n \log [F(l_{1i}, l_{2i}, r_{1i}, r_{2i})] + \nonumber \\
& \quad + \gamma_{U_{1i}} \gamma_{C_{1i}} \sum_{i=1}^n \log [\nabla_{t_{1i}}(C_{t_{1i} r_{2i}}-C_{t_{1i} l_{2i}})] + \nonumber\\
&\quad  +\gamma_{C_{1i}} \gamma_{U_{1i}} \sum_{i=1}^n \log [\nabla_{t_{2i}} (C_{r_{1i} t_{2i}}- C_{l_{1i} t_{2i}})],
\end{align}
where
\[\nabla^2_{t_{1i} t_{2i}}  C_{t_{1i} t_{2i}} = \frac{\partial^2}{\partial t_{1i} \partial t_{2i}}  C\left\{G_1(\eta_{1i}(t_{1i})),G_2(\eta_{2i}(t_{2i}));\theta_i \right\}, \]
\begin{align*}
\nabla_{t_{1i}}(C_{t_{1i} r_{2i}}-C_{t_{1i} l_{2i}}) &= \frac{\partial}{\partial t_{1i}} \bigg(C \lbrace G_1(\eta_{1i}(t_{1i}) ), G_2(\eta_{2i}(r_{2i}) );\theta_i \rbrace+ \\
&\quad -  C \lbrace G_1(\eta_{1i}(t_{1i}) ), G_2(\eta_{2i}(l_{2i}) );\theta_i \rbrace \bigg), 
\end{align*}
\begin{align*}
\nabla_{t_{2i}} (C_{r_{1i} t_{2i}}- C_{l_{1i} t_{2i}}) &= \frac{\partial}{\partial t_{2i}} \bigg(C \lbrace G_1(\eta_{1i}(r_{1i}) ), G_2(\eta_{2i}(t_{2i}) );\theta_i \rbrace +\\
&\quad -  C \lbrace G_1(\eta_{1i}(l_{1i}) ), G_2(\eta_{2i}(t_{2i}) );\theta_i\rbrace \bigg),
\end{align*}
\begin{align*}
F(l_{1i}, l_{2i}, r_{1i}, r_{2i}) &= C\lbrace G_1(\eta_{1i}(l_{1i})), G_2(\eta_{2i}(l_{2i})); \theta_i \rbrace +\\
&\quad - C\lbrace G_1(\eta_{1i}(l_{1i})), G_2(\eta_{2i}(r_{2i}));\theta_i \rbrace + \\
&\quad - C\lbrace G_1(\eta_{1i}(r_{1i})), G_2(\eta_{2i}(l_{2i}));\theta_i \rbrace + \\
&\quad + C\lbrace G_1(\eta_{1i}(r_{1i})), G_2(\eta_{2i}(r_{2i}));\theta_i\rbrace. 
\end{align*}
Given some difficulties in the estimation \citep[e.g.,][]{Rupp03}, 
a penalized version of the log-likelihood is maximized
\begin{equation}
\label{eq:penLik}
\ell_{p}(\boldsymbol{\delta})=\ell(\boldsymbol{\delta})-\frac{1}{2} \boldsymbol{\delta}^{\top} \mathbf{S} \boldsymbol{\delta},
\end{equation}
where $\ell_{p}$ is the penalized log-likelihood,  $\mathbf{S}=\operatorname{diag}\left(\mathbf{D}_{1}, \mathbf{D}_{2}, \mathbf{D}_{3}\right)$, with $\mathbf{D}_{\nu}=\operatorname{diag}(0, \lambda_{\nu 1}  \mathbf{D}_{\nu 1},  \ldots, $ $\lambda_{\nu K_{\nu}} \mathbf{D}_{\nu K \nu})$ the overall penalties, 
and \\ $\lambda_{\nu k_{\nu}}= $$(\lambda_{\nu 1}, \ldots,   \lambda_{\nu K_{\nu}})^{\top}   \in[0, \infty)$ the smoothing parameters which control the trade-off between fit and smoothness. 

All technical details and proofs are available in \cite{Marra2020} for the right censored data and in \cite{PETTI22} for a more general censoring setting.

\section{Bivariate Ranking-Based Variable Selection } \label{Ch:BRBVS}

As discussed in Section \ref{intro}, the variable screening and selection are mainly developed in the presence of regression models with the univariate response, whereas their extension to multiple response data is considered only in well-defined cases (see e.g.\ \cite{he2021surescreening}). We here extend the Ranking-Based Variable Selection (RBVS) algorithm of \cite{Baranowski2020} to model \eqref{eq:jointSurvival} by considering two main steps: the first devoted to the variable ranking and the second to the variable selection. 

Since the RBVS approach is here generalized to the model \eqref{eq:jointSurvival}, it will be called the Bivariate RBVS algorithm (BRBVS). 

As clarified in \cite{Marra2020}, the model \eqref{eq:jointSurvival} has a common design matrix for all $\eta_{\nu}$ ($\nu=1, 2, 3$), and the vectors ${\bf x}_\nu$ can include covariates that might be (or not) the same for all $\nu$. Moreover, the relevant variables cannot be separately selected by the two margins $\eta_{\nu}$ ($\nu=1,2$) for two main reasons:  i) this would lead to completely ignoring the dependency structure between the two time-to-event. ii) Including the covariates in $\eta_3$ makes the model \eqref{eq:jointSurvival} even more complex, as this could introduce spurious correlations.

For all these reasons, the \cite{Baranowski2020} algorithm cannot be applied directly and needs to be reformulated. In more detail, the algorithm is generalized to the bivariate survival domain in Section \ref{sec:BRBVS}; a new measure to rank the covariates and an estimation setup to control the computational burden are introduced in Section \ref{SubCh:FisherMetric}.

\subsection{Bivariate RBVS algorithm (BRBVS)} \label{sec:BRBVS}

Let $\mathcal A_\nu\subset (1,2,\ldots, p)$ be the indices that identify a subset of covariates included in the design matrix $\bf X$, and let $|\mathcal A_\nu|=k_{\nu}$ the cardinality of the covariates included in the first and second margins, $\eta_\nu$, $\nu=1,2$. Furthermore, let $\mathcal R_{\nu j}({\bf z})$ be the ranking of the covariate $j^{\text{th}}$ for $j=1,2,\ldots, p$, in $\eta_\nu$, based on a measure $\hat\omega_{\nu j}({\bf z})$, with ${\bf z}=({\bf z}_1, \ldots, {\bf z}_n)$ where ${\bf z}_i=(t_{i1}, t_{i2}, x_{i1}, \ldots, x_{ip})$. The measure $\hat\omega_{\nu j}$ is jointly defined for both $\eta_\nu$, $\nu=1,2$, to assess the importance of each covariate and is such that $\hat\omega_{\nu j}>\hat\omega_{\nu (j+1)}$, for $j=1,\ldots, (p-1)$.

Under the theoretical framework in Section \ref{Ch:modelformulation}, we can define, for each $\eta_\nu$, a unique top-ranked set of covariates such that the probability that $k_\nu$ variables are included in $\eta_\nu$ is:
\begin{equation}\label{eq:pi_v}
\pi_\nu(\mathcal A_\nu)=P(\left\{\mathcal R_{\nu_1}({\bf z}), \mathcal R_{\nu_2}({\bf z}),\ldots, \mathcal R_{\nu_{k_\nu}}({\bf z})\right\}=\mathcal A_\nu),
\end{equation}
with $\pi_\nu(\mathcal A_\nu)=1$, if $\mathcal A_\nu=\emptyset$.

To estimate the probability $\pi_\nu(\mathcal A_\nu)$, we follow the same bootstrap approach in \cite{Baranowski2020}: we consider $B$ bootstrap replicates from $\bf{X} $ and for each of them we draw uniformly without replacement $r=\lfloor n/m \rfloor$ random samples of size $m$ from the dataset (with $\lfloor n/m \rfloor$ the integer part of the ratio). Then, the estimate of the probability $\pi_\nu(\mathcal A_\nu)$ for the sets of covariates $\mathcal A_\nu$ is given by:

\begin{equation}\label{eq:hat_pi_v}
\hat\pi_{\nu,m}(\mathcal A_\nu)=B^{-1}\sum_{b=1}^Br^{-1}\sum_{q=1}^r{\bf 1}(\mathcal A_\nu|I_{b_q}),
\end{equation}
where ${\bf 1}(\cdot)$ is an indicator function which assumes value 1 when the covariates indexed in $\mathcal A_\nu$ are top-ranked and 0 otherwise; $I_{b_q}$ is the $q^{\text{th}}$ subsample extracted from the data, for $q=1,\ldots, r$, $\mathcal A_\nu|I_{b_j}$ is the subset of $k_\nu$ covariates of $\eta_\nu$, with $k_\nu=0,1,\ldots, p-1$, whose ranking is computed by $I_{b_q}$.

These probabilities allow us to define the top-ranked covariates for the two margins $\eta_\nu$:
\begin{equation}\label{eq:max_pi}
\hat{\mathcal A}_{\nu,m,k_v}=\underset{\mathcal A_\nu\in \Omega_{k_\nu}}{\argmax}\quad\hat\pi_{\nu,m}(\mathcal A_\nu),
\end{equation}
with $\Omega_{k_\nu}$ the set of all permutations of $\{1,\ldots, k_\nu\}$.

In practice, the variables included in $\hat{\mathcal A}_{\nu,m,k_\nu}$ are those with maximum probability or, in other words, the set of variables that most often are at the top of the ranking in the $B\cdot r$ sub-samples.

The ranking of covariates is widely used in the screening literature (see, among others, \citep{fan2008}) and, in most cases, the definition of a threshold is required for data reduction purposes. This definition is related to a prespecified value for the number of covariates (as in \cite{fan2008})  or to evaluations that considering the magnitude of the probabilities \eqref{eq:hat_pi_v} allows to identify the set of relevant covariates.

In the presence of the model \eqref{eq:jointSurvival}, the definition of a threshold has additional difficulties due to the presence of two subsets of covariates to select, and in this case, the application of the same threshold value may not be an appropriate choice. In fact, the order of magnitude of the probabilities \eqref{eq:hat_pi_v} should be quite different for the two margins, and then the definition of a threshold can lead to over- or under-selecting the variables. By contrast, even the definition of two prespecified thresholds can lead to a subjective variable selection.

For these reasons, following the idea in \cite{Baranowski2020}, for each $\eta_\nu$ we select the subset $\hat s_\nu$ of important variables such that:
\begin{equation}\label{eq:hat_s_v}
\hat{s}_\nu=\underset{k_\nu=0, \ldots, k_{\max}-1}{\argmin}\frac{[\hat\pi_{\nu,m}(\hat{\mathcal A}_{\nu, m, k_\nu+1})]^\tau}{\hat\pi_{\nu,m}(\hat{\mathcal A}_{\nu, m, k_\nu})},
\end{equation}
with $\tau\in(0,1]$ and $k_{\max}$ the maximum number of covariates considered to compute the ratio \eqref{eq:hat_s_v}, for $\nu=1,2$.

In practice, for each $\eta_\nu$ and given $\tau$, the number of selected variables is chosen by looking at the ratio in \eqref{eq:hat_s_v} and then $k_\nu$ variables are included in $\hat{s}_\nu$ such that $[\hat\pi_{\nu,m}(\hat{\mathcal A}_{\nu, m, k_\nu+1})]^\tau/\hat\pi_{\nu,m}(\hat{\mathcal A}_{\nu, m, k_\nu})$ drastically decreases concerning $[\hat\pi_{\nu,m}(\hat{\mathcal A}_{\nu, m, k_\nu})]^\tau/\hat\pi_{\nu,m}(\hat{\mathcal A}_{\nu, m, k_\nu-1})$.

In the Supplementary Material \ref{ch:teoretProp}, the unicity of the sets of important variables for the two margins (Corollary \ref{th:firstcorollary}) and the consistency of the variable selection procedure (Corollary \ref{th:Secondcorollary}) are shown. Furthermore, in \ref{ch:Rsoftware}, the main steps of the Algorithm are outlined.

Unlike \cite{Baranowski2020}, which deals mainly with linear regression models, the most challenging step in our model setup is the second, where a proper metric must be defined.

In the univariate survival domain, the Mutual Information (MI), also known as Copula Entropy, should be an effective tool for identifying significant variables either in regression \citep{cheng2022variable, frenay2013mutual}, and survival \citep{ma2022copula} contexts.

Its utility stems from three main characteristics: i) MI's applicability transcends variable types, extending beyond mere quantitative variables; ii) it adeptly captures various relationships among random variables, encompassing both linear and non-linear associations; iii) MI is robust against data transformations, including translations and rotations, and it ensures consistent analysis outcomes. These attributes should make MI a metric aligned with our research goals, even if its use has some limitations.

A notable challenge is its inability to be consistently expressed in a closed form, leading to computational intensity in practical applications. To empirically evaluate MI's efficacy in variable selection, we conduct a Monte Carlo simulation (Supplementary Material  \ref{ch:simMIvsFIM}) based on the model framework outlined in Section \ref{Ch:modelformulation}. The findings reveal limitations in MI's effectiveness for variable selection in our domain, leading us to introduce a new measure based on the Fisher Information (Section \ref{SubCh:FisherMetric}).  Our proposal seeks to determine the relevance of the covariates, exploiting the information of the joint estimation via the log-likelihood function. This is achieved by deriving a measure capable of simultaneously evaluating the coefficient estimates and the copula structure involved in the estimation process. Such a development could offer a more holistic and computationally efficient tool for variable selection also in many other domains.
\subsection{Measure Proposal and its estimation} \label{SubCh:FisherMetric}
Given the limitations of MI discussed in Section \ref{sec:BRBVS} and empirically evaluated in Supplementary Material \ref{ch:simMIvsFIM}, we introduce a new measure for the class of model(s) recalled in Section \ref{Ch:modelformulation}. The ideal measure in this domain should be able to: i) take into account the copula function in the selection process; ii) evaluate marginally the contribution of each covariate on $T_{\nu}$ $(\nu=1,2)$. To justify our proposed measure, we need to give some details on the goodness of fit.

It can be easily proved that the Akaike information criterion (AIC) \citep{Akaike} and the  Bayesian information criterion (BIC)  for the class of model(s) discussed in Section \ref{Ch:modelformulation} are respectively  $AIC = 2edf+ ||\boldsymbol{M}- \sqrt{-\boldsymbol{H}} \boldsymbol{\hat{\delta}}||^2$ and $BIC = edf \log{(n)} + ||\boldsymbol{M}- \sqrt{-\boldsymbol{H}} \boldsymbol{\hat{\delta}}||^2$ \citep[see][]{Marra2020}, where $edf$ are the effective degrees of freedom, $\|\cdot\|$  is the Eucledian norm, $\mathbf{M}=\boldsymbol{\mu}_{\mathbf{M}}+\boldsymbol{\epsilon}$, $\boldsymbol{\mu}_{\mathbf{M}}=\sqrt{-\mathbf{H}} \boldsymbol{\delta}^0$, $\boldsymbol{\epsilon}=\sqrt{-\mathbf{H}}^{-1} \mathbf{g}$,  $\mathbf{g}^{\top}=(\mathbf{g}_1^{\top}, \mathbf{g}_2^{\top}, \mathbf{g}_3^{\top})$ is the gradient vector of $\ell(\boldsymbol\delta)$ defined in \eqref{eq:lik}, such that $\mathbf{g}_1=\frac{\partial \ell(\boldsymbol{\delta}) }{\partial  \boldsymbol{\beta}_1}\Bigr|_{\substack{\boldsymbol{\beta}_1=\hat{\boldsymbol{\beta}}_1}}$,$\mathbf{g}_2=\frac{\partial \ell(\boldsymbol{\delta}) }{\partial  \boldsymbol{\beta}_2}\Bigr|_{\substack{\boldsymbol{\beta}_2=\hat{\boldsymbol{\beta}}_2}}$, $\mathbf{g}_3=\frac{\partial \ell(\boldsymbol{\delta}) }{\partial  \boldsymbol{\beta}_3}\Bigr|_{\substack{\boldsymbol{\beta}_3=\hat{\boldsymbol{\beta}}_3}}$ , $\mathbf{H}$ is the Hessian matrix of $\ell(\boldsymbol\delta)$ whose elements are $\mathbf{H}_{\nu \nu'}=\frac{\partial^2 \ell(\boldsymbol{\delta}) }{\partial  \boldsymbol{\beta}_\nu \partial \boldsymbol{\beta}_{\nu'}}\Bigr|_{\substack{\boldsymbol{\beta}_\nu=\hat{\boldsymbol{\beta}}_\nu}\\ \substack{\boldsymbol{\beta}_{\nu'}=\hat{\boldsymbol{\beta}}_{\nu'}}}$ $\nu,{\nu'}=1,2,3$, $\hat{\boldsymbol{\delta}}^{\top}=(\hat{\boldsymbol{\beta}_1}^{\top},\hat{\boldsymbol{\beta}_2}^{\top}, \hat{\boldsymbol{\beta}_3}^{\top})$ 
is the estimated parameter vector, $\boldsymbol{\delta}^0$ is the vector of parameters that minimizes the Kullback-Leibler distance between the theoretical and empirical likelihood.  Given the effective degrees of freedom (whose meaning will be clarified in Remarks \ref{rk:edfinCopulaModel} and \ref{rk:edfinBRBVS}), we have 
\begin{align*}
\mathbb{E}[|| \boldsymbol{M}- \sqrt{-\mathbf{H}} \hat{\boldsymbol{\delta}}||^2]=&
\mathbb{E}[|| \sqrt{-\mathbf{H}} \boldsymbol \delta^0+\sqrt{-\mathbf{H}}^{-1}\mathbf{g}- \sqrt{-\mathbf{H}} \hat{\boldsymbol{\delta}}||^2]\\
=&\mathbb{E}[
(\sqrt{-\mathbf{H}}\boldsymbol \delta^0+\sqrt{-\mathbf{H}}^{-1}\mathbf{g})^{\top} (\sqrt{-\mathbf{H}}\boldsymbol \delta^0+\sqrt{-\mathbf{H}}^{-1}\mathbf{g})+\\
&-(\sqrt{-\mathbf{H}}\boldsymbol \delta^0+\sqrt{-\mathbf{H}}^{-1}\mathbf{g})\sqrt{-\mathbf{H}} \hat{\boldsymbol{\delta}}+
\\
&-\hat{\boldsymbol{\delta}}^{\top}\sqrt{-\mathbf{H}} (\sqrt{-\mathbf{H}}\boldsymbol \delta^0+\sqrt{-\mathbf{H}}^{-1}\mathbf{g})
+\hat{\boldsymbol{\delta}}^{\top}(-\mathbf{H}) \hat{\boldsymbol{\delta}}]\\
=&\mathbb{E}[
\boldsymbol{\delta}^{0\top} (-\mathbf{H})\boldsymbol{\delta}^{0} + \boldsymbol{\delta}^{0\top} \mathbf{g}+  \mathbf{g}^{\top} \boldsymbol{\delta}^{0}+ \mathbf{g}^{\top} \mathbf{H}^{-1}\mathbf{g}+\\
&- (\boldsymbol{\delta}^{0\top} (-\mathbf{H}) \hat{\boldsymbol{\delta}}+ \mathbf{g}^{\top} \hat{\boldsymbol{\delta}} ) -( \hat{\boldsymbol{\delta}^{\top}} (-\mathbf{H}) \boldsymbol{\delta}^0+ \boldsymbol{\delta}^{\top} \boldsymbol{g})+ \hat{\boldsymbol{\delta}}^{\top} (-\mathbf{H}) \hat{\boldsymbol{\delta}}]\\
=&||\sqrt{-\mathbf{H}}\boldsymbol \delta^0-\sqrt{-\mathbf{H}} \hat{\boldsymbol{\delta}}||^2.
\end{align*}
The last term is a $\ell_2$ norm between two $W$ dimensional points adjusted for the information contained in a given sample, unless in some degenerative cases, the quantity $(\boldsymbol{\delta}^0-\hat{\boldsymbol{\delta}})$ is unknown, hence impossible to evaluate. 
However, using the plug-in method, we can evaluate $||\sqrt{-\mathbf{H}} \hat{\boldsymbol{\delta}}||^2$ with respect to $-\mathbb{E}(\mathbf{H}) = \boldsymbol{\mathcal{I(\delta)}}$ (with $\boldsymbol{\mathcal{I(\delta)}}$ the Fisher information matrix).
Since the accuracy of the estimates is measured by the sharpness of the likelihood function, where its curvature is captured by the main diagonal of $\mathbf{H}$, the elements on the main diagonal of the Hessian can be seen as directional derivatives with respect to the vectors of the canonical basis.

Let $\mathbf{e}^{(\nu)}_{j}\in \mathbb{R}^{W_\nu}$ for $\nu=1,2$, be a column vectors with zero everywhere except in the $j^{\text{th}}$ position and $\boldsymbol{\mathbf{\beta}}_\nu$ be the coefficient vector associated to the $\nu^{\text{th}}$ margin, such that  $\boldsymbol{\delta}^{\top}=(\boldsymbol{\mathbf{\beta}}_1,\boldsymbol{\mathbf{\beta}}_2,\boldsymbol{\mathbf{\beta}}_3)^{\top}\in \mathbb{R}^W$ where  $W=W_1+W_2+W_3$. 
Recalling that \(\boldsymbol{\mathcal{I}}(\boldsymbol{\delta})\) is the Fisher information matrix, we can denote its diagonal elements as  \(\bigoplus_{\nu=1}^{3} \boldsymbol{\mathcal{I}}_{\nu\nu}(\boldsymbol{\beta}_\nu) = \text{diag}\{\boldsymbol{\mathcal{I}}_{11}(\boldsymbol{\beta}_1), \boldsymbol{\mathcal{I}}_{22}(\boldsymbol{\beta}_2), \boldsymbol{\mathcal{I}}_{33}(\boldsymbol{\beta}_3)\}\) where \(\bigoplus\) represents the direct sum operation. After estimating the vector of parameters, and noticing that \(\mathbb{E}[\| \boldsymbol{M} - \sqrt{-\mathbf{H}} \hat{\boldsymbol{\delta}}\|^2]\) is a increasing function of \(\hat{\boldsymbol{\delta}}^\top \mathcal{I}(\hat{\boldsymbol{\delta}}) \hat{\boldsymbol{\delta}}\), the following measure is proposed:
\begin{align}
\label{eq:FisherMeasure}
\omega_{\nu j}\overset{\Delta}{=}&[\mathbf{e}^{(\nu)}_j\odot\boldsymbol{\beta}_\nu]^{\top}\boldsymbol{\mathcal{I}}_{\nu \nu} (\boldsymbol{\beta}_\nu)[\mathbf{e}^{(\nu)}_j\odot\boldsymbol{\beta}_\nu]\\
=&{\beta}^2_{\nu j} \cdot  \mathbb{E}\bigg[ \frac{\partial^2  \ell(\boldsymbol \delta) }{\partial \beta_{\nu j} \partial \beta_{\nu j} } \bigg]\notag,\,\,\nu=1,2,
\end{align}
where $\odot$ is the elements wise product and $[\mathbf{e}^{(\nu)}_j\odot\boldsymbol{\beta}_\nu]^{\top}\mathbf{1}_{W_\nu}=  \beta_{\nu j }$ is the parametric effect extracted from the coefficients vector  (containing smooth and parametric effects) associated with the $\nu^\text{th}$ margin, $ \mathbb{E}\bigg[ \frac{\partial^2  \ell(\boldsymbol \delta)}{\partial \beta_{\nu j} \partial \beta_{\nu j} } \bigg]$ is the element of the Fisher information matrix associated to the $j^{\text{th}}$ parametric effect.  

\begin{remark}
 Note that the Fisher information measure ($\omega_{\nu j}$) is always non-negative, as it is based on the model framework discussed in Section \ref{Ch:modelformulation}, which in turn is derived from an optimization procedure based on the trust-region algorithm. Consequently, the Fisher information measure, which depends on the diagonal elements of $\mathcal{I}(\boldsymbol{\delta})$, cannot take negative values. $\qed$ 
\end{remark}

The new measure presented in \eqref{eq:FisherMeasure} can isolate the contribution of the $j^{\text{th}}$ covariate to the $\nu^{\text{th}}$ survival function by accounting for the sharpness of the likelihood in the direction of the parametric effect. A high value of $ \mathbb{E}\bigg[ \frac{\partial^2  \ell(\boldsymbol \delta)}{\partial \beta_{\nu j} \partial \beta_{\nu j} } \bigg]$ denotes a sharp curvature of the likelihood, therefore greater accuracy in the estimate and vice versa \citep{amari2012differential}. In other words, we are not only attributing a high score to parameters different from zero but also taking into account the information in the bootstrap sample. An extensive discussion about the theoretical properties of \eqref{eq:FisherMeasure} and its consistency are presented in  Supplementary Material \ref{ch:teoretProp}. Once the measure is defined, a further challenge is the estimation of $\omega_{\nu j}$.

\begin{remark} [Estimation of $\omega_{\nu j}$ under the BRBVS framework]
\label{rk:estimationsetup}
    The computation of $\omega_{\nu j}$ for the class of Copula Link Based Survival Additive model(s) has two main complexities: i) avoiding the introduction of spurious correlation in the selection algorithm; ii) maintaining the computational burden as low as possible.

To address these challenges, under the BRBVS algorithm the following setup is proposed:
\begin{align*}
\eta_{1i}&=\beta_{1 0}+\sum_{k_{1}=1}^{K_{1}} s_{1 k_{1}}\left({t}_{1 k_{1} i}\right)+ \beta_{1j} {x}_{ij}\\
\eta_{2i}&=\beta_{2 0}+\sum_{k_{2}=1}^{K_{2}} s_{1 k_{2}}\left({t}_{2 k_{2} i}\right)+ \beta_{2j}{x}_{ij}\\
\eta_{3i}&=\beta_{3 0}\quad  i=1, \ldots, n; j=1, \dots, p,
\end{align*}
where we notice the baseline function of times and that $\boldsymbol{x}_j$ enters linearly in both $\eta_1$ and $\eta_2$.

This setup does not introduce spurious correlation in the selection algorithm since a) the same covariate is simultaneously evaluated in $\eta_1$ and $\eta_2$; b) we are not forcing a relationship between $\eta_1$ and $\eta_2$ through the introduction of covariates (that might have been already included in the two margins) in the specification of $\eta_3$. Furthermore, the computational complexity associated with this setup is $O(p)$. $\qed$
\end{remark}

\begin{remark} 
    Despite the estimation set-up discussed in Remark \ref{rk:estimationsetup}, the model framework is additive due to the monotonic splines used to model the baseline survivals. $\qed$
\end{remark} 

\begin{remark} [Effective degrees of freedom]
\label{rk:edfinCopulaModel}
    As pointed out by \cite{Marra2020}, the effective degrees of freedom (\(edf\)) for a model described by equation \eqref{eq:jointSurvival}, which contains only unpenalized terms, are equal to \(\xi\), the dimension of \(\boldsymbol{\delta}\). This   because  \(\mathrm{tr}(\sqrt{-\mathbf{H}}(-\mathbf{H} + \mathbf{S})^{-1}\sqrt{-\mathbf{H}}) = \mathrm{tr}(\mathbf{I})\), where \(\mathbf{H}\) is the Hessian matrix, \(\mathbf{S}\) is defined in equation \eqref{eq:penLik}, and \(\mathbf{I}\) represents the identity matrix. For a penalized model, the \(edf\) are given by \(\xi - \mathrm{tr}((- \mathbf{H} + \mathbf{S})^{-1}\mathbf{S})\). This latter quantity helps clarify the role of the smoothing parameter vector \(\boldsymbol{\lambda}\) (contained within \(\mathbf{S}\)). If \(\boldsymbol{\lambda} \rightarrow \mathbf{0}\), then \(\mathrm{tr}(\mathbf{A}) \rightarrow \xi\); conversely, if \(\boldsymbol{\lambda} \rightarrow \mathbf{\infty}\), then \(\mathrm{tr}(\mathbf{A}) \rightarrow \xi - \zeta\), where \(\zeta\) denotes the total number of the model's parameters subject to penalization. When \(\boldsymbol{\lambda}\) lies in the interval \((\mathbf{0}, \mathbf{\infty})\), where \(\mathbf{0}\) is the null vector, the model's \(edf\) equals a value within the range \([\xi - \zeta, \xi]\). $\qed$

\end{remark}

\begin{remark} [Role of effective degrees of freedom in the BRBVS algorithm]
\label{rk:edfinBRBVS}
Based on Remarks 1 and 2, the effective degrees of freedom $(edf)$ can be ignored as the estimated parameters remain constant across all bootstrap replicates. Furthermore, in a comprehensive examination of the information matrix, it is observed that the off-diagonal elements capture the cross-correlation among the parameters. Nevertheless, in our estimation framework, these components do not yield significant insights. Specifically, when scrutinizing each bootstrap sample, one assesses the contribution of the $j^{\text{th}}$ covariate across both survival margins. Consequently, any evaluation leveraging the off-diagonal components risks introducing spurious correlation, which remains inconsequential to the overarching objective of variable ranking. $\qed$
\end{remark}

\section{Simulation study}\label{Ch:Main_simulation}
To give evidence of the performance of the variable selection procedure and of the advantages that can arise from the use of the new measure introduced in Section \ref{eq:FisherMeasure}, we implement a simulation study where the proposed BRBVS algorithm is evaluated under different scenarios.

Let $\mathbf{X}\in \mathbb{R}^{n \times p}$ be the design matrix partitioned as $\mathbf{X}=(\mathbf{X}_{11} : \mathbf{X}_{12})$, where $\mathbf{X}_{11}$ contains the \emph{informative covariates}, 
$\mathbf{X}_{11}=(\boldsymbol{x}_1, \boldsymbol{x}_2, \boldsymbol{x}_3) \in \mathbb{R}^{n \times 3}$, with $\boldsymbol{x}_1, \boldsymbol{x}_2, \boldsymbol{x}_3$ three $n$-dimensional vectors, while the remaining $p-3$ variables included in $\mathbf{X}_{12}$ are  \emph{non-informative}.
In more detail, $\mathbf{X}_{11}$ and $\mathbf{X}_{12}$  are generated from a multivariate Gaussian distribution such that 
$\mathbf{X}_{11} \sim \mathcal{N}_3(\mathbf{0}, \boldsymbol{\Sigma}_{\mathbf{X}{11}})$, with $\mathbf{0}$ a null vector of means, and $\boldsymbol{\Sigma}_{\mathbf{X}_{11}}$ is the covariance matrix with diagonal elements fixed to one and off-diagonal elements equal to $0.5$;  $\mathbf{X}_{12} \sim \mathcal{N}_{p-3}(\mathbf{0}, \boldsymbol{\Sigma}_{\mathbf{X}_{12}})$  with $\boldsymbol{\Sigma}_{\mathbf{X}_{12}}$ an identity matrix.

 Let $n = \{800,1000\}$ (the reasons behind this choice are clarified in the following) and $p = \{100,200\}$ be the two sample sizes and the number of covariates in $\mathbf{X}$, respectively. 
Let $T_{\nu i}$ be the two time-to-events, with  $\nu=1,2$, where $T_{1i}$ is generated from a Proportional Hazards model, and $T_{2i}$ is derived from a Proportional Odds model, as shown in Table \ref{tab:SimMargins}.

\begin{sidewaystable}
\centering
\begin{tabular}{c|ll}

\hline
\textbf{Parameter} & \textbf{Model for $T_{1i}$} & \textbf{Model for $T_{2i}$} \\
\hline
Model Type & Proportional Hazards (\texttt{PH}) & Proportional Odds (\texttt{PO}) \\
\hline

Formula & $T_{1i}=\log[-\log{S_{10}(t_{1i})}]+\beta_{11}x_{1i}+\beta_{12}x_{2i}$ & $T_{2i}=\log\bigg[\frac{(1-S_{20}(t_{2i}))}{S_{20}(t_{2i})}\bigg]+\beta_{21}x_{1i}+\beta_{22}x_{3i}$ \\
$S(t)$ & $S_{10}(t_{1i})=0.9e^{(-0.4t_{1i}^{2.5})}+ 0.1 e^{(-0.1 t_{1i})}$ & $S_{20}(t_{2i})=0.9e^{(-0.4t_{1i}^{2.5})} + 0.1 e^{(-0.1 t_{1i})}$ \\
$\beta_{11}$ &$ -1.5 $& - \\
$\beta_{12}$ & $1.7 $& - \\
$\beta_{21}$ & - & $-1.5$ \\
$\beta_{22}$ & - & $-1.3$ \\
\hline
\end{tabular}
\caption{Margins generation process with $S(\cdot)$ the survival function, $\log(\cdot)$ the natural logarithm, and $e(\cdot)$ the corresponding base.  }
\label{tab:SimMargins}

\end{sidewaystable}

The times are generated with Brent's univariate root-finding method. This is accomplished by the \texttt{stats::unitroot} function in \texttt{R}, specifying a range between $0$ and $8$, and allowing for potential extensions.
The first step is to generate the times $T_1$. Subsequently, after observing the first time, the observations for $T_2$ are simulated. This implies that the covariate $\boldsymbol{x}_2$, although not appearing in the formula on the right in Table \ref{tab:SimMargins}, is certainly a relevant feature for $T_2$. From a causal inference perspective, the set of covariates involved in the data-generating process of $T_2$ is $\{\boldsymbol{x}_1, \boldsymbol{x}_2, \boldsymbol{x}_3\}$. Subsequently, the two survival times are combined through the Clayton copula. We consider two different scenarios of dependence between the two survival margins:
\begin{itemize}
    \item \textbf{Scenario A}: $\eta_{3i}=\beta_{30}$;
    \item \textbf{Scenario B}: $\eta_{3i}=\beta_{31}x_{1i}+\beta_{32}x_{2i}+\beta_{33}x_{3i}$,
\end{itemize}
with $\boldsymbol{\beta}_3=(\beta_{30}, \beta_{31}, \beta_{32}, \beta_{33})=(1.2, -1.5, 1.7, -1.5)$. The values of $\eta_{3i}$ are specified to range the Kendall tau value between $0.10$ and $0.90$. It is important to note that $\eta_{3i}$ represents the potential variability in the dependence between $(T_{1i}, T_{2i})$ across different observations.
Further, in Scenario A, we are considering weak dependence between $(T_{1i}, T_{2i})$, while in Scenario B, the data are generated by considering a dependence between $(T_{1i}, T_{2i})$.

Moreover, the data points experience right-censoring, resulting in $11\%$ and $32\%$ missing information for the first and second times, respectively. Specifically, random censoring times are derived from the lower and upper bounds of two independent uniform random variables. These bounds are then compared with the simulated times to assign censoring.

Given the computational complexity of the BRBVS algorithm and the time needed to obtain the estimates for model  \eqref{eq:jointSurvival}  from the \texttt{gjrm()} function in the
 \texttt{GJRM} package in \texttt{R} \citep{Man:GJRM}, the simulation study is based on $n_{\text{rep}}=100$ Monte Carlo replicates and for each run we consider $B=50$ bootstrap replicates. 
The threshold parameter $\tau$ in \ref{eq:hat_s_v} is set at $\tau=0.5$, whereas the maximum number of important covariates is $k_{\text{max}}=6$. This set of parameters $(B, \tau, k_{\text{max}})$ is determined following \cite{Baranowski2020} and more details are available in Supplementary Material \ref{Ch:Simulation}.

In the model class discussed in Section \ref{Ch:modelformulation}, the sample size significantly influences the precision of the estimates, as highlighted in Supplementary Material G in \cite{PETTI22}. To achieve this precision, we select the two sample sizes  $n=\{800, 1000\}$, to allow bootstrap samples with an appropriate number of units. For this aim, we fix $r=2$ (as in \cite{Baranowski2020}) and consequently  the bootstrap subsets contain $m=\{400, 500\}$ units, respectively.

The measures used to rank the variables are two:\\
(a) $\hat{\omega}_{\nu j}$ based on the Fisher information matrix, proposed in Section \ref{SubCh:FisherMetric};\\
(b) the absolute value of the estimated coefficients $|\hat{\beta}_{\nu_j}|$, denoted by $\hat\phi_{\nu j}$, for $(\nu=1,2; j=1,\dots, p)$. 

This last measure is largely used in the regression context to screen variables even in the presence of datasets of large dimension (see among the others \citep{FanSong2010}) but, differently from the proposed measure (a), it considers only the punctual value of the estimate both discarding other information that can be taken from the likelihood and neglecting the intrinsic dependence of the copula function. 

As said before, the coefficients are estimated using the \texttt{gjrm()} function from the \texttt{GJRM} package   \citep{Man:GJRM} and then the two measures (and consequently the variable ranking) are based on its output.

In more detail, for each Monte Carlo replicate we perform the following steps:
\begin{enumerate} 
\setlength\itemsep{-0.5em}
\item Generate the design matrix $\boldsymbol X$, as clarified before;
\item Generate the margins following the scheme in Table \ref{tab:SimMargins};
\item Generate the matrix $\boldsymbol Z$, with rows $\mathbf z_i=(t_{i1}, t_{i2}, x_{i1}, \ldots, x_{ip})$;
\item For each bootstrap replicate:
\begin{enumerate} 
\setlength\itemsep{-0.5em}
\item generate $r$ subsets from $\boldsymbol Z$;
\item for each subset estimate the coefficients of model \eqref{eq:jointSurvival}  using the \texttt{GJRM:gjrm()} function and taking the information matrix;
\item compute the two measures,  $\hat{\omega}_{\nu j}$ and $\hat\phi_{\nu j}$, for $j=1,2,\ldots p$ and $\nu=1,2$, and rank the covariates according to the associated measures;
\end{enumerate}
\item  use the ranking obtained from the $B$ bootstrap replicates to estimate the probabilities \eqref{eq:hat_pi_v} and define the top ranked covariates for each $\eta_\nu$, $\nu=1,2$;
\item select the set of important variables $\hat{s}_\nu$, for $\nu=1,2$ using \eqref{eq:hat_s_v}.
\end{enumerate}

Finally, note that from steps 4.3 to 6, the ranking, computation of the probabilities, and selection of the important variables are replicated for both measures, whose performance is evaluated and compared. 
 
The performance of the BRBVS  obtained through the \texttt{BRBVS()} function in \texttt{R} \citep{Petti2024Package} is assessed by computing different metrics based on the evaluation of False Positive (FP) and False Negative (FN), where FP is the number of covariates incorrectly chosen as relevant by the variable selection procedure, and  FN is the number of covariates incorrectly chosen as irrelevant by the variable selection procedure. Then, we compute the average of FP and of FN across all $n_{\text{rep}}$ replicates. Specifically, given the average of FP  is equal to $FP_{\nu}=n_{\text{rep}}^{-1}\sum_{h=1}^{n_{\text{rep}}} FP_{\nu}^{(h)}$ and the average of FN is $FN_{\nu}=n_{\text{rep}}^{-1}\sum_{h=1}^{n_{\text{rep}}} FN_{\nu}^{(h)}$, $FP_{\nu}^{(h)}$ and $FN_{\nu}^{(h)}$ denote the number of false positive and false negative, respectively, for the $\nu^{\text{th}}$ survival outcome in the $h^{\text{th}}$ replicate, for $\nu=1,2$.

Additionally, the average number of variables selected in the estimated relevant set for the $\nu^{\text{th}}$ survival is represented by $\langle\hat{{s}}_\nu\rangle=n_{\text{rep}}^{-1} \sum_{h=1}^{n_{\text{rep}}} |\hat{s}^{(h)}_{\nu}|$, where $|\hat{s}^{(h)}_{\nu}|$ is the estimated size of relevant variables in the $h^{\text{th}}$ replicate.

Furthermore, denote with $s_\nu$, for $\nu=1,2$, the true sets of relevant variables for the first and second survival outcomes, where in our study $s_1=\{1,2\}$ and $s_{2}=\{1, 2, 3\}$. The average number of correctly identified relevant variables across replicates is given by $\langle\hat{s}_{\nu}\cap s_\nu\rangle=n_{\text{rep}}^{-1} \sum_{h=1}^{n_{\text{rep}}}|\hat{s}^{(h)}_{\nu}\cap s_\nu|$, where $|\hat{s}^{(h)}_{\nu}\cap s_\nu|$ counts the relevant variables correctly selected in each replicate.

In {\em Scenario A}, where $\eta_{3i}$ is only a constant value $\beta_{30}$, and the dependence between $T_{1i}$ and $T_{2i}$ is mainly due to $x_{1i}$ and $x_{2i}$, $i=1,2,\ldots, n$, in Table \ref{tab:performance} it can be noted that the two measures mainly differ for the number of the  selected variables in the first margin: the average number of selected variables 
($\langle\hat{s}_1\rangle$) is always lower for the $\hat\omega$ metric with respect to $\hat\phi$ and consequently the $FP_1$ is also lower for all $n$ and $p$.\\
More interesting results in Table \ref{tab:performance} are for {\em Scenario B}, where $\eta_{3i}$ includes all variables used to model $T_{1i}$ and $T_{2i}$. Despite the higher complexity of the model, the mean value of correctly selected relevant covariates, $\langle\hat{s}_{\nu}\cap s_\nu\rangle$ for $\nu=1,2$, is quite satisfactory with both metrics and
the $\hat\omega$ measure performs always better than the $\hat\phi$ metric in terms of $FP_1$. Furthermore, the $\hat\phi$ measure shows an unsatisfactory randomness when $n=800$ and $p$ goes from 100 to 200, giving evidence of its incorrectness with the model under analysis.

The results of Table \ref{tab:performance} are also illustrated in Figure \ref{fig:PlotScenarioA} and Figure \ref{fig:PlotScenarioB} in the Supplementary Material where the mean values of the FP and FN are shown for both scenarios and for different values of $n$ and $p$.

More results for both scenarios are presented and discussed in the Supplementary Material \ref{Ch:Simulation}. They give further evidence of the advantages that may arise from the BRBVS algorithm combined with the proposed measure.

\begin{table}[h]
\centering
\begin{tabular}{l>{\columncolor{darkgray}}c>{\columncolor{lightgray}}c>{\columncolor{darkgray}}c>{\columncolor{lightgray}}c>{\columncolor{darkgray}}c>{\columncolor{lightgray}} c>{\columncolor{darkgray}} c>{\columncolor{lightgray}}c}
  \hline
  \multicolumn{9}{c}{$\text{\textbf{Scenario A}}: \eta_{3i}=\beta_{30}$ } \\  
  \hline 
  &\multicolumn{4}{c}{$n=800$} & \multicolumn{4}{c}{$n=1000$}  \\ 
  & \multicolumn{2}{c}{$p=100$}&  \multicolumn{2}{c}{$p=200$}&  \multicolumn{2}{c}{$p=100$}  &  \multicolumn{2}{c}{$p=200$} \\ 
  \hline
& $\hat\omega$ & $\hat\phi$ &$\hat\omega$ & $\hat\phi$ & $\hat\omega$ & $\hat\phi$ & $\hat\omega$ &$\hat\phi$\\ 
 \hline
$\langle\hat{s}_1\rangle$ & 2.1800 & 2.2400 & 2.4000 & 2.4800 & 2.0800 & 2.3400 & 2.1800 & 2.3200 \\  
  $\langle\hat{s}_2\rangle$ & 3.0000 & 2.9800 & 2.9600 & 2.9200 & 3.0000 & 3.0000 & 3.0000 & 2.9400 \\
   $\langle\hat{s}_{1}\cap s_1 \rangle$  & 1.9800 & 2.0000 & 2.0000 & 2.0000 & 2.0000 & 2.0000 & 2.0000 & 2.0000 \\ 
    $\langle\hat{s}_{2}\cap s_2 \rangle$  & 3.0000 & 2.9800 & 2.9600 & 2.9200 & 3.0000 & 3.0000 & 2.9600 & 2.9400 \\ 
$FP_{1}$ & 0.0333 & 0.0400 & 0.0667 & 0.0800 & 0.0133 & 0.0567 & 0.0300 & 0.0533 \\ 
  $FP_{2}$& 0.0000 & 0.0000 & 0.0000 & 0.0000 & 0.0000 & 0.0000 & 0.0067 & 0.0000 \\ 
  $FN_{1}$& 0.0050 & 0.0000 & 0.0000 & 0.0000 & 0.0000 & 0.0000 & 0.0000 & 0.0000 \\ 
 $FN_{2}$&0.0000 & 0.0050 & 0.0100 & 0.0200 & 0.0000 & 0.0000 & 0.0100 & 0.0150 \\

\hline

\end{tabular}
\caption{Summary of the simulation results for the two scenarios, A and B. $FP_{\nu}$: average of False Positives;  $FN_{\nu}$: average number of False Negatives (for $\nu=1,2$ and across all simulation replicates, $n_{\text{rep}}$). 
$\langle\hat{s_\nu}\rangle$: average number of variables selected for each survival outcome.
$\langle\hat{s}_{\nu}\cap s_\nu\rangle$ average count of correctly identified relevant variables. 
 The columns  $\hat\omega$ denote the cases where the BRBVS measure is based on the Fisher Information described in Section \ref{SubCh:FisherMetric}
 whereas the columns $\hat\phi$ are the cases where the BRBVS measure is the absolute value of the estimated coefficients, $\phi_{\nu j}=|\beta_{\nu_j}|$  $(\nu=1,2; j=1,\dots, p)$.}
 \label{tab:performance}
\end{table}

\begin{table}[h]
\ContinuedFloat 
\centering
\begin{tabular}{l>{\columncolor{darkgray}}c>{\columncolor{lightgray}}c>{\columncolor{darkgray}}c>{\columncolor{lightgray}}c>{\columncolor{darkgray}}c>{\columncolor{lightgray}} c>{\columncolor{darkgray}} c>{\columncolor{lightgray}}c}
  \hline
 \multicolumn{9}{c}{$\text{\textbf{Scenario B}}: \eta_{3i}=\beta_{31}x_{1i}+\beta_{32}x_{2i}+\beta_{33}x_{3i} $} \\  
  \hline 
  &\multicolumn{4}{c}{$n=800$} & \multicolumn{4}{c}{$n=1000$}  \\ 
  & \multicolumn{2}{c}{$p=100$}&  \multicolumn{2}{c}{$p=200$}&  \multicolumn{2}{c}{$p=100$}  &  \multicolumn{2}{c}{$p=200$} \\ 
  \hline
  & $\hat\omega$ & $\hat\phi$ &$\hat\omega$ & $\hat\phi$ & $\hat\omega$ & $\hat\phi$ & $\hat\omega$ &$\hat\phi$\\ 
 \hline

 $\langle\hat{s}_1\rangle$  &2.0600 & 2.2600 & 2.0600 & 2.1600 & 2.0200 & 2.2200 & 2.1200 & 2.4000 \\ 
  $\langle\hat{s}_2\rangle$&  2.9800 & 3.0200 & 3.0600 & 3.0000 & 3.0200 & 2.9800 & 2.9400 & 2.9600 \\
   $\langle\hat{s}_{1}\cap s_1 \rangle$ & 2.0000 & 2.0000 & 2.0000 & 2.0000 & 2.0000 & 2.0000 & 2.0000 & 2.0000 \\ 
   $\langle\hat{s}_{2}\cap s_2 \rangle$ & 3.0000 & 2.9800 & 2.9600 & 2.9800 & 2.9200 & 2.9400 & 3.0000 & 2.9800 \\ 
 
  $FP_{1}$ & 0.0100 & 0.0433 & 0.0100 & 0.0267 & 0.0033 & 0.0367 & 0.0200 & 0.0667 \\ 
  $FP_{2}$ & 0.0033 & 0.0067 & 0.0100 & 0.0033 & 0.0033 & 0.0000 & 0.0033 & 0.0033 \\ 
  $FN_{1}$& 0.0000 & 0.0000 & 0.0000 & 0.0000 & 0.0000 & 0.0000 & 0.0000 & 0.0000 \\  
 $FN_{2}$& 0.0100 & 0.0050 & 0.0000 & 0.0050 & 0.0000 & 0.0050 & 0.0200 & 0.0150 \\ 

   \hline
\end{tabular}
\centering
\caption*{Table \thetable: Cont.} 

\end{table}


\clearpage
\section{Empirical results of variable selection on AREDS dataset } \label{Ch:Application}

In this section, we evaluate the BRBVS algorithm performance by analyzing a subsample of 314 patients from the AREDS dataset, which is openly accessible through the \texttt{BRBVS} package. This dataset has been recently analyzed in \cite{PETTI22, sun2021copula}. Some details on the dataset and the \texttt{R} code used in this section can be found in the Supplementary Material \ref{ch:Rsoftware}.

The AREDS dataset 
has been built to investigate on the clinical progression of macular degeneration (AMD) and cataract of a sample of patients. It
includes the start and end times of observation for both eyes, denoted by \( \texttt{enroll\_day} \) (start of follow-up in days) and \( \texttt{AMD\_recurrence} \) (time to recurrence or last follow-up in days), along with the associated censoring variables. The covariates in the analyzed subsample are the Severity Score (a variable ranging from 4 to 8, where a higher value indicates more severe Age-related Macular Degeneration (AMD)), with \texttt{SevScale1E} representing the score for the right eye and \texttt{SevScale2E} for the left eye; \texttt{ENROLLAGE}, denotes the age at enrollment (a numerical variable); and the genetic variant \texttt{rs2284665}, a categorical variable with levels defined as $0$ (\texttt{GG}), $1$ (\texttt{GT}), and $2$ (\texttt{TT}).

To assess the effectiveness of the Bivariate Ranking-Based Variable Selection (BRBVS) algorithm to select two sets of relevant covariates, the AREDS dataset is perturbed by including \(100\) independent instances of a standard Gaussian distribution.
The parameters of the algorithm are then specified as \(k_{\text{max}}=10\), \(m=328\), \(\tau=0.5\), bootstrap samples \(B=100\), Plackett copula (\texttt{PL}) and Proportional odds (\texttt{PO, PO}). Details on the choice of the copula margins configuration for the AREDS data are in Supplementary Material E in \cite{PETTI22}.

\begin{remark}
The copula function $C(\cdot)$ and the link functions $g_\nu(\cdot)$ can be selected by AIC and BIC in a standard fashion, as with any other canonical model (e.g. linear regression model, generalized linear model). In the BRBVS package, we implemented the function \texttt{Select\_link\_BivCop()} to select the best link functions. $\qed$
\end{remark}

The BRBVS is then applied under two different measures:
\begin{itemize}
\item[(a)] the measure based on the Fisher Information matrix \( \hat{\omega} \), presented in Section \ref{SubCh:FisherMetric};
\item[(b)] the absolute value of the estimated coefficients $|\hat{\beta}_{\nu j}|$, denoted with $\hat{\phi}_{\nu j}$ for $j=1,\ldots, p$ and $\nu=1,2$.
\end{itemize}
With the Fisher information measure \( \hat{\omega} \), the selected sets of variables are $ \hat{s}_1=\{\texttt{SevScale1E} $ $ \texttt{SevScale2E}, \texttt{ENROLLAGE}\} $ and \( \hat{s}_2=\{\texttt{SevScale1E}, \texttt{SevScale2E}\} \) as shown in Table \ref{tab:BRBVSrho0}, while with the absolute values of the coefficients \( \hat{\phi} \), the same sets of relevant covariates are chosen for \( \eta_1 \) and \( \eta_2 \), namely \(\{ \texttt{SevScale1E}, \texttt{SevScale2E} \}\).

The main difference is that the measure proposed in Section \ref{SubCh:FisherMetric} selects \( \texttt{ENROLLAGE} \), which is ignored when the absolute value of the coefficients is adopted. Furthermore,  \( \hat{\phi} \) selects consistently two noise variables $\mathbf{x}_{41}$ and  $\mathbf{x}_{61}$ in the $57\%$ and $78\%$ of the bootstrap replicates, respectively.

\begin{table}[t]
\footnotesize
   \centering
    \begin{tabular}{lcc|ccc}
     \multicolumn{5}{c}{Copula \texttt{PL}, margins \texttt{PO,PO}} \\
  
    \hline
     \backslashbox{Features}{Measure} & \makebox[3em]{\( \hat{\omega}_1 \)} & \makebox[3em]{\( \hat{\omega}_2 \)} & \makebox[3em]{\( \hat{\phi}_1 \)} & \makebox[3em]{\( \hat{\phi}_2 \)} \\
    \hline
      \texttt{SevScale1E} & $51\%(1^{\text{st}})$ & $46\% (1^{\text{st}})$ & $94\% (2^{\text{nd}})$ & $72\% (2^{\text{nd}})$ \\
           \texttt{SevScale2E} & $91\% (2^{\text{nd}})$  &$91\% (2^{\text{nd}})$   & $50\% (1^{\text{st}})$   & $39\% (1^{\text{st}})$  \\
       \texttt{ENROLLAGE}  &$22\% (3^{\text{rd}})$   & -  &-   & -  \\
       \texttt{GG}         &   &   &   &   \\
       \texttt{GT}         &   &   &   &   \\
       \texttt{TT}         &   &   &   &   \\
      \texttt{$\mathbf{x}_{41}$} & -  & -  & -  & $57\% (3^{\text{rd}})$  \\
       \texttt{$\mathbf{x}_{61}$} &  - & -  & -  &  $78\% (4^{\text{th}})$ \\
    \hline
    \end{tabular}
    \caption{BRBVS results using copula (\texttt{PL}), Proportional Odds (\texttt{P0}), and Proportional Odds (\texttt{P0})  for the first and second survival respectively, with parameters set as follows: \( k_{\text{max}}=10 \), \( m=328 \), \( \tau=0.5 \), \( B=100 \). The Fisher information measure \eqref{eq:FisherMeasure} is denoted as \( \hat{\omega}_{ \nu}\), the measure based on the absolute value of the estimates as \( \hat{\phi}_{ \nu}\), $\nu=1,2$. The variable ranking is reported in parentheses, whereas the percentages refer to the empirical frequency of the variable selection. The covariates $\mathbf{x}_{41}$ and $\mathbf{x}_{61}$ are the nuisance variables, as discussed in this section.}
\label{tab:BRBVSrho0}
\end{table}

Once the relevant covariate sets have been identified, we can estimate the model by the \texttt{gjrm()} function. In detail, for the marginal equation $\eta_1$, the smooth functions of \texttt{ENROLLAGE} and the time variables are represented using penalized thin plane regression splines with second order penalty \citep{Wood2017} and the monotonic penalized B-splines, respectively.   Regarding \( \eta_2 \), the time variable is modeled similarly to \( \eta_1 \). The remaining variables, \texttt{SevScale1E} and \texttt{SevScale2E}, enters linearly in \( \eta_1 \), \( \eta_2 \), and \( \eta_3 \). The severity score can not accommodate a spline due to its discrete nature. Given this setting and the technical specifications of the PC (a 2.3-GHz Intel(R) Core(TM) computer running macOS version 13.6.1), the average computing time to fit a model is about 16 seconds, and the length of the model parameter vector is 54.

The estimation results in Table \ref{tab:estimation} show that the estimated values associated with the severity scores grow in conjunction with the increase in severity score. More precisely, for the first margin, the parameters tied to the severity scores of the first eye, listed in ascending order of severity, are 0.69, 0.80, 1.72, and 2.59. In contrast, the corresponding parameters for the second eye are 0.40, 0.87, 1.02, and 1.48. Regarding the second margin, the severity parameters for the first eye are 0.27, 0.34, 0.85, and 0.99, whereas for the second eye, they are 0.87, 1.23, 2.25, and 3.43.

This result confirms that, as expected, a high severity score is associated with an increased risk in the disease progression and vice versa. Furthermore, the spline on age in the first margin is also significant, confirming the non-linearity of age in the progression of the disease. 

Figure \ref{fig:splinesAREDS} displays the splines for \texttt{ENROLLAGE} and for the first eye (middle plot), revealing that disease progression tends to accelerate after the age of 65. The sample predominantly consists of individuals aged between 61 and 77, with very few observations falling outside this range. The estimated smooth functions of time (external plots) exhibit increasing monotonic trends, consolidating the idea that the risk increases with time.

 \begin{table}
\begin{center}
{\footnotesize
\scalebox{0.9}{
\begin{tabular}{l|cccccc}
\multicolumn{1}{r}{}
& \multicolumn{1}{l}{First Margin}
& \multicolumn{1}{l}{}
& \multicolumn{1}{l}{Second Margin}
& \multicolumn{1}{l}{}
& \multicolumn{1}{l}{Copula Margin} 
& \multicolumn{1}{l}{}\\
\cline{2-7}
Parametric Eff. & Estimate (S.e.)  & P-val& Estimate (S.e.) & P-val & Estimate (S.e.)  & P-val\\
\hline
\texttt{(Intercept)} & -18.3487 (4.3956) & $<$0.001
& -30.4468(10.5514) & $<$0.01 
& 0.9658 (0.5089) & 0.0577  \\ 
 \texttt{SevScale1E5} & 0.6875 (0.2645) & $<$0.01 
  & 0.2684(0.2572) & 0.2967
  & -0.2302 (0.5196) & 0.6576 \\ 
  \texttt{SevScale1E6} & 0.8084 (0.2552) & $<$0.01
  & 0.3454(0.2459) & 0.1600
  & -0.6698 (0.4964) & 0.1772 \\ 
  \texttt{SevScale1E7} & 1.7219 (0.2750) & $<$0.001 
  & 0.8530 (0.2592) & $<$0.001
  & -1.1222 (0.5165) & $<$0.05 \\ 
  \texttt{SevScale1E8} &2.5884 (0.3608) & $<$0.001
  & 0.9947(0.3286)& $<$0.01
  & -1.0086 (0.6035) & 0.0946 \\ 
  \texttt{SevScale2E5} & 0.4063 (0.2813)  & 0.1448 
  & 0.8682 (0.2796) & $<$0.01
  & 1.2973 (0.5464) & $<$0.05 \\ 
  \texttt{SevScale2E6} &0.8768 (0.2692) & $<$0.01
  &  1.2294 (0.2747) & $<$0.001
  & 1.9442 (0.5167) & $<$0.001 \\ 
  \texttt{SevScale2E7} & 1.0164 (0.2900) & $<$0.001 
  & 2.2515 (0.2988) & $<$0.001
  &  1.7939 (0.5365) & $<$0.001  \\ 
  \texttt{SevScale2E8} & 1.4862 (0.3403) & $<$0.001
  & 3.4251 (0.3632) & $<$0.001
  & 1.1878 (0.6251) & 0.0574 \\ 
   \hline
   &&&&&&\\
   Smooth Eff. & edf  & P-val& edf & P-val &   & \\
   \hline
   \texttt{s(t)}&6.649 &$<$0.001&7.374&$<$0.001&&\\
   \texttt{s(ENROLLAGE)}&1.625&$<$0.01&-&-&&\\
\hline
\hline
\end{tabular}}
}
\end{center}

\caption{AREDS results. Estimates of the parameters with standard errors (S.e.) and the effective degrees of freedom (edf) obtained through the \texttt{gjrm()} function.}
\label{tab:estimation}
\end{table}
The top panels in Figure \ref{fig:jointSurvival} display the joint progression-free probability contours for subjects with severity scores of 4 (top-right), 6 (top-middle), and 8 (top-left) in both eyes, with ages represented by the set ${56,69,81}$. From a comparison of the first three plots on the top, it is evident that progression accelerates with age. This is particularly noticeable when examining the contours for individuals aged 56 and 69, where the difference in progression-free probability for identical severity scores becomes more distinct.

In the bottom panels in Figure \ref{fig:jointSurvival}, the contours are depicted for subjects aged 56 (bottom-right), 69 (bottom-middle), and 81 (bottom-left), with severity scores within the range of ${4,6,8}$. There is a clear distinction among the three severity score groups, each of which varies with age. Notably, the group with the highest AMD severity shows the lowest progression-free probabilities. 

The Kendall's tau coefficient, estimated with the sets of covariates identified by the BRBVS, is equal to 0.35 and indicates a weak correlation between the two margins. Remarkably, this value remains unchanged even when implementing a full model configuration, i.e., including \texttt{ENROLLAGE}, \texttt{SevScale1E}, and \texttt{SevScale2E}  in \( \eta_1 \), \( \eta_2 \), and \( \eta_3 \), with a total of 72 parameters.

Our findings are in agreement with the discussions presented in \cite{PETTI22} and \cite{sun2021copula}, as well as with \cite{shih2012development}, which demonstrate a significant correlation between age and the likelihood of disease progression. Additionally, they are also in line with \cite{armstrong2004illustration}, where the importance of the severity score in explaining the progression of AMD is discussed.

\begin{figure}[t]
    \centering
    \includegraphics[width=0.30\textwidth, angle=-90]{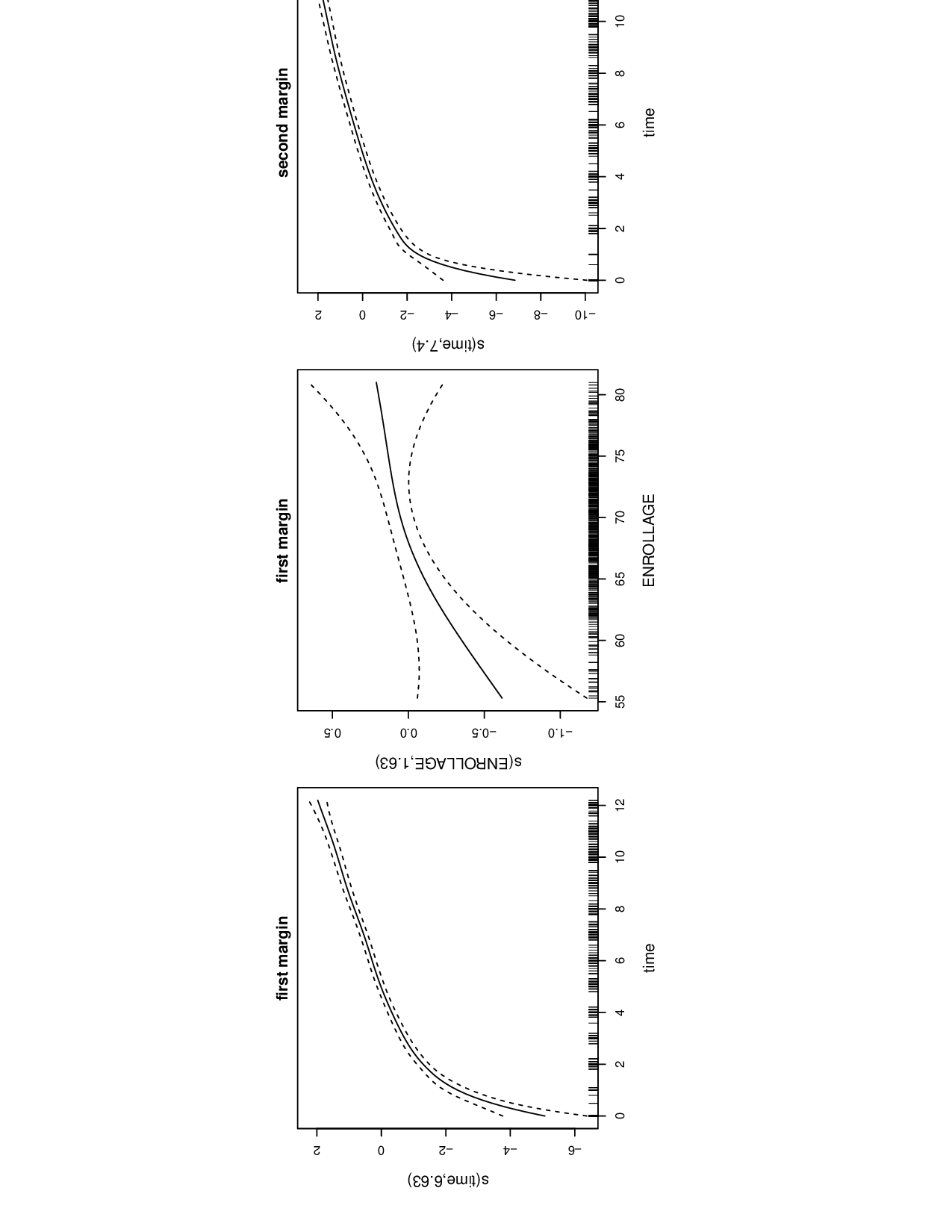}
    \caption{AREDS results. Baseline survival for the first and second eyes and the smooth function estimated for ENROLLAGE. The dotted line represents a 95\% point-wise interval. The number in brackets on the x-axis represents the estimated number of degrees of freedom (e.g., for the first survival, it is 6.63).}
    \label{fig:splinesAREDS}
\end{figure}

\begin{figure}[ht]
    \centering
    \includegraphics[width=0.63\textwidth, angle=-90]{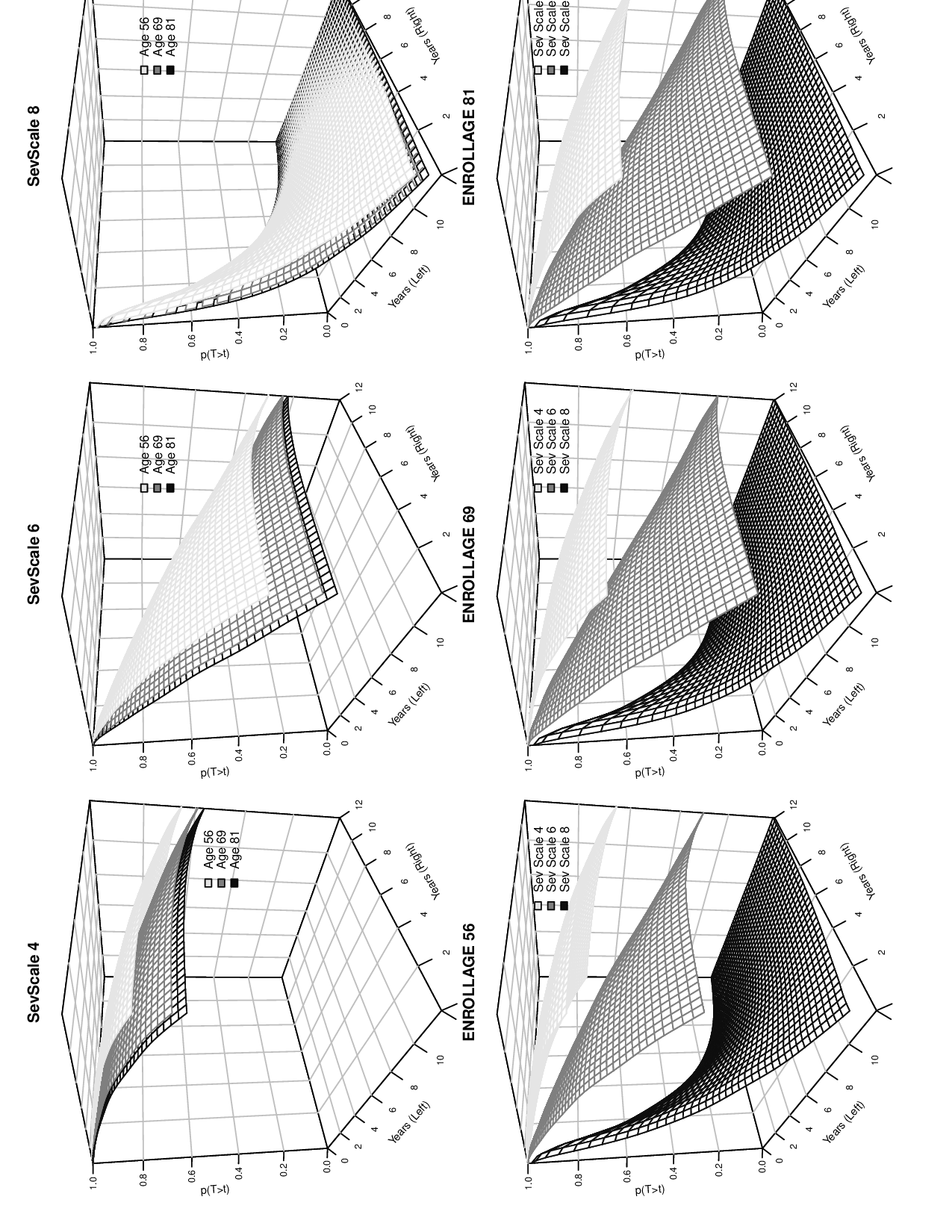}
    \caption{AREDS results. Joint progression-free probability contours for progression to late-AMD disease (in years) in the left and right eyes under different scenarios. In the top sets of panels, the severity score for both margins is set to 4 (top left),6 (top middle), and 8 (top right), varying with age. In the bottom panels, age is set to 56 (bottom left), 69 (bottom middle), and 81 (bottom right) with varying severity scores.}
    \label{fig:jointSurvival}
\end{figure}

\clearpage
\section{Discussion}
\label{ch:conclusion}

In this paper, we proposed a ranking algorithm established on a new measure for variable selection within the relevant covariates under Copula Link-Based Additive Models. The algorithm is based on the marginal evaluation of each covariate under this class of models, and its application can be extended even to the high-dimensional domain. Further, it is computationally sound, and ready to implement by practitioners in the field.

Our proposal introduces three main innovative elements: 1) to our knowledge, this is the first attempt to address the variable selection within the bivariate survival model \cite{Marra2020} domain; 2) we extended the RBVS algorithm of \cite{Baranowski2020}, originally proposed for univariate response variables, to the Copula link-based additive models for bivariate time-to-event analysis, leading to the development and implementation of the new extension of RBVS algorithm, called Bivariate RBVS (BRBVS) algorithm; 3) we proposed a novel measure for ranking variables within the BRBVS algorithm, highlighting both its theoretical underpinnings and empirical performance.

All developments are integrated into a new \texttt{R} package, called \texttt{BRBVS}, which allows both numerical and graphical summaries. 

We validated the BRBVS algorithm and evaluated the performance of the newly introduced ranking measure through a simulation study and the analysis of the AREDS dataset. Both analyses aim to assess the proposed measure and compare its performance with that of other measures for variable ranking largely applied to screen and/or select covariates in the regression framework.

The simulation results unequivocally showed the advantages afforded by our proposed algorithm and metric, underscoring its potential to enhance the variable selection process within the BRBVS algorithm.

Future research will be addressed to find more efficient variable selection algorithms by evaluating the relevant covariates via causal inference methods.  Furthermore, given the flexibility of the BRBVS algorithm and of the measure here proposed, it can be extended to a larger class of bivariate models that may potentially be included also in the package. Further opportunities for research may include developing a copula selection method, especially in the case of rotation, and extending the model to cases with more than two time-to-events. However, at this point, the current model framework has not yet been extended to handle more than two time-to-events.


\clearpage

\renewcommand{\appendix}{}
\renewcommand{\appendixname}{S}

\renewcommand{\thesection}{S\arabic{section}}
\setcounter{section}{0}

\renewcommand{\thepage}{S\arabic{page}}
\setcounter{page}{1}

\renewcommand{\thetable}{S\arabic{table}}
\setcounter{table}{0}

\renewcommand{\thefigure}{S\arabic{figure}}
\setcounter{figure}{0}

\renewcommand{\theequation}{S\arabic{equation}}
\setcounter{equation}{0}

\renewcommand{\thetheorem}{\arabic{theorem}}
\setcounter{theorem}{1}

\begin{appendix}

\begin{center}{\Large Supplementary Material\\\vspace{8pt}
Bivariate Variable Ranking for censored time-to-event data via Copula Link Based Additive models}
\end{center}

\section*{Supplementary material}

\section{Theoretical results }\label{ch:teoretProp}
This section will present the main theoretical results for the consistency of both the Fisher Information measure proposed and in the BRBVS procedure.

\begin{theorem}
\label{th:upperbound}
Let us consider $\omega_{\nu j}\overset{\Delta}{=}{\beta}^2_{\nu j} \cdot \mathbb{E}\left[ \frac{\partial^2 \ell(\boldsymbol \delta) }{\partial \beta_{\nu j} \partial \beta_{\nu j} } \right]$ for $\nu=1,2$, wherein $\beta_{\nu j}$ denotes the population parameter associated to the model framework discussed in Section \ref{Ch:modelformulation}, corresponding to the $\nu^{\text{th}}$ margin and $j^{\text{th}}$ covariate. Furthermore, $\mathbb{E}\left[ \frac{\partial^2 \ell(\boldsymbol \delta) }{\partial \beta_{\nu j} \partial \beta_{\nu j} } \right]\overset{\Delta}{=} i_{\nu \nu}(\beta_{\nu j})$ represents the associated element of the Fisher Information Matrix. Let $\hat{\omega}_{\nu j}$ denote an estimate of $\omega_{\nu j}$. Using the results in Theorem 1 (p. 15) and Remark 13 (p. 30) in  Supplementary Material B in \cite{Marra2020}, for some $\theta \in [0, \frac{1}{2})$, $\lambda_{\theta}>0$, and for any given $m>0$, the following inequality holds:

\[
\mathbb{P}\left( |\hat{\omega}_{\nu j} - \omega_{\nu j}| \geq \lambda_{\theta} m^{-\theta} \right) \leq  4 \exp(-C_{\theta} m^{1-2\theta}),
\]

where $\theta \in [0, \frac{1}{2})$, $C_{\theta}>0$.

\end{theorem}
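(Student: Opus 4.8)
The plan is to derive the concentration inequality for $\hat{\omega}_{\nu j}$ by reducing it, via a decomposition and union bound, to the concentration inequalities already available for the estimated parameters $\hat{\beta}_{\nu j}$ and for the estimated Fisher information entries $\hat{\imath}_{\nu\nu}(\beta_{\nu j})$, both of which follow from Theorem 1 (p.~15) and Remark 13 (p.~30) of the Supplementary Material B in \cite{Marra2020}. First I would write $\hat{\omega}_{\nu j} = \hat{\beta}_{\nu j}^2\,\hat{\imath}_{\nu\nu}(\beta_{\nu j})$ and $\omega_{\nu j} = \beta_{\nu j}^2\,\imath_{\nu\nu}(\beta_{\nu j})$, and split the difference as
\begin{align*}
\hat{\omega}_{\nu j}-\omega_{\nu j}
&= \big(\hat{\beta}_{\nu j}^2-\beta_{\nu j}^2\big)\,\hat{\imath}_{\nu\nu}(\beta_{\nu j})
 + \beta_{\nu j}^2\,\big(\hat{\imath}_{\nu\nu}(\beta_{\nu j})-\imath_{\nu\nu}(\beta_{\nu j})\big).
\end{align*}
For each of the two terms I would use that $|\hat{\beta}_{\nu j}-\beta_{\nu j}|$ and $|\hat{\imath}_{\nu\nu}(\beta_{\nu j})-\imath_{\nu\nu}(\beta_{\nu j})|$ each obey a bound of the form $\mathbb{P}(|\cdot|\ge t\,m^{-\theta})\le 2\exp(-c\,m^{1-2\theta})$ on the event that the estimator lies in a fixed neighbourhood of the truth (which itself has probability at least $1-2\exp(-c'm^{1-2\theta})$). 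The factor $\hat{\imath}_{\nu\nu}$ is controlled by restricting to that high-probability event, where it is bounded by a constant $M$; the factor $\hat{\beta}_{\nu j}+\beta_{\nu j}$ appearing in $\hat{\beta}_{\nu j}^2-\beta_{\nu j}^2=(\hat{\beta}_{\nu j}+\beta_{\nu j})(\hat{\beta}_{\nu j}-\beta_{\nu j})$ is likewise bounded there by $2|\beta_{\nu j}|+1$.

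Next I would assemble the pieces. On the good event, the first term is at most $(2|\beta_{\nu j}|+1)\,M\,|\hat{\beta}_{\nu j}-\beta_{\nu j}|$ and the second is at most $\beta_{\nu j}^2\,|\hat{\imath}_{\nu\nu}(\beta_{\nu j})-\imath_{\nu\nu}(\beta_{\nu j})|$, so
\[
|\hat{\omega}_{\nu j}-\omega_{\nu j}|
\le \big(2|\beta_{\nu j}|+1\big)M\,|\hat{\beta}_{\nu j}-\beta_{\nu j}|
 + \beta_{\nu j}^2\,|\hat{\imath}_{\nu\nu}(\beta_{\nu j})-\imath_{\nu\nu}(\beta_{\nu j})|.
\]
Choosing $\lambda_\theta$ large enough that $\lambda_\theta m^{-\theta}$ exceeds the sum of the two sub-thresholds, a union bound over the two estimator-deviation events plus the complement of the good event gives at most $2\exp(-c_1 m^{1-2\theta})+2\exp(-c_2 m^{1-2\theta})+2\exp(-c_3 m^{1-2\theta})$; absorbing constants and taking $C_\theta=\min\{c_1,c_2,c_3\}$ (and noting three exponential terms are bounded by $4\exp(-C_\theta m^{1-2\theta})$ for $m$ beyond a fixed size, with the remaining finitely many $m$ handled by enlarging $\lambda_\theta$) yields the stated $4\exp(-C_\theta m^{1-2\theta})$ bound. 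The constants $\lambda_\theta$ and $C_\theta$ depend on $\theta$ through the rates in \cite{Marra2020} and on the fixed population quantity $\beta_{\nu j}$, exactly as in the univariate SIS-type results.

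The main obstacle is not the algebra of the decomposition but verifying that the cited results from \cite{Marra2020} genuinely apply to the present estimation setup: namely, that in the BRBVS configuration of Remark~\ref{rk:estimationsetup} — where only a single linear covariate $x_j$ enters $\eta_1$ and $\eta_2$, the baseline smooths of time are retained, and $\eta_3$ is an intercept — the regularity conditions (identifiability, smoothness of the log-likelihood, boundedness of the relevant Fisher information sub-blocks, and the sub-Gaussian-type tail control on the score) are all in force, so that the exponential concentration of $\hat{\boldsymbol\delta}$ around $\boldsymbol\delta^0$ and of the plug-in Fisher information transfers to the scalar functional $\omega_{\nu j}$. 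One must also check that restricting to the diagonal block $\boldsymbol{\mathcal{I}}_{\nu\nu}(\boldsymbol\beta_\nu)$ — as justified in Remark~\ref{rk:edfinBRBVS} — does not break any of these conditions, and that the continuity of the map $(\beta,\imath)\mapsto \beta^2\imath$ on the relevant compact neighbourhood supplies the Lipschitz constants used above. Once that bridge is in place, the concentration bound follows by the routine union-bound argument sketched here.
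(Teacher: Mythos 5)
Your proposal follows the same overall strategy as the paper's proof: reduce $|\hat{\omega}_{\nu j}-\omega_{\nu j}|$ to a weighted sum of $|\hat{\beta}_{\nu j}-\beta_{\nu j}|$ and $|\hat{\imath}_{\nu\nu}-\imath_{\nu\nu}|$, import the exponential concentration of each from Theorem 1 and Remark 13 of \cite{Marra2020}, split the threshold $\lambda_\theta m^{-\theta}$ between the two pieces, and finish with a union bound whose exponents are both proportional to $m^{1-2\theta}$. The one genuine difference is in the decomposition: the paper uses a first-order Taylor expansion of $(\beta,\imath)\mapsto\beta^2\imath$, which leaves an $O(\cdot^2)$ remainder that is then silently dropped, whereas you use the exact identity $\hat{\omega}-\omega=(\hat{\beta}^2-\beta^2)\hat{\imath}+\beta^2(\hat{\imath}-\imath)$. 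Your version is slightly cleaner on this point — there is no remainder to dispose of — at the modest cost of having to bound the random factors $\hat{\imath}_{\nu\nu}$ and $\hat{\beta}_{\nu j}+\beta_{\nu j}$ on a high-probability event, which introduces a third exponential term in the union bound; both routes land on the same $4\exp(-C_\theta m^{1-2\theta})$ after absorbing constants. Your closing caveat about verifying that the regularity conditions of \cite{Marra2020} actually hold in the BRBVS estimation setup of Remark \ref{rk:estimationsetup} is well taken: the paper simply asserts the two input concentration bounds (citing asymptotic normality and consistency) without that verification either, so on this point your proposal is, if anything, more candid about where the real work lies.
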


\begin{proof}

Considering a first-order Taylor expansion of \(\hat{\omega}_{\nu j}\) around the true parameters \(\beta_{\nu j}\) and \(i_{\nu \nu}(\beta_{\nu j})\), we can write
\[
\hat{\omega}_{\nu j} - \omega_{\nu j} \approx \frac{\partial \omega_{\nu j}}{\partial \beta_{\nu j}} (\hat{\beta}_{\nu j} - \beta_{\nu j}) + \frac{\partial \omega_{\nu j}}{\partial i_{\nu \nu}} (\hat{i}_{\nu \nu}(\hat{\beta}_{\nu j}) - i_{\nu \nu} (\beta_{\nu j})),
\]

where $\omega_{\nu j} = \beta_{\nu j}^2 i_{\nu \nu} (\beta_{\nu j})$,$ \quad \frac{\partial \omega_{\nu j}}{\partial \beta_{\nu j}} = 2 \beta_{\nu j} i_{\nu \nu}(\beta_{\nu j})$,$ \quad\frac{\partial \omega_{\nu j}}{\partial i_{\nu \nu} (\beta_{\nu j})} = \beta_{\nu j}^2$. Therefore
\[
|\hat{\omega}_{\nu j} - \omega_{\nu j}| \leq 2 |\beta_{\nu j} i_{\nu \nu}(\beta_{\nu j})| |\hat{\beta}_{\nu j} - \beta_{\nu j}| + |\beta_{\nu j}^2| |\hat{i}_{\nu \nu} - i_{\nu \nu}| + O(||\hat{\omega}_{\nu j}- \omega_{\nu j}||^2).
\]

Given that \(\hat{\beta}_{\nu j}\) is asymptotically normal (Remark 13 in Supplementary Material B in \cite{Marra2020}), for any \(\varepsilon > 0\), we can write \citep{van2000asymptotic}

\[
\mathbb{P}\left( |\hat{\beta}_{\nu j} - \beta_{\nu j}| \geq \varepsilon \right) \leq 2 \exp\left( -\frac{m \varepsilon^2}{2 \sigma^2_{\nu j}} \right),
\]

where the variance of $\beta_{\nu j}$ is $\sigma^2_{\nu j} m^{-1}$. Assuming that \(\hat{i}_{\nu \nu} (\hat{\beta}_{\nu j})\) is a consistent estimator of \(i_{\nu \nu} (\beta_{\nu j})\),  it follows that \citep{van2000asymptotic}
\[
\mathbb{P}\left( |\hat{i}_{\nu \nu} (\hat{\beta}_{\nu j}) - i_{\nu \nu} ({\beta}_{\nu j})| \geq \delta \right) \leq 2 \exp\left( -C' m \delta^2 \right),
\]

for some constant \(C' > 0\). Set \(\varepsilon\) and \(\delta\) such that $
2 |\beta_{\nu j} i_{\nu \nu} ({\beta}_{\nu j})| \varepsilon + |\beta_{\nu j}^2| \delta = \lambda_{\theta} m^{-\theta}.
$

For simplicity, choose $ \varepsilon = \frac{\lambda_{\theta} m^{-\theta}}{4 |\beta_{\nu j} i_{\nu \nu}|}, \quad \delta = \frac{\lambda_{\theta} m^{-\theta}}{2 |\beta_{\nu j}^2|}.
$ Therefore
\[
\mathbb{P}\left( |\hat{\omega}_{\nu j} - \omega_{\nu j}| \geq \lambda_{\theta} m^{-\theta} \right) \leq \mathbb{P}\left( |\hat{\beta}_{\nu j} - \beta_{\nu j}| \geq \varepsilon \right) + \mathbb{P}\left( |\hat{i}_{\nu \nu} (\hat{\beta}_{\nu j}) - i_{\nu \nu} ({\beta}_{\nu j})| \geq \delta \right).
\]
Substituting the values of \(\varepsilon\) and \(\delta\) already defined, we can write
\[
\mathbb{P}\left( |\hat{\omega}_{\nu j} - \omega_{\nu j}| \geq \lambda_{\theta} m^{-\theta} \right) \leq 2 \exp\left( -\frac{m \varepsilon^2}{2 \sigma^2} \right) + 2 \exp\left( -C' m \delta^2 \right).
\]
We notice that
$\frac{m \varepsilon^2}{2 \sigma^2} = \frac{m}{2 \sigma^2} \left( \frac{\lambda_{\theta} m^{-\theta}}{4 |\beta_{\nu j} i_{\nu \nu} ({\beta}_{\nu j})|} \right)^2 = \frac{\lambda_{\theta}^2 m^{1 - 2\theta}}{32 \sigma^2 \beta_{\nu j}^2 i_{\nu \nu}^2 ({\beta}_{\nu j})} >0
$ and $
C' m \delta^2 = C' m \left( \frac{\lambda_{\theta} m^{-\theta}}{2 |\beta_{\nu j}^2|} \right)^2 = \frac{C' \lambda_{\theta}^2 m^{1 - 2\theta}}{4 \beta_{\nu j}^4}>0
$,  both arguments are non-negative. Since both exponents are proportional to \(m^{1 - 2\theta}\), for \(\theta \in [0, \frac{1}{2} )\), the exponents grow with \(m\). Therefore

\[
\mathbb{P}\left( |\hat{\omega}_{\nu j} - \omega_{\nu j}| \geq \lambda_{\theta} m^{-\theta} \right) \leq 4 \exp\left( - C_{\theta} m^{1 - 2\theta} \right),
\]

where the positive constant $
C = \min\left\{ \frac{\lambda_{\theta}^2}{32 \sigma^2 \beta_{\nu j}^2 i_{\nu \nu}^2}, \frac{C' \lambda_{\theta}^2}{4 \beta_{\nu j}^4} \right\}.
$
\end{proof}

\begin{theorem}
\label{th:ConvProfDeltaMethod}
Let $\boldsymbol{\phi} : \mathcal{D}_{\boldsymbol{\phi}} \subset \mathbb{R}^{W}\to \mathbb{R}$ be a map defined on a subset of $\mathbb{R}^W$, differentiable at $\boldsymbol{\delta}$.
Let $\hat{\boldsymbol{\delta}}$ be a random vector taking their values in the domain of $\boldsymbol{\phi}(\boldsymbol{\delta})=\boldsymbol{\delta} \boldsymbol{\mathcal{I}} ({\boldsymbol{\delta}})\boldsymbol{\delta}^{\top}$, where $\boldsymbol{\delta}$
 being a vector of dimension $W$ and $ \boldsymbol{\mathcal{I}} ({\boldsymbol{\delta}})$
 being a positive definite symmetric matrix of size $W\times W$. If $\hat{\boldsymbol{\delta}}=\boldsymbol{\delta}+ O_p(1/\sqrt{n})$ and $\mathbf{S}=o(\sqrt{n})$ where $\mathbf{S}=\text{diag}{(\mathbf{S}_1, \mathbf{S}_2, \mathbf{S}_3)}$ and $\mathbf{S}_1, \mathbf{S}_2, \mathbf{S}_3$ are overall penalties which contain $\boldsymbol{\lambda}_1, \boldsymbol{\lambda}_2, \boldsymbol{\lambda}_3$, for any $\epsilon > 0$.
$$
\boldsymbol{\hat{\delta}} \boldsymbol{\mathcal{I}} ({\boldsymbol{\hat{\delta}}})\boldsymbol{\hat{\delta}}^{\top} \overset{p}{\to}\boldsymbol{\delta} \boldsymbol{\mathcal{I}} ({\boldsymbol{\delta}})\boldsymbol{\delta}^{\top}.
$$
\end{theorem}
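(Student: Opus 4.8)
The plan is to read the claim as a \textbf{continuous-mapping (first-order delta-method)} statement and to separate the purely probabilistic part from the one model-specific point, namely that the information matrix that actually enters the BRBVS measure is consistent for the theoretical Fisher information. Write $\phi(\boldsymbol{u}) = \boldsymbol{u}\,\boldsymbol{\mathcal{I}}(\boldsymbol{u})\,\boldsymbol{u}^{\top}$ for $\boldsymbol{u}$ in a neighbourhood $\mathcal{N}$ of $\boldsymbol{\delta}$. By hypothesis $\phi$ is differentiable at $\boldsymbol{\delta}$; in particular it is continuous there, being a finite composition of the (smooth, by the regularity conditions of the model in Section~\ref{Ch:modelformulation}) matrix-valued map $\boldsymbol{u}\mapsto\boldsymbol{\mathcal{I}}(\boldsymbol{u})$ with the polynomial quadratic form $(\boldsymbol{u},A)\mapsto \boldsymbol{u}\,A\,\boldsymbol{u}^{\top}$. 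Positive definiteness of $\boldsymbol{\mathcal{I}}(\boldsymbol{\delta})$ persists on a small enough $\mathcal{N}$ by continuity of the eigenvalues, so no degeneracy of the quadratic form can occur along the way.

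First I would record that $\hat{\boldsymbol{\delta}} = \boldsymbol{\delta} + O_p(1/\sqrt{n})$ implies $\hat{\boldsymbol{\delta}} \overset{p}{\to}\boldsymbol{\delta}$, since $O_p(1/\sqrt{n})=o_p(1)$. Applying the continuous mapping theorem (\cite{van2000asymptotic}, Thm.~2.3) to $\phi$ then gives $\phi(\hat{\boldsymbol{\delta}}) \overset{p}{\to}\phi(\boldsymbol{\delta})$, which is exactly the claim when $\boldsymbol{\mathcal{I}}(\cdot)$ denotes the exact (appropriately normalised) Fisher information evaluated at the argument; a one-term Taylor expansion $\phi(\hat{\boldsymbol{\delta}}) - \phi(\boldsymbol{\delta}) = \nabla\phi(\boldsymbol{\delta})^{\top}(\hat{\boldsymbol{\delta}}-\boldsymbol{\delta}) + o_p(\|\hat{\boldsymbol{\delta}}-\boldsymbol{\delta}\|)$ in fact upgrades this to the rate $O_p(1/\sqrt{n})$. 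The second substep, where the hypothesis $\mathbf{S}=o(\sqrt{n})$ is used, is to reconcile this with the quantity computed in practice, which is built from the penalized Hessian $-\mathbf{H}(\hat{\boldsymbol{\delta}})+\mathbf{S}$ of \eqref{eq:penLik}. Here I would argue: (i) by a uniform law of large numbers over $\mathcal{N}$, using the domination conditions on the second derivatives of $\ell$ from \cite{Marra2020}, $\sup_{\boldsymbol{u}\in\mathcal{N}}\|{-n^{-1}\mathbf{H}(\boldsymbol{u})}-n^{-1}\boldsymbol{\mathcal{I}}(\boldsymbol{u})\|\overset{p}{\to}0$, which together with $\hat{\boldsymbol{\delta}}\overset{p}{\to}\boldsymbol{\delta}$ and continuity of $\boldsymbol{\mathcal{I}}(\cdot)$ yields $-n^{-1}\mathbf{H}(\hat{\boldsymbol{\delta}}) - n^{-1}\boldsymbol{\mathcal{I}}(\boldsymbol{\delta})\overset{p}{\to}\mathbf{0}$; and (ii) since $\mathbf{S}=o(\sqrt{n})$ a fortiori $n^{-1}\mathbf{S}\to\mathbf{0}$, so the penalty contributes nothing asymptotically to the normalised observed information. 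Combining (i)–(ii) with $\hat{\boldsymbol{\delta}}\overset{p}{\to}\boldsymbol{\delta}$ through Slutsky's lemma shows $\hat{\boldsymbol{\delta}}\bigl(-\mathbf{H}(\hat{\boldsymbol{\delta}})+\mathbf{S}\bigr)\hat{\boldsymbol{\delta}}^{\top}$ and $\boldsymbol{\delta}\,\boldsymbol{\mathcal{I}}(\boldsymbol{\delta})\,\boldsymbol{\delta}^{\top}$ share the same probability limit.

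I expect the \textbf{main obstacle} to be substep (i): establishing the uniform convergence of the normalised Hessian over a neighbourhood of $\boldsymbol{\delta}$, rather than merely pointwise convergence at $\boldsymbol{\delta}$. This is what forces us to invoke the stochastic-equicontinuity and domination hypotheses that are part of the model's regularity conditions in \cite{Marra2020} (Supplementary Material B); once those are granted, everything else is bookkeeping with Slutsky's lemma, the continuous mapping theorem, and the differentiability of $\phi$ assumed in the statement. A secondary, minor check is that the neighbourhood $\mathcal{N}$ can be chosen inside $\mathcal{D}_{\boldsymbol{\phi}}$ and small enough that $\boldsymbol{\mathcal{I}}(\boldsymbol{u})$ stays positive definite there, so the plug-in quadratic form is well defined along the whole argument.
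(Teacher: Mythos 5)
Your proposal is correct and follows essentially the same route as the paper: differentiability of $\boldsymbol{\phi}$ at $\boldsymbol{\delta}$ together with $\hat{\boldsymbol{\delta}}=\boldsymbol{\delta}+O_p(1/\sqrt{n})$ gives $\boldsymbol{\phi}(\hat{\boldsymbol{\delta}})-\boldsymbol{\phi}(\boldsymbol{\delta})=o_p(1)$ via a first-order expansion (the paper manipulates the Taylor remainder directly, while your appeal to the continuous mapping theorem is an equivalent and slightly cleaner shortcut that already yields the stated conclusion). The one place you go beyond the paper is in spelling out where the hypothesis $\mathbf{S}=o(\sqrt{n})$ actually enters --- reconciling the penalized observed information $-\mathbf{H}(\hat{\boldsymbol{\delta}})+\mathbf{S}$ with the theoretical $\boldsymbol{\mathcal{I}}(\cdot)$ through a uniform law of large numbers --- whereas the paper's proof only cites that condition in passing and never uses it in the displayed argument; your version is the more complete account and also avoids the paper's minor order-bookkeeping slip in which $O_p(1/\sqrt{n})+O(1/n)$ is written as $O_p(1/n)$.
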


\begin{proof}   
Let $\boldsymbol{\phi}(\boldsymbol{\delta})=\boldsymbol{\delta} \boldsymbol{\mathcal{I}} ({\boldsymbol{\delta}})\boldsymbol{\delta}^{\top}$ be a differentiable map at $\boldsymbol{\delta}$, since $\boldsymbol{\mathcal{I}} ({\boldsymbol{\delta}})$ is  continuous at the parameters. We have
\begin{align}
\label{eq:approxphi}
\boldsymbol{\phi}(\hat{\boldsymbol{\delta}})= \boldsymbol{\phi}({\boldsymbol{\delta}})+ \boldsymbol{\phi}'({\boldsymbol{\delta}}) (\hat{\boldsymbol{\delta}}- \boldsymbol{\delta})+O(||\hat{\boldsymbol{\delta}} - \boldsymbol{\delta}||^2),
\end{align}
where $\boldsymbol{\phi}'({\boldsymbol{\delta}})=\nabla (\boldsymbol{\delta} \boldsymbol{\mathcal{I}} ({\boldsymbol{\delta}})\boldsymbol{\delta}^{\top})$ denotes the gradient of the function evaluated at 
$\boldsymbol{\delta}$. Recalling that $\boldsymbol{H}_{p}=$ $\boldsymbol{H}-\mathbf{S}$ and $\boldsymbol{S}=o(\sqrt{n})$ (for a detailed discussion of the results for $\boldsymbol{S}$ and $\boldsymbol{H}$, we refer the reader to Supplementary Material B in \cite{Marra2020}). We can write 
\[
\boldsymbol{\phi}(\hat{\boldsymbol{\delta}})-\boldsymbol{\phi}({\boldsymbol{\delta}})\overset{p}{\to}\boldsymbol{\phi}'({\boldsymbol{\delta}}) (\hat{\boldsymbol{\delta}}- \boldsymbol{\delta}),
\]
Taking the absolute value of $\boldsymbol{\phi}(\hat{\boldsymbol{\delta}})-\boldsymbol{\phi}(\boldsymbol{\delta})$ from (\ref{eq:approxphi}), we have that $|\boldsymbol{\phi}(\hat{\boldsymbol{\delta}})-\boldsymbol{\phi}({\boldsymbol{\delta}})|\le |\boldsymbol{\phi}'({\boldsymbol{\delta}}) (\hat{\boldsymbol{\delta}}- \boldsymbol{\delta})|$, and since $\hat{\boldsymbol{\delta}}=\boldsymbol{\delta}+ O_p(1/\sqrt{n})$ we can write $||\hat{\boldsymbol{\delta}} - \boldsymbol{\delta}|| = O_p(1/\sqrt{n})$. Thus, we have:

\begin{align*}
|\boldsymbol{\phi}'({\boldsymbol{\delta}}) (\hat{\boldsymbol{\delta}}- \boldsymbol{\delta})| 
&= |\boldsymbol{\phi}(\hat{\boldsymbol{\delta}}) - \boldsymbol{\phi}(\boldsymbol{\delta}) - O(||\hat{\boldsymbol{\delta}} - \boldsymbol{\delta}||^2)|\\
&= |\boldsymbol{\phi}(\hat{\boldsymbol{\delta}}) - \boldsymbol{\phi}(\boldsymbol{\delta})| + O(||\hat{\boldsymbol{\delta}} - \boldsymbol{\delta}||^2)\\
&= O_p(||\hat{\boldsymbol{\delta}} - \boldsymbol{\delta}||) + O(||\hat{\boldsymbol{\delta}} - \boldsymbol{\delta}||^2)\\
&= O_p(1/\sqrt{n}) + O(1/n)\\
&= O_p(1/n).
\end{align*}

The first line comes from the Taylor expansion in (\ref{eq:approxphi}) by simply rearranging the terms. Then we notice that the difference between the functions can be approximated by the true difference plus an error term of the order $||\hat{\boldsymbol{\delta}} - \boldsymbol{\delta}||^2$. Finally, noting that as $n$ becomes large, the term $O_p(1/\sqrt{n})$ dominates  $O(1/n)$.

We can conclude that  $|\boldsymbol{\phi}'({\boldsymbol{\delta}}) (\hat{\boldsymbol{\delta}}- \boldsymbol{\delta})|=O_p(1/{n})$. Therefore 
\[
\lim_{n\to \infty}\mathbb{P}\bigg(|\boldsymbol{\phi}(\hat{\boldsymbol{\delta}})-\boldsymbol{\phi}({\boldsymbol{\delta}})|>\epsilon\bigg)=0.
\]

\end{proof}

\begin{corollary}
\label{cor:FIM}
In accordance with what is discussed in the  Theorem \ref{th:ConvProfDeltaMethod}, it can be deduced that $\hat{\omega}_{\nu j} \overset{p}{\to} \omega_{\nu j}$
as $n \to \infty$.
\end{corollary}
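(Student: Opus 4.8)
The plan is to read the corollary as the specialization of Theorem~\ref{th:ConvProfDeltaMethod} (equivalently, of Theorem~\ref{th:upperbound} at $\theta=0$) to the particular smooth functional of $\boldsymbol{\delta}$ that defines $\omega_{\nu j}$. By \eqref{eq:FisherMeasure} one may write $\omega_{\nu j}=\psi_{\nu j}(\boldsymbol{\delta})$ with $\psi_{\nu j}(\boldsymbol{\delta})=[\mathbf{e}^{(\nu)}_j\odot\boldsymbol{\beta}_\nu]^{\top}\boldsymbol{\mathcal{I}}_{\nu\nu}(\boldsymbol{\beta}_\nu)[\mathbf{e}^{(\nu)}_j\odot\boldsymbol{\beta}_\nu]=\beta_{\nu j}^2\, i_{\nu\nu}(\beta_{\nu j})$; this is exactly the quadratic form $\boldsymbol{\delta}\mapsto\boldsymbol{\delta}\,\boldsymbol{\mathcal{I}}(\boldsymbol{\delta})\,\boldsymbol{\delta}^{\top}$ of Theorem~\ref{th:ConvProfDeltaMethod} precomposed with the linear coordinate map $\boldsymbol{\delta}\mapsto\mathbf{e}^{(\nu)}_j\odot\boldsymbol{\beta}_\nu$ that isolates the $j$-th parametric effect of the $\nu$-th margin (using only the diagonal block $\boldsymbol{\mathcal{I}}_{\nu\nu}$, which is legitimate by Remark~\ref{rk:edfinBRBVS}). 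Since a linear map is smooth, $\psi_{\nu j}$ inherits differentiability at $\boldsymbol{\delta}$ from $\boldsymbol{\delta}\mapsto\boldsymbol{\mathcal{I}}(\boldsymbol{\delta})$, so the regularity hypothesis of Theorem~\ref{th:ConvProfDeltaMethod} carries over unchanged.

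First I would verify the remaining hypotheses. The asymptotic normality of $\hat{\boldsymbol{\delta}}$ established in Remark~13 of Supplementary Material~B of \cite{Marra2020}, already invoked in the proof of Theorem~\ref{th:upperbound}, gives $\hat{\boldsymbol{\delta}}=\boldsymbol{\delta}+O_p(n^{-1/2})$, hence $\mathbf{e}^{(\nu)}_j\odot\hat{\boldsymbol{\beta}}_\nu=\mathbf{e}^{(\nu)}_j\odot\boldsymbol{\beta}_\nu+O_p(n^{-1/2})$ by linearity; and the smoothing-penalty regime $\mathbf{S}=o(\sqrt{n})$ is precisely the one under which the penalized and unpenalized estimators are first-order equivalent (cf.\ Remark~\ref{rk:edfinCopulaModel}). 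With these in hand the argument of Theorem~\ref{th:ConvProfDeltaMethod} applies verbatim to $\psi_{\nu j}$: a first-order Taylor expansion around $\boldsymbol{\delta}$ yields $|\psi_{\nu j}(\hat{\boldsymbol{\delta}})-\psi_{\nu j}(\boldsymbol{\delta})|\le|\psi_{\nu j}'(\boldsymbol{\delta})(\hat{\boldsymbol{\delta}}-\boldsymbol{\delta})|+O(\|\hat{\boldsymbol{\delta}}-\boldsymbol{\delta}\|^2)=O_p(n^{-1/2})$, whence $\hat{\omega}_{\nu j}\overset{p}{\to}\omega_{\nu j}$. One point to flag is that in the BRBVS algorithm $\hat{\boldsymbol{\delta}}$ is recomputed on subsamples of size $m=\lfloor n/r\rfloor$ with $r$ fixed, so the limit is to be read through $m\to\infty$ (equivalently $n\to\infty$); as an alternative, self-contained route, Theorem~\ref{th:upperbound} with $\theta=0$ and $\lambda_\theta=\varepsilon$ gives $\mathbb{P}(|\hat{\omega}_{\nu j}-\omega_{\nu j}|\ge\varepsilon)\le 4\exp(-C_\varepsilon m)\to 0$ for every $\varepsilon>0$, which is convergence in probability.

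The main obstacle is not the Taylor-expansion bookkeeping but the plug-in step for the information term: one must know that $\hat{i}_{\nu\nu}(\hat{\boldsymbol{\beta}}_{\nu j})$ — the relevant diagonal entry of the observed information evaluated at the estimate under the restricted predictor set-up of Remark~\ref{rk:estimationsetup} — is consistent for $i_{\nu\nu}(\beta_{\nu j})$. This requires uniform convergence of the normalized Hessian of $\ell$ on a shrinking neighbourhood of $\boldsymbol{\delta}$ together with continuity of $\boldsymbol{\delta}\mapsto\boldsymbol{\mathcal{I}}(\boldsymbol{\delta})$; both follow from the regularity conditions of \cite{Marra2020} but should be cited explicitly rather than taken for granted. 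Granting that, the continuous map $(b,i)\mapsto b^2 i$ applied to $\hat{\beta}_{\nu j}\overset{p}{\to}\beta_{\nu j}$ and $\hat{i}_{\nu\nu}\overset{p}{\to}i_{\nu\nu}$ closes the argument, confirming that the statement is genuinely a corollary: Theorem~\ref{th:ConvProfDeltaMethod} specialized to the coordinate-projected quadratic form defining $\omega_{\nu j}$.
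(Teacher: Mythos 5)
Your proposal is correct and matches the paper's intent: the paper gives no separate argument for Corollary~\ref{cor:FIM}, simply asserting it as the specialization of Theorem~\ref{th:ConvProfDeltaMethod} to the coordinate-projected quadratic form $\beta_{\nu j}^2\, i_{\nu\nu}(\beta_{\nu j})$, which is exactly the reading you give (and your alternative route via Theorem~\ref{th:upperbound} with $\theta=0$ is likewise consistent with the paper's remark that Theorems~\ref{th:upperbound} and~\ref{th:ConvProfDeltaMethod} together yield consistency). Your observation that the consistency of the plug-in information term $\hat{i}_{\nu\nu}(\hat{\beta}_{\nu j})$ is assumed rather than derived is accurate — the paper's proof of Theorem~\ref{th:upperbound} does the same — so your write-up is, if anything, more explicit about the hypotheses than the original.
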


Theorems \ref{th:upperbound} and \ref{th:ConvProfDeltaMethod} with Corollary \ref{cor:FIM} posit that as $n \to \infty$, $\hat{\omega}_{\nu j}$ is a consistent estimator, rendering the proposed measure appropriate for application in the $BRBVS$ domain. Furthermore, we state that the Fisher information measure has an exponential decay upper limit under normal conditions. 
This assertion is confirmed not solely through theoretical tools but also via empirical evidence, as delineated in the simulation study described in Section \textcolor{red}{4}.
Note that the result obtained in Theorem \ref{th:upperbound}
 is crucial, as it provides the basis for proving the algorithm's consistency. Establishing the consistency of the measure, together with the exponential decay bound, is key to distinguishing relevant information from noisy data in the BRBVS algorithm. This will be further explored in the following Corollaries, which generalize Proposition 1 and Theorem 1 in \cite{Baranowski2020} when considering two sets of variables.

\begin{corollary}
\label{th:firstcorollary}
Let $\mathcal{R}{\nu}$ be two variable rankings 
based on  ${\omega}_{\nu j} \overset{\Delta}{=} [\mathbf{e}^{(\nu)}_j \odot \boldsymbol{\beta}_\nu]^{\top} \boldsymbol{\mathcal{I}}_{\nu \nu}(\boldsymbol{\beta}_\nu) [\mathbf{e}^{(\nu)}_j \odot \boldsymbol{\beta}_\nu]$ for $j = 1, \dots, p$ and $\nu = 1, 2$, such that the rankings are a permutation of $\{1,\dots, p\}$ satisfying $\hat{\omega}_{\nu 1} >\dots> \hat{\omega}_{\nu p} $. Under the following conditions

\begin{enumerate}[label= {\em [A.\arabic*]}]
\item Let $\mathbf{Z}_1, \dots, \mathbf{Z}_n$ be independent observations. For some $\theta \in [0 , 1/2)$ and any $\lambda_{\theta} > 0$ we have 
    \[
 \mathbb{P}\bigg( |\hat{\omega}_{\nu j}(\mathbf{Z}_1, \dots, \mathbf{Z}_n)- \omega_{\nu j}| \ge \lambda_\theta n ^{-\theta}\bigg)
   \leq 4 \exp\left( - C_{\theta} n^{\gamma} \right), \]

where $\gamma= 1 - 2\theta >0$, $C_{\theta}>0$ and $n$ is the sample size.

\item The index sets of important variables are defined as $\mathcal{S}_\nu \subset \{1_\nu,\dots, p_\nu\}$ with $\mathcal{S}=\mathcal{S}_1 \cup \mathcal{S}_2 =\{1_1, 1_2, \dots, p_1, p_2\}$ does not depend on $n$ or $p$, and could be potentially an empty set. 

\item For any $a \in \mathcal{S}$, there exists $\mathcal{M}_a \subset \{1, \dots, p\} \setminus \mathcal{S}$, such that for any  $a \in \mathcal{M}_a$ the distribution of $\{\hat{\omega}_{\nu j}\}_{j \in \mathcal{M}_a, \nu =1,2}$ is exchangeable and $|\mathcal{M}_a| \to \infty$ as $n \to \infty$.
\item There exists $\vartheta' \in (0, \theta ]$ where $\theta \in [0, 1/2]$,  $c_{\vartheta'}>0$ such that
\[
( \min_{j \in \mathcal{S}_1 \cup \mathcal{S}_2} \omega_{\nu j}-  \max_{j \notin \mathcal{S}_1 \cup \mathcal{S}_2 } \omega_{\nu j}) \ge c_{\vartheta'} n^{-\vartheta'}.
\]
\item The number of covariates $p \le C_1 \exp{(n^{b_1})}$ where $0 \le b_1 < \gamma$ with  $\gamma = 1-2\theta$.
\end{enumerate}

The unique top-ranked sets exist and are equal to $\mathcal{S} = \mathcal{S}_1 \cup \mathcal{S}_2$. 

\end{corollary}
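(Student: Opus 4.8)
The plan is to follow the proof of Proposition~1 in \cite{Baranowski2020}, working one margin at a time and coupling the two rankings only at the very end; the role played there by the marginal screening statistic is now played by the Fisher information measure $\omega_{\nu j}$, whose deviation bound [A.1] is supplied by Theorem~\ref{th:upperbound} (read with sample size $n$ in place of the subsample size $m$). Assumption [A.2] makes $s:=|\mathcal{S}|$ a fixed constant, so the target is a set of fixed cardinality; [A.5] controls the ambient dimension $p$; and [A.3]--[A.4] are the signal/noise separation hypotheses.

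\emph{The good event.} Fix $\nu\in\{1,2\}$ and, using [A.4], set $\rho_n:=\tfrac14 c_{\vartheta'} n^{-\vartheta'}$. Since $\vartheta'\le\theta$ we have $n^{-\theta}\le n^{-\vartheta'}$, so with the choice $\lambda_\theta=\tfrac14 c_{\vartheta'}$ in [A.1] one gets $\lambda_\theta n^{-\theta}\le\rho_n$. Put $E_{\nu,n}:=\{\max_{1\le j\le p}|\hat\omega_{\nu j}-\omega_{\nu j}|<\rho_n\}$. A union bound over $j$ together with [A.1] yields
\[
\mathbb{P}(E_{\nu,n}^c)\;\le\;\sum_{j=1}^{p}\mathbb{P}\big(|\hat\omega_{\nu j}-\omega_{\nu j}|\ge\rho_n\big)\;\le\;4\,p\exp(-C_\theta n^{\gamma}),\qquad \gamma=1-2\theta>0 ,
\]
and [A.5] ($p\le C_1\exp(n^{b_1})$ with $b_1<\gamma$) forces $\mathbb{P}(E_{\nu,n}^c)\le 4C_1\exp(n^{b_1}-C_\theta n^{\gamma})\to0$. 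The same bound holds for $E_{1,n}\cap E_{2,n}$ by one more union bound.

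\emph{Correct ranking and uniqueness.} On $E_{\nu,n}$, for every $j\in\mathcal{S}$ and every $j'\in\{1,\dots,p\}\setminus\mathcal{S}$,
\[
\hat\omega_{\nu j}-\hat\omega_{\nu j'}\;>\;(\omega_{\nu j}-\rho_n)-(\omega_{\nu j'}+\rho_n)\;\ge\;\Big(\min_{k\in\mathcal{S}}\omega_{\nu k}-\max_{k\notin\mathcal{S}}\omega_{\nu k}\Big)-2\rho_n\;\ge\;c_{\vartheta'}n^{-\vartheta'}-\tfrac12 c_{\vartheta'}n^{-\vartheta'}\;>\;0,
\]
the penultimate step being [A.4] (note that [A.4] is formulated with respect to the \emph{union} $\mathcal{S}=\mathcal{S}_1\cup\mathcal{S}_2$ separately for each margin). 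Hence the first $s$ entries of $\mathcal{R}_\nu$ coincide, as a set, with $\mathcal{S}$, and therefore with $k_\nu=s$ in \eqref{eq:pi_v} we obtain $\pi_\nu(\mathcal{S})\ge\mathbb{P}(E_{\nu,n})\to1$ while $\pi_\nu(\mathcal{A})\le\mathbb{P}(E_{\nu,n}^c)\to0$ for every $\mathcal{A}\ne\mathcal{S}$ with $|\mathcal{A}|=s$; so for $n$ large $\mathcal{S}$ is the unique maximizer of $\pi_\nu$ among sets of size $s$, simultaneously for $\nu=1,2$ on $E_{1,n}\cap E_{2,n}$. To promote uniqueness among size-$s$ sets to uniqueness of the top-ranked set in the sense needed by the selection rule \eqref{eq:hat_s_v} (and by Corollary~\ref{th:Secondcorollary}), invoke [A.3] exactly as in \cite{Baranowski2020}: since the noise statistics $\{\hat\omega_{\nu j}\}_{j\in\mathcal{M}_a}$ are exchangeable with $|\mathcal{M}_a|\to\infty$, no enlarged set of size $k>s$ can retain limiting probability bounded away from $0$, so $k=s$ is the correct stopping size and $\mathcal{S}$ is the unique top-ranked set.

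\emph{Main obstacle.} The concentration-and-union-bound bookkeeping is routine; the genuinely new part is the bivariate coupling. One must verify that [A.4] is posited relative to $\mathcal{S}_1\cup\mathcal{S}_2$ rather than the margin-specific $\mathcal{S}_\nu$ (so that \emph{both} rankings aim at the same set), that the exchangeability in [A.3] holds jointly in $\nu$, and that the signal-gap rate $n^{-\vartheta'}$ dominates the estimation-error rate $n^{-\theta}$ --- which is precisely where $\vartheta'\le\theta<1/2$ enters. Granting these, the single-ranking argument of \cite{Baranowski2020} glues along the two margins without further loss.
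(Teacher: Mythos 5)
Your proposal follows essentially the same route as the paper's proof: a Bonferroni bound over all $j$ and both margins, combined with the concentration inequality [A.1] and the dimension bound [A.5], builds the good event; the gap condition [A.4] (with $\vartheta'\le\theta$ so that the estimation error $n^{-\theta}$ is dominated by the signal gap $n^{-\vartheta'}$) guarantees that on this event the top $s$ positions of both rankings equal $\mathcal{S}=\mathcal{S}_1\cup\mathcal{S}_2$, giving $\pi(\mathcal{S})\to1$; and the exchangeability assumption [A.3] shows that any size-$(s+1)$ set obtained by appending a noise index has probability at most $1/|\mathcal{M}_a|\to0$. The only piece of the paper's argument you omit is its third part: a pigeonhole step showing that for every $k'<s$ at least one size-$k'$ subset of $\mathcal{S}$ satisfies $\limsup_{n\to\infty}\pi(\mathcal{A}_{k'})\ge 1/\binom{s}{k'}>0$, which the paper combines with $\sum_{\mathcal{A}_s}\pi(\mathcal{A}_s)=1$ to conclude that $\mathcal{S}$ is the unique \emph{maximal} top-ranked set in the sense of \cite{Baranowski2020}; this step is routine given your good event, but it is needed to match the conclusion as phrased and should be stated.
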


\begin{proof}

The proof of this Corollary involve three parts 
\begin{enumerate}[label={ [Pt.\arabic*]}]
    \item Show that \( \pi(\mathcal{S}) = \mathbb{P}(\{R_{11}, \dots, R_{1 s_1}\} = \mathcal{S}_1 \cap \{R_{21}, \dots, R_{2 s_2}\} = \mathcal{S}_2 ) \to 1 \) as \(n \to \infty\).
    \item Consider a ranking \(\mathcal{A}_{s+1}\) where \(\mathcal{A}_{s+1} \setminus \mathcal{S} = \{a\}\), and show that \( \pi(\mathcal{A}_{s+1}) \to 0 \) as \(n \to \infty\).
    \item For each \(k' = 1, \dots, s-1\), demonstrate that there exists some \(\mathcal{A}_{k'}\) such that \(\lim \sup_{n \to \infty} {\pi (\mathcal{A}_{k'})>0}\).
\end{enumerate} 

Before showing what stated in the Corollary, it may be useful to discuss the assumptions [A.1]-[A.5].

\begin{enumerate}[label= { [A.\arabic*]}]
\item asserts that the estimator of the measure is consistent, which is crucial for ensuring that the sets $\mathcal{S}_\nu$, for $\nu = 1,2$, are top-ranked.

\item discusses how $|\mathcal{S}|$ is bounded by $n$, and combined with a diverging number of covariates $p$, this indicates that the number of important covariates is small. Assumptions 2 to 4, taken together, can be seen as the well-known ``sparsity'' assumption.

\item can be associated with the sparsity assumption, where it is required that only a few covariates have a significant impact on the time-to-events.

\item assumes that there is a gap separating the important variables from the remaining ones. This gap is allowed to decrease slowly to zero. Together, Assumptions 1 and 4 imply that the ranking based on the Fisher information measure possesses the sure independent screening property \citep{fan2008}.

\item restricts the maximum number of covariates but permits the ultra-high dimensional setting, where the number of covariates grows exponentially with $n^{b_1}$ for some $b_1 > 0$.

\end{enumerate}
Given these assumptions, the three parts [Pt.1], [Pt.2] and [Pt.3] are individually shown.

[Pt.1]
We aim to show that the probability 
\[
\pi(\mathcal{S}) = \mathbb{P}\left(\{R_{11}, \dots, R_{1 s_1}\} = \mathcal{S}_1 \cap \{R_{21}, \dots, R_{2 s_2}\} = \mathcal{S}_2 \right) \to 1 \text{ as } n \to \infty,
\]
i.e., the bivariate variable ranking allows us to identify two sets of relevant variables for each of the time-to-events. Define the event 
\[
\Xi = \left\{\min_{j \in \mathcal{S}_1} \hat{\omega}_{1j} > \max_{j \notin \mathcal{S}_1} \hat{\omega}_{1j} \cap \min_{j \in \mathcal{S}_2} \hat{\omega}_{2j} > \max_{j \notin \mathcal{S}_2} \hat{\omega}_{2j} \right\},
\]
which can be rewritten as
\[
\Xi = \left\{\min_{j \in \mathcal{S}_1 \cup \mathcal{S}_2} \hat{\omega}_{\nu j} >  \max_{j \notin \mathcal{S}_1 \cup \mathcal{S}_2} \hat{\omega}_{\nu j}\right\}.
\]
Note that if there are no ties in the rankings, the event \(\Xi\) is equivalent to the event \(\{R_{11}, \dots, R_{1 s_1}\} = \mathcal{S}_1 \cap \{R_{21}, \dots, R_{2 s_2}\} = \mathcal{S}_2\). Using Ass. A4, we have
\[
\pi(\mathcal{S}) \geq \mathbb{P}(\Xi) \geq \mathbb{P}\left(\max_{\mathcal{S}_1 \cup \mathcal{S}_2}|\hat{\omega}_{\nu j} - \omega_{\nu j}| < \epsilon\right),
\]
where \(\epsilon = \lambda_{\theta} m^{-\theta}/2\). Recalling the Bonferroni inequality, we can bound the probability above as follows. It is well known that
\[
\mathbb{P}\left(\bigcup_{\nu=1}^{2}\bigcup_{j=1}^{p} \{ |\hat{\omega}_{\nu j} - \omega_{\nu j}| > \epsilon\}\right) \leq \sum_{\nu=1}^{2} \sum_{j=1}^{p} \mathbb{P}(|\hat{\omega}_{\nu j} - \omega_{\nu j}| \geq \epsilon),
\]
taking the complement, we obtain
\[
\mathbb{P}\left(\bigcap_{\nu=1}^{2}\bigcap_{j=1}^{p} \{ |\hat{\omega}_{\nu j} - \omega_{\nu j}| \leq \epsilon\}\right) \geq 1 - \sum_{\nu=1}^{2} \sum_{j=1}^{p} \mathbb{P}(|\hat{\omega}_{\nu j} - \omega_{\nu j}| \geq \epsilon).
\]
Therefore, we have
\[
\mathbb{P}\left(\max_{\mathcal{S}_1 \cup \mathcal{S}_2} |\hat{\omega}_{\nu j} - \omega_{\nu j}| \leq \epsilon\right) \geq 1 - 2p \sup_{\mathcal{S}_1 \cup \mathcal{S}_2} \mathbb{P}\left(|\hat{\omega}_{\nu j} - \omega_{\nu j}| \geq \epsilon\right).
\]

Combining this result with those in Theorem \ref{th:upperbound}, we note that the right-hand side term is of order \(O(-n^\gamma)\). This is sufficient to assert that \(\mathcal{S}_\nu\) are top-ranked sets. It should be noted that compared with the algorithm presented in \cite{Baranowski2020}, the component involving the \(\nu\) sets also comes into play. However, the magnitude on $\nu$ is much smaller than \(p\), and hence asymptotically negligible. 
Note that such reasoning may no longer apply when the algorithm is generalized to an indefinite number $\nu$. This generalization is left for future research. 

\vspace{5pt}

[Pt.2]
Let us consider a ranking $\mathcal{A}_{s+1}=\{ \{1_1, 1_2, \dots, p_1, p_2\}: |\mathcal{A}_{s+1}|=s+1\}$ such that \(\mathcal{A}_{s+1} \setminus \mathcal{S} = \{a\}\). This means that the ranking \(\mathcal{A}_{s+1}\) contains the same elements as \(\mathcal{S}\), except for the element \(a\), which is ranked at the bottom of \(\mathcal{A}_{s+1}\) with associated $\hat{\omega}_a$. We want to prove that \(\pi(\mathcal{A}_{s+1}) \to 0\) as \(n \to \infty\).

We suppose that either \(\mathcal{S}_1\) or \(\mathcal{S}_2\) has no ties, and therefore \(\mathcal{S} = \mathcal{S}_1 \cup \mathcal{S}_2\) also has no ties. The no ties assumption in $\mathcal{S}$ is based on the idea that it is always possible to rank elements in different positions using the proposed measure \(\omega_{\nu j}\).
Noting that $\mathbb{P} \bigg( \{ \min_{j \in \mathcal{A}_{s+1}} \hat{\omega}_{\nu j} >  \max_{j \notin \mathcal{A}_{s+1}} \hat{\omega}_{\nu j} \} \cap \{ \min_{j \in \mathcal{S}} \hat{\omega}_{\nu j} >  \max_{j \notin \mathcal{S}} \hat{\omega}_{\nu j} \} \bigg) $ can be rewritten as
$ \mathbb{P} \bigg(\{ \hat{\omega}_a > \max_{j \notin \mathcal{A}_{s+1}} \hat{\omega}_{\nu j} \} \cap  \Xi\bigg),
$ we can write
\[
\mathbb{P}\bigg( \hat{\omega}_a >  \max_{j \notin \mathcal{A}_{s+1}} \hat{\omega}_{\nu j} \bigg) \leq \mathbb{P}\bigg( \hat{\omega}_a > \max_{j \in \mathcal{M}_a \setminus \{a\}} \hat{\omega}_{\nu j} \bigg).
\]
This last inequality simply denotes that the probability of being excluded from the top-ranked set \(\mathcal{S}\) is greater than the probability of being excluded from a generic set. We also assume that all the elements of \(\mathcal{M}_a\) have an exchangeable distribution. This implies that the variables not contained in \(\mathcal{S}\) are noisy, and as such, the distribution of the associated \(\omega_{\nu j}\) is indistinguishable. Considering that the proposed measure is based on Fisher information, this assumption is justified. We then write 
\[
\mathbb{P}\left(\hat{\omega}_{\nu j^*} > \max_{j^* \in \mathcal{M}_a \setminus \{a\}} \hat{\omega}_{\nu j^*}\right)
\]
and note that for each \(j^*\), we have the same probability (exchangeability). Thus, summing over the probability and using the exchangeability,

\[
\mathbb{P}\left(\hat{\omega}_a > \max_{j \in \mathcal{M}_a \setminus \{a\}} \hat{\omega}_{\nu j}\right) \leq \frac{1}{|\mathcal{M}_a|}
\]

where \( |\mathcal{M}_a| \) denotes the cardinality of \(\mathcal{M}_a \setminus \{a\}\).  Then, as \(n \to \infty\), $\frac{1}{|\mathcal{M}_a|} \to 0$. Therefore
\[
\pi_n(\mathcal{A}_{s+1}) \leq \mathbb{P} \left( \hat{\omega}_a > \max_{j \notin \mathcal{A}_{s+1}} \hat{\omega}_{\nu j} \right) + \mathbb{P}(\Xi^c) \leq \mathbb{P} \left( \hat{\omega}_a > \max_{j \in \mathcal{M}_a \setminus \{a\}} \hat{\omega}_{\nu j} \right) + \mathbb{P}(\Xi^c) \to 0.
\]

\vspace{5pt}
[Pt.3]
For every $k'=1,\dots, s-1$, we show that there exists some $\mathcal{A}_{k'}$ such that $\lim \sup_{n \to \infty} \pi(\mathcal{A}_{k'})>0$. If we sum over the all possible $k'$ such that $\mathcal{A}_{k'} \subset \mathcal{S}$ we have 

\[
\sum_{\{\mathcal{A}_{k'}: \mathcal{A}_{k'} \subset \mathcal{A} \}} \pi (\mathcal{A}_{k'}) \ge \mathbb{P} \bigg(\{ \min_{j \in \mathcal{S}_1 \cup \mathcal{S}_2} \hat{\omega}_{\nu j} >  \max_{j \notin \mathcal{S}_1 \cup \mathcal{S}_2} \hat{\omega}_{\nu j} \}\bigg) \to 0 \text{ as } n\to \infty
\]

this holds for what has been discussed in Pt2. However, there are $\binom{s}{k'}$ possible combinations in $\{\mathcal{A}_{k'}: \mathcal{A}_{k'} \subset \mathcal{A} \}$. Therefore 

\[
\max_{\{\mathcal{A}_k': \mathcal{A}_k' \subset \mathcal{A} \}} \lim_{n \to \infty} \sum \pi(\mathcal{A}_k') \ge \frac{1}{\binom{s}{k'}}
\]

this implies that $\mathcal{S}$ is a top-ranked set, where the uniqueness of $\mathcal{S}$ follows from $\pi(\mathcal{S}) \to 1$ as $n \to \infty$ and $\sum_{\mathcal{A}_s} \pi(\mathcal{A}_s)=1$.

\end{proof}



The Corollary \ref{th:firstcorollary} is a generalization of Proposition 1 presented in \cite{Baranowski2020} to the case where there are two sets of relevant variables, and the Fisher information measure is used for rankings.

\begin{corollary} 
\label{th:Secondcorollary}
Under the following assumptions:

\begin{enumerate}[label={\em [B.\arabic*]}]
\item Let $\mathbf{Z}_1, \dots, \mathbf{Z}_m$ be independent observations. For some $\theta \in [0 , 1/2)$ and any $\lambda_{\theta} > 0$ we have 
    \[
 \mathbb{P}\bigg( |\hat{\omega}_{\nu j}(\mathbf{Z}_1, \dots, \mathbf{Z}_m)- \omega_{\nu j}| \ge \lambda_\theta m ^{-\theta}\bigg)
   \leq 4 \exp\left( - C_{\theta} m^{\gamma} \right), \]

where $\gamma= 1 - 2\theta >0$, $C_{\theta}>0$ and $m>0$ is a function of $n$.

\item There exists constants $C_1>0$, $b_1\in(0, \gamma)$, with $\gamma =1-2\theta$, such that $p\le C_1 \exp{(n^{b_1})}$.

\item The subsample size $m$ is defined as $m \sim n^{b_2}$, with $b_2 \in (0,1)$ and $\gamma b_2- b_1 >0$ with $\gamma=1-2\theta$ and $b_1 \in (0, \gamma)$.

\item  The index sets of important variables defined $\mathcal{S}_\nu \subset \{1_\nu,\dots, p_\nu\}$ with $\mathcal{S}=\mathcal{S}_1 \cup \mathcal{S}_2 =\{1_1, 1_2, \dots, p_1, p_2\}$ does not depend on $n$ or $p$, and could be potentially an empty set.  For any $a \in \mathcal{S}$, there exists $\mathcal{M}_a \subset \{1, \dots, p\} \setminus \mathcal{S}$, such that for any  $a \in \mathcal{M}_a$ the distribution of $\{\hat{\omega}_{\nu j}\}_{j \in \mathcal{M}_a}$ is exchangeable, $|\mathcal{M}_a| \to \infty$ as $n \to \infty$ and $\min_{a \notin \mathcal{S}} |\mathcal{M}_a| \ge C_3 n^{b_3}$ with $C_3 >0$ and $b_3/2< 1-b_2< b_3$ with $b_3 \in (0,1)$.

\item There exists $\vartheta' \in (0, \theta ]$ where $\theta$ is defined in Ass. B1 and $c_{\vartheta'}>0$ such that
\[
( \min_{j \in \mathcal{S}_1 \cup \mathcal{S}_2} \omega_{\nu j}-  \max_{j \notin \mathcal{S}_1 \cup \mathcal{S}_2 } \omega_{\nu j}) \ge c_{\vartheta'} m^{-\vartheta'},
\]
as specified above $m \sim n^{b_2}$ with $b_2 \in (0,1)$.

\item The number of random draws $B$ is bounded in $n$.

\item The maximum subset size $k_{\text{max}} \in [s, C_4 n^{b_4}]$ with $C_4 > 0$ and $b_4$ satisfying $b_3>b_4$, where $b_3/2< 1-b_2< b_3$ with $b_3 \in (0,1)$.
\end{enumerate}

Let $\hat{\mathcal{S}} = \hat{\mathcal{A}}_{\hat{s}}$, where $\hat{\mathcal{A}}_{\hat{s}}$ is given by \eqref{eq:max_pi}. Then for any $\tau \in (0,1]$, there exists a constants $\beta, C_\beta >0$ such that $$\mathbb{P}\bigg( \hat{\mathcal{S}} \ne \mathcal{S}\bigg) = o \bigg(\exp{(- C_{\alpha} n^{\alpha})} \bigg) \to 0$$

as $n \to \infty$.
\end{corollary}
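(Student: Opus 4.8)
The plan is to follow the three-part architecture of the proof of Corollary~\ref{th:firstcorollary} (itself an adaptation of Theorem~1 of \cite{Baranowski2020}), but with two layers stacked on one another: the behaviour of the \emph{population} rankings at the subsample scale $m$, and the resampling fluctuation of the bootstrap estimates $\hat\pi_{\nu,m}$ of \eqref{eq:hat_pi_v} around their targets. Since $\hat{\mathcal{S}}=\hat{\mathcal{A}}_{1,m,\hat s_1}\cup\hat{\mathcal{A}}_{2,m,\hat s_2}$ and $\mathcal{S}=\mathcal{S}_1\cup\mathcal{S}_2$, a union bound reduces the claim to
\[
\mathbb{P}(\hat{\mathcal{S}}\neq\mathcal{S})\;\le\;\sum_{\nu=1,2}\Big[\mathbb{P}\big(\hat{\mathcal{A}}_{\nu,m,s_\nu}\neq\mathcal{S}_\nu\big)+\mathbb{P}(\hat s_\nu\neq s_\nu)\Big],
\]
and it suffices to bound each summand by $C\exp(-C_\alpha n^{\alpha})$ for a suitable $\alpha\in(0,1)$.

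First I would establish the population picture at scale $m\sim n^{b_2}$. Corollary~\ref{th:firstcorollary} applies with $n$ replaced by $m$: its sparsity/gap hypotheses hold at this scale because, by [B.2]--[B.3], $\log p\le n^{b_1}=o(n^{\gamma b_2})=o(m^{\gamma})$ and the separation in [B.5] is of order $m^{-\vartheta'}$ with $\vartheta'\le\theta$; together with Theorem~\ref{th:upperbound} at sample size $m$ this gives $1-\pi_{\nu,m}(\mathcal{S}_\nu)\le 8C_1\exp(n^{b_1}-C_\theta n^{\gamma b_2})$ and, for every $\mathcal{A}$ of size $s_\nu$ with $\mathcal{A}\neq\mathcal{S}_\nu$, $\pi_{\nu,m}(\mathcal{A})\to0$ at the same rate; hence $\mathcal{S}_\nu$ is the strict population maximiser of $\pi_{\nu,m}$ among size-$s_\nu$ sets. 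Using the exchangeability in [B.4], every overshoot set $\mathcal{S}_\nu\cup\{a\}$ satisfies $\pi_{\nu,m}(\mathcal{S}_\nu\cup\{a\})\le 1/|\mathcal{M}_a|\le C_3^{-1}n^{-b_3}$, and overshooting $\mathcal{S}_\nu$ by $j$ slots has probability at most $C_j|\mathcal{M}_a|^{-j}$; while any set $\mathcal{A}$ with $\mathcal{A}\subset\mathcal{S}_\nu$, $|\mathcal{A}|=k<s_\nu$, has $\pi_{\nu,m}(\mathcal{A})\ge\binom{s_\nu}{k}^{-1}(1-o(1))$, a fixed positive constant. Consequently the population analogue of the ratio in \eqref{eq:hat_s_v} is of order $n^{-\tau b_3}\to0$ at $k_\nu=s_\nu$, is bounded below by a positive constant for $k_\nu<s_\nu$, and --- invoking $\tau\in(0,1]$ and the decay hierarchy just stated --- stays of strictly larger order than $n^{-\tau b_3}$ for every $s_\nu<k_\nu\le k_{\max}-1$ (the range being finite by [B.7], $k_{\max}\le C_4 n^{b_4}$); so the population argmin in \eqref{eq:hat_s_v} equals $s_\nu$ with a uniform margin.

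Next I would transfer this ordering to the estimates. Each $\hat\pi_{\nu,m}(\mathcal{A})$ in \eqref{eq:hat_pi_v} averages $B$ independent bootstrap replicates of an $r$-term mean of $\{0,1\}$ indicators over the disjoint size-$m$ subsamples $I_{b_1},\dots,I_{b_r}$; since within each replicate the subsamples are drawn without replacement (hence negatively associated) and $B$ is bounded in $n$ by [B.6], Hoeffding's inequality gives $\mathbb{P}(|\hat\pi_{\nu,m}(\mathcal{A})-\pi_{\nu,m}(\mathcal{A})|>t)\le 2B\exp(-c\,rt^{2})$ with $r=\lfloor n/m\rfloor\sim n^{1-b_2}$, and on the overshoot sets a multiplicative Chernoff bound --- legitimate because $r<|\mathcal{M}_a|<r^{2}$, i.e.\ $b_3/2<1-b_2<b_3$ from [B.4] --- yields the sharper $\exp(-c\,r\log|\mathcal{M}_a|)$. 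Taking a union bound over the subsets that actually enter the argument --- the $O(1)$ subsets of each $\mathcal{S}_\nu$, the $p$ overshoot sets $\mathcal{S}_\nu\cup\{a\}$, and, for each $k_\nu$ up to $k_{\max}$, the argmax candidates --- the combinatorial cost $\log\big((k_{\max}+1)p^{k_{\max}}\big)=O(n^{b_1+b_4})$ is dominated by the resampling exponent thanks to the interplay of the rate constants in [B.2]--[B.4] and [B.7] (in particular $b_3>b_4$). On the resulting event of probability $1-C\exp(-C_\alpha n^{\alpha})$ all the invoked $\hat\pi_{\nu,m}$ lie within $t$ of their targets for a polynomially small $t$, the orderings of the previous paragraph carry over verbatim, and therefore $\hat s_\nu=s_\nu$ and $\hat{\mathcal{A}}_{\nu,m,s_\nu}=\mathcal{S}_\nu$; summing over $\nu=1,2$ finishes the proof.

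The hard part is ruling out $k_\nu>s_\nu$ as the argmin: for such $k_\nu$ both the numerator and the denominator of the ratio in \eqref{eq:hat_s_v} are vanishing, so the quotient is not controlled automatically; one must exploit the exchangeability-driven decay $\pi_{\nu,m}(\text{overshoot by }j)\le C_j|\mathcal{M}_a|^{-j}$ together with $\tau\le1$ to show the quotient is of order at least $|\mathcal{M}_a|^{\,1-2\tau}$ (or larger), hence never undercuts the $n^{-\tau b_3}$ attained at $k_\nu=s_\nu$, and simultaneously choose the tolerance $t$ so that it is small enough not to swamp the $n^{-b_3}$-scale overshoot signal yet large enough to respect the union-bound budget --- which is precisely the role of the balance conditions $b_3/2<1-b_2<b_3$ and $b_3>b_4$. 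A secondary subtlety, absent from \cite{Baranowski2020}, is that the ranking now lives jointly over the two margins $\nu=1,2$; but this only inflates the relevant counts by a factor $O(1)$ and is asymptotically negligible, exactly as already noted in the proof of Corollary~\ref{th:firstcorollary}.
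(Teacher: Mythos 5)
Your proposal is correct and follows essentially the same route as the paper's proof: establish that $\pi_m(\mathcal{S})\to 1$ at rate $1-O(\exp(-n^{b_2\gamma}))$ via [B.1]--[B.3], bound $\Gamma=\max_{\mathcal{A}\not\subset\mathcal{S}}\pi_m(\mathcal{A})=O(n^{-b_3})$ via the exchangeability in [B.4], exploit the resulting separation, and then transfer it to the bootstrap estimates $\hat\pi_{\nu,m}$ by concentration (the paper invokes Lemmas A1 and A.3 of \cite{Baranowski2020} where you apply Hoeffding/Chernoff directly, which is the same content). The only cosmetic differences are that you decompose per margin with a union bound where the paper works jointly with $\mathcal{S}=\mathcal{S}_1\cup\mathcal{S}_2$, and that you spell out the argmin analysis of the ratio \eqref{eq:hat_s_v} that the paper elides by deferring to \cite{Baranowski2020}.
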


Note that the corollary just stated, along with its proof, is a minor result of Theorem 1 from \cite{Baranowski2020}, formulated when considering the set of relevant variables as $\mathcal{S} = \mathcal{S}_1 \cup \mathcal{S}_2$.

\begin{proof}

For convenience, let's set 

\begin{itemize}
    \item $\hat{\omega}_{\nu j}= \hat{\omega}_{\nu j}(\mathbf{Z}_1,\dots, \mathbf{Z}_m)$,
    \item $ \pi(\mathcal{S})= \pi_{m} (\mathcal{S})$,
    \item $\Gamma = \max_{\mathcal{A} \not\subset \mathcal{S}_1 \cup \mathcal{S}_2 : |\mathcal{A}| \le k_{\text{max}}  } \pi_m(\mathcal{A})$.
\end{itemize}

Some key results were already demonstrated in Corollary \ref{cor:FIM}. Here, we aim to show that by using the Fisher information measure to select the relevant variables, we can achieve a clear separation between $\pi(\mathcal{S})$ and $\Gamma$. Specifically, this means that the probability of obtaining the set $\mathcal{S} = \mathcal{S}_1 \cup \mathcal{S}_2$ is distinct from the probability of obtaining a generic set $\mathcal{A}$, indicating that the two events are well separated.


By applying the same reasoning as in Corollary \ref{th:firstcorollary}, we can write:

\[
\pi(\mathcal{S})\ge \mathbb{P}\bigg(\max_{j=1,\dots, p}  | \hat{\omega}_{\nu j}- \omega_{\nu j}| \le \epsilon\bigg) \ge 1- C_{\epsilon} \exp(-m^{\gamma}),
\]
for some $C_{\epsilon} > 0$, noting that we are assuming $p \leq C_1 \exp{(n^{b_1})}$, and that $m \sim n^{b_2}$ with $\gamma b_2 > b_1$ and $b_1 < \gamma$, where $\gamma = 1 - 2\theta$, $\theta \in [0, 1/2)$, we finally note that the following chain of inequalities is derived by considering a fraction of the sample size, namely $m$.  From this, we can conclude that $\pi (\mathcal{S}) = 1 - O(\exp{(-n^{b_2 \gamma})}) \to 1$ as $n \to \infty$.

For every $\mathcal{A}_k$ such that $|\mathcal{A}_k| = k$ with $k \le k_{\text{max}}$, and for at least one element $a \in \mathcal{A}_k \setminus \mathcal{S}$, if there are no ties in the rankings, using a similar argument as in Corollary \ref{th:firstcorollary}, in combination with the assumption that the distribution of $\{\hat{\omega}_{\nu j}\}_{j \in \mathcal{M}_a, \nu = 1, 2}$ is exchangeable, with $\text{min}_{a \notin \mathcal{S}} |\mathcal{M}_a| \ge C_3 n^{b_3}$ for some $C_3 > 0$, and that the maximum subset size is $k_{\text{max}} \in [s, C_4 n^{b_4}]$ with $b_4 < b_3$, we have

\begin{align}
    \pi(\mathcal{A}) \le \mathbb{P}\bigg( \hat{\omega}_a > \max_{j \in \mathcal{M}_a \setminus \mathcal{A}} \hat{\omega}_{\nu j}\bigg) \le \frac{1}{|\mathcal{M}_a-k_{\text{max}}|} \le \frac{1}{|C_3 n^{b_3}-C_4 n^{b_4}|},
\end{align}

where the following chain of inequalities holds even if we assume ties in the ranking. Thus, we conclude that $\Gamma = O(n^{-b_3})$.

To recap, so far we have shown that $\Gamma = O(n^{-b_3})$ and $\pi (\mathcal{S}) = 1 - O(\exp{(-n^{b_2 \gamma})})$, indicating that the two probabilities exhibit different rates of convergence. From these two results, we can observe that for sufficiently large $n$, it is natural to write the following chain of inequalities
\[
\Gamma \le n^{(-b_3+ \Delta)} \le n^{(-b_3+ \Delta)/2} \le \pi(\mathcal{S}),
\]
where $\Delta = (b_3+b_2-1)/2>0$ by assumption. We then define the following events 
\begin{align*}
    \mathcal{C}_k &= \bigcap_{k=1}^{k_{\text{max}}} \bigg\{\ \Gamma < t_1 \bigg\} \\
   \mathcal{B}&=\bigg\{ \hat{\pi}(\mathcal{S})>\frac{1}{2}\bigg\}\\
   \mathcal{C}&=\mathcal{B} \cap \bigcap_{k=1}^{k_{\text{max}}} \mathcal{C}_k,
\end{align*}
where the ultimate goal is to prove that $\mathbb{P}(\mathcal{C}) \to 1$ as $n \to \infty$.  In other words, the separation between the two events may be demonstrated by proving the convergence to one of the probability of the event just defined. This also shows that, as $n \to \infty$, the probability of selecting relevant variables is higher than selecting any set of size $k \le k_{\text{max}}$ of the type $\mathcal{A} \not\subset \mathcal{S}$.

To prove $\mathbb{P} (\mathcal{C}) \to 1$ as $n \to \infty$ when $B>0$ for sufficiently large $n$, we can use the Lemma A1 in \cite{Baranowski2020}. After applying the Bonferroni inequality and noticing that $\pi(\mathcal{S})$ is obtained averaging $B$ copies obtained from a single bootstrap sample we have

\[
\mathbb{P}(\mathcal{B}^c) \le B \exp{(-C'n^{\gamma b_2(1-b_2)/2})}.
\]

To bound the second event, we use another result from \cite{Baranowski2020}, Lemma A.3, which allows us to state that 

\[
\mathbb{P}(\mathcal{C}^c_k)\le \exp{(-C'' n^{1-b_2-b_3/2})}
\]

where $C''$ is a positive constant. For sufficiently large $n$, the right-hand side of the inequality tends to zero, as $1 - b_2 - b_3/2 > 0$.  For further details on the proof of these bounds, we refer to \cite{Baranowski2020}, as in the case of two sets of relevant variables $\mathcal{S} = \mathcal{S}_1 \cup \mathcal{S}_2$, the theoretical steps are exactly the same. Putting everything together, we can conclude that 

\begin{align*}
\mathbb{P}(\mathcal{C}) &\ge 1- k_{\text{max}} \exp{\bigg(-C'' n^{1-b_2-b_3/2}\bigg)}- \exp{\bigg( - C' n^{\gamma b_2 (1-b_2)/2}\bigg)}\\
&\ge 1- C_4n^{b_4} \exp{\bigg(- C'' n^{1-b_2-b_3/2}\bigg)}-\exp{\bigg( - C' n^{\gamma b_2 (1-b_2)/2}\bigg)}\\
&\ge 1-\exp{\bigg( - C_{\alpha} n^{\alpha}\bigg)}, 
\end{align*}

for $\alpha \in (0,1)$ and $C_{\beta} >0$, for sufficiency large $n$ we can conclude that $\mathbb{P}(\mathcal{C})\to \infty$. Notice that from $1/2>t_1$, one concludes that $\hat{\mathcal{A}}_{\hat{s}}= \mathcal{S}$, hence showing that $\hat{s}=s$ and proving the consistency of the estimation procedure, in other words

$$\mathbb{P}\bigg( \hat{\mathcal{S}} \ne \mathcal{S}\bigg) = o \bigg(\exp{(- C_{\alpha} n^{\alpha})} \bigg) \to 0$$

as $n \to \infty$ this concludes the proof.

\end{proof}

The results presented in this section demonstrate that the proposed measure for the class of Copula Link-Based Survival Additive Models \citep{Marra2020} is consistent and aligns with the assumptions of the algorithm. Additionally, through a straightforward generalization of the results presented in \cite{Baranowski2020}, we have shown that our findings on the variable selection are equally consistent.

\clearpage


\section{Simulation study: further results} \label{Ch:Simulation}

The simulation results of the scenarios described in Section \ref{Ch:Main_simulation} of the Main Paper are here further discussed.

Additional comments and evaluations arise from Table \ref{tab:SimMIvsAbs1} to Table \ref{tab:SimMIvsAbs4}. For {\em Scenario A}, it can be noted that the metric $\hat\omega$ always selects the two relevant covariates $\boldsymbol{x}_1$ and $\boldsymbol{x}_2$ of the first margin (Table \ref{tab:SimMIvsAbs1} and \ref{tab:SimMIvsAbs2}) and the percentage of cases where only $\{\boldsymbol{x}_1, \boldsymbol{x}_2\}$ are chosen increases with $n$.\\
Also in the second margin, all three covariates are always selected when $p=100$, even if this percentage slightly decreases when $p=200$ (as expected).
These last results on the second margin are almost replicated when the $\hat\phi$ metric is considered. \\
For the more complex {\em Scenario B}, the results in Tables \ref{tab:SimMIvsAbs3} and \ref{tab:SimMIvsAbs4} show that with the $\hat\omega$ metric the covariates $\boldsymbol{x}_1$ and $\boldsymbol{x}_2$ are always selected for the first margin,
and its performance improves as $n$ increases and $p=100$ (Table \ref{tab:SimMIvsAbs3}).
For the second margin, the three covariates  $\boldsymbol{x}_1$, $\boldsymbol{x}_2$ and $\boldsymbol{x}_3$ are always chosen, and they are frequently the only variables included in $\hat s_2$. Similar good results are presented in Table \ref{tab:SimMIvsAbs4} with $p=200$.\\
If we evaluate the performance of the $\hat\phi$ competing metric in  Scenario B, we can further confirm its lower accuracy (with respect to $\hat\omega$) in both cases, when $p=100$ (Table \ref{tab:SimMIvsAbs3}) and $p=200$ (Table \ref{tab:SimMIvsAbs4}). \\
All previous results further highlight how the selection of the proper metric for a given statistical model can improve the performance of the BRBVS algorithm and consequently the identification of the relevant variables.

\clearpage

  \begin{sidewaysfigure}
        \centering
        \includegraphics[width=1\textwidth]{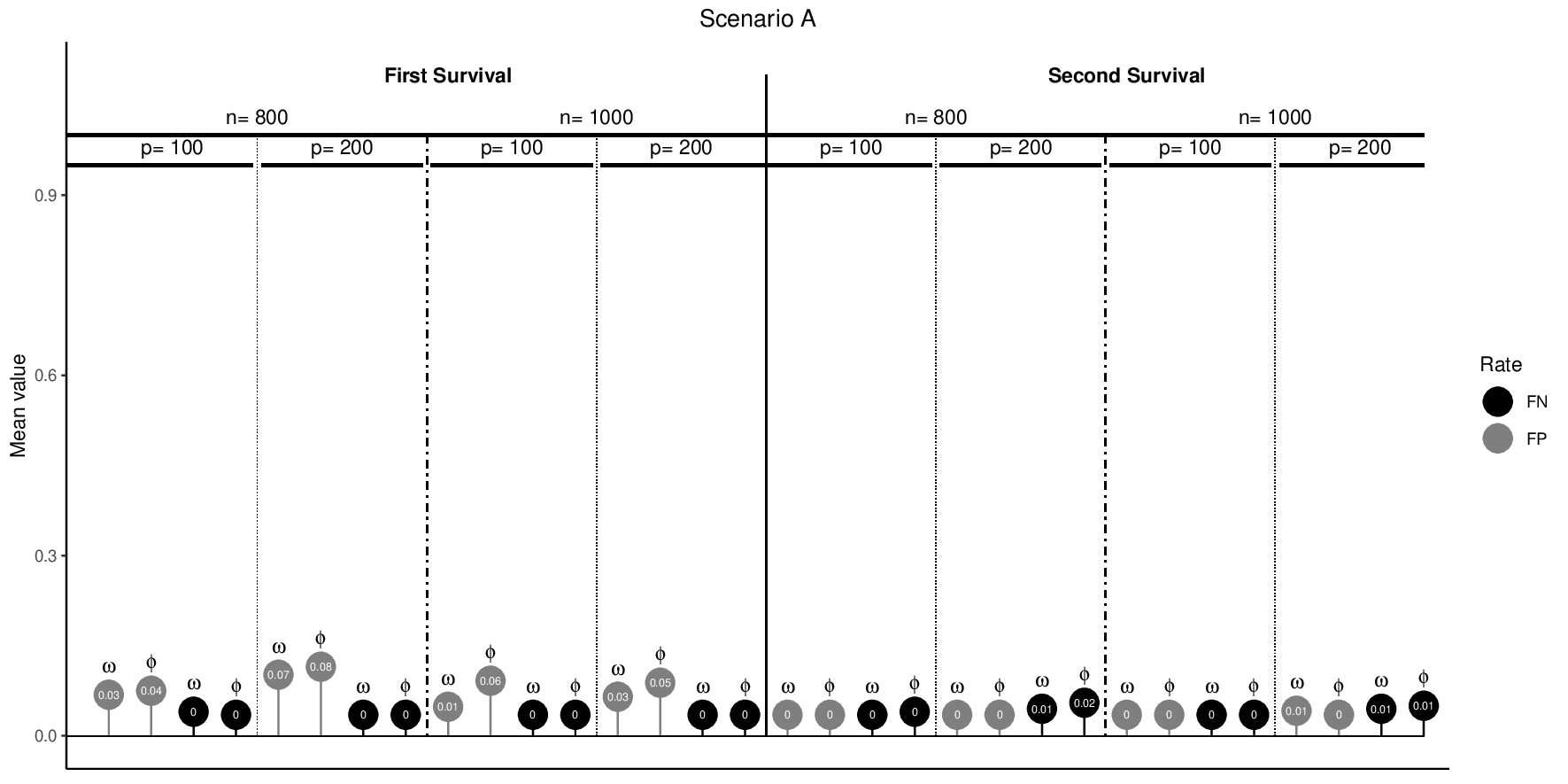}
        \caption{Simulation Results for Scenario A. Mean of False Negatives ($FN_\nu$) and False Positives ($FP_\nu$), in dark gray and light gray respectively, over the $n_{rep}$ Monte Carlo replicates.
        }
        \label{fig:PlotScenarioA}
    \end{sidewaysfigure}

  \begin{sidewaysfigure}
        \centering
        \includegraphics[width=1\textwidth]{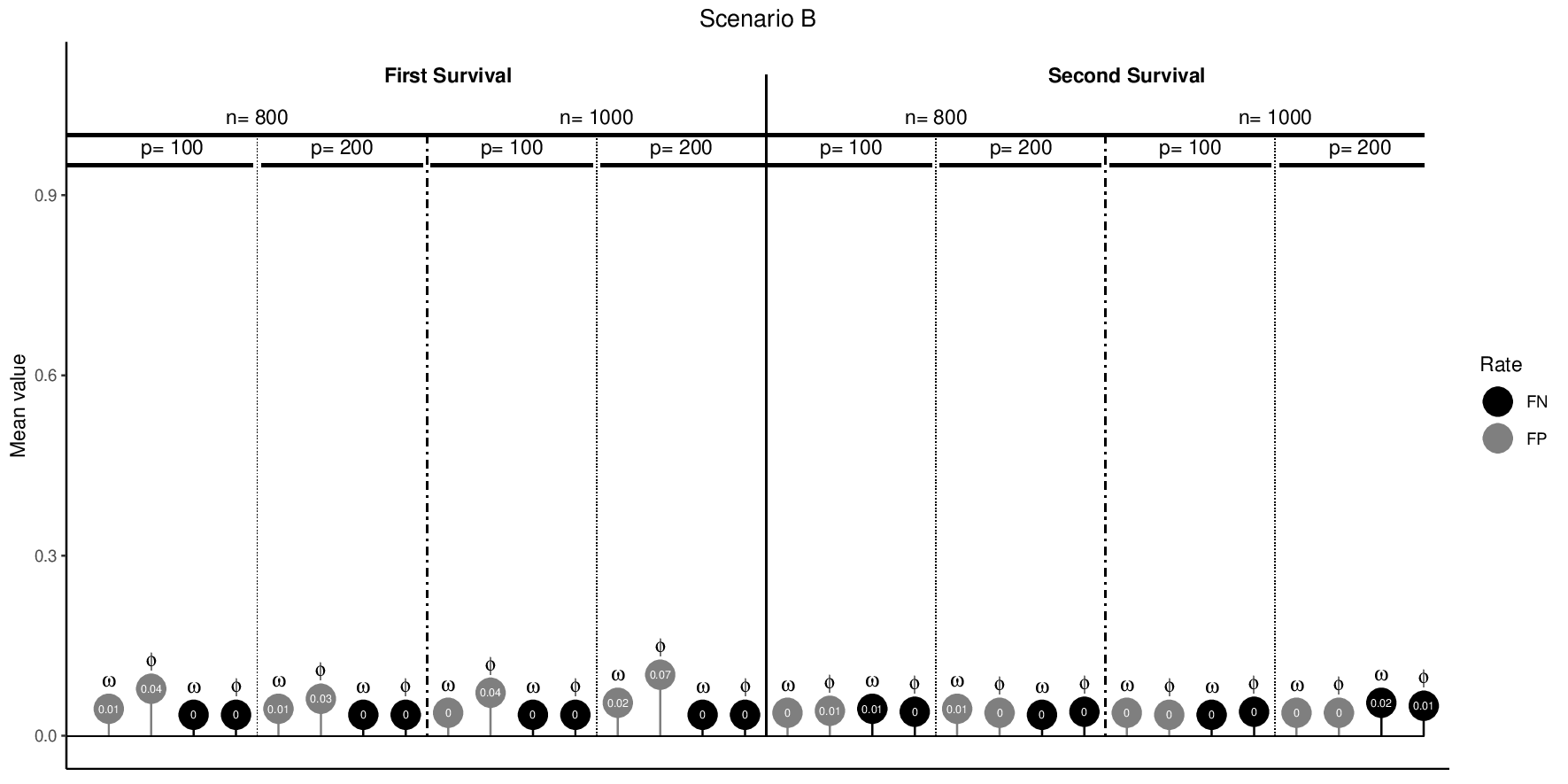}
        \caption{Simulation Results for Scenario B. Mean of False Negatives ($FN_\nu$) and False Positives ($FP_\nu$), in dark gray and light gray respectively, over the $n_{rep}$ Monte Carlo replicates.}
        \label{fig:PlotScenarioB}
    \end{sidewaysfigure}

\clearpage

\begin{table}[ht]

\begin{tabular}{lc|c}

   \multicolumn{3}{c}{$n=800;m=400; p=100$ }\\
  \multicolumn{3}{c}{$\text{\textbf{Scenario A}}: \eta_{3i}=\beta_{30}$ } \\ 
  \multicolumn{3}{c}{First Survival }\\
  \hline 
\hline
\multicolumn{1}{c}{\textbf{Sets }}&\multicolumn{1}{c|}{\textbf{Fisher Information Measure ($\hat\omega$)}} & \multicolumn{1}{c}{\textbf{Absolute value of the coefficients ($\hat\phi$)}} \\
 & \textbf{Freq. (\%)}  
 & \textbf{Freq. (\%)}  \\
 \hline
 \rowcolor{lgray}
  $\text{Sym}(\{\boldsymbol{x}_1$, $\boldsymbol{x}_2\})$ & 86   & 76   \\ 
  $\text{Sym}(\{\boldsymbol{x}_2$, $\boldsymbol{x}_3\})$ & 2   &0    \\ 
 $\text{Sym}(\{\boldsymbol{x}_1$, $\boldsymbol{x}_2$, $\boldsymbol{x}_3\})$ & 8   &24    \\
  $\text{Sym}(\{\boldsymbol{x}_1$, $\boldsymbol{x}_2, \mathbf{x}_j\})$ & 0   &0    \\ 
 $\text{Sym}(\{\boldsymbol{x}_1$, $\boldsymbol{x}_2$, $\boldsymbol{x}_3, \mathbf{x}_j\})$ & 4  &0    \\  

\hline

\multicolumn{3}{c}{ }\\
\multicolumn{3}{c}{Second Survival }\\
\hline
\hline

 & \textbf{Freq. (\%)}  
 & \textbf{Freq. (\%)}  \\
\hline
$\text{Sym}(\{\boldsymbol{x}_1$, $\boldsymbol{x}_3\})$ & 0 &   2   \\ 
\rowcolor{lgray}
 $\text{Sym}(\{\boldsymbol{x}_1$, $\boldsymbol{x}_2$, $\boldsymbol{x}_3\})$ & 100 &  98    \\     
   $\text{Sym}(\{\boldsymbol{x}_1$, $\boldsymbol{x}_2, \mathbf{x}_j\})$ & 0 &   0   \\ 
 $\text{Sym}(\{\boldsymbol{x}_1$, $\boldsymbol{x}_2$, $\boldsymbol{x}_3, \mathbf{x}_j\})$ & 0   &0    \\

 \hline

\end{tabular}
\caption{
Results from simulations highlight the frequency with which relevant sets are selected by the \texttt{BRBVS()} function in \texttt{R}. The notation $\text{Sym}(\cdot)$ signifies all possible permutations; for instance, $\text{Sym}(\{\boldsymbol{x}_1, \boldsymbol{x}_2\})$ includes both $\{\boldsymbol{x}_1, \boldsymbol{x}_2\}$ and $\{\boldsymbol{x}_2, \boldsymbol{x}_1\}$. The inclusion of $\mathbf{x}_j$ within a set indicates the selection of one or more \textit{noise} features by the algorithm. 
}
\label{tab:SimMIvsAbs1}
\end{table}

\begin{table}[ht]
\ContinuedFloat
\begin{tabular}{lc|c}

\multicolumn{3}{c}{ }\\
\multicolumn{3}{c}{ }\\
  \multicolumn{3}{c}{$n=1000;m=500; p=100$ }\\
  \multicolumn{3}{c}{$\text{\textbf{Scenario A}}: \eta_{3i}=\beta_{30}$ } \\ 
  \multicolumn{3}{c}{First Survival }\\
   \hline 
\hline
\multicolumn{1}{c}{\textbf{Sets }}&\multicolumn{1}{c|}{\textbf{Fisher Information Measure ($\hat\omega$)}} & \multicolumn{1}{c}{\textbf{Absolute value of the coefficients ($\hat\phi$)}} \\
 & \textbf{Freq. (\%)}  
 & \textbf{Freq. (\%)}  \\
 \hline
 \rowcolor{lgray}
  $\text{Sym}(\{\boldsymbol{x}_1$, $\boldsymbol{x}_2\})$ &92   &70    \\ 
 $\text{Sym}(\{\boldsymbol{x}_1$, $\boldsymbol{x}_2$, $\boldsymbol{x}_3\})$ &8    &28    \\ 
   $\text{Sym}(\{\boldsymbol{x}_1$, $\boldsymbol{x}_2, \mathbf{x}_j\})$ &0    &0    \\ 
 $\text{Sym}(\{\boldsymbol{x}_1$, $\boldsymbol{x}_2$, $\boldsymbol{x}_3, \mathbf{x}_j\})$ & 0   &2   \\  
\hline

\multicolumn{3}{c}{ }\\
\multicolumn{3}{c}{Second Survival }\\
\hline
\hline

 & \textbf{Freq. (\%)}  
 & \textbf{Freq. (\%)}  \\
\hline
 $\text{Sym}(\{\boldsymbol{x}_1$, $\boldsymbol{x}_3\})$ &  0&   0   \\ 
 \rowcolor{lgray}
 $\text{Sym}(\{\boldsymbol{x}_1$, $\boldsymbol{x}_2$, $\boldsymbol{x}_3\})$ & 100   &100    \\ 
  $\text{Sym}(\{\boldsymbol{x}_1$, $\boldsymbol{x}_2, \mathbf{x}_j\})$ & 0   &0    \\ 
 $\text{Sym}(\{\boldsymbol{x}_1$, $\boldsymbol{x}_2$, $\boldsymbol{x}_3, \mathbf{x}_j\})$ & 0   & 0   \\  
 \hline 
  
\end{tabular}
\caption*{Table \thetable: Cont.} 
\end{table}

\begin{table}[ht]

\begin{tabular}{lc|c}

   \multicolumn{3}{c}{$n=800;m=400;p=200$ }\\
  \multicolumn{3}{c}{$\text{\textbf{Scenario A}}: \eta_{3i}=\beta_{30}$ } \\ 
  \multicolumn{3}{c}{First Survival }\\
  \hline 
\hline
\multicolumn{1}{c}{\textbf{Sets }}&\multicolumn{1}{c|}{\textbf{Fisher Information Measure ($\hat\omega$)}} & \multicolumn{1}{c}{\textbf{Absolute value of the coefficients ($\hat\phi$)}} \\
 & \textbf{Freq. (\%)}  
 & \textbf{Freq. (\%)}  \\
 \hline
 \rowcolor{lgray} 
 $\text{Sym}(\{\boldsymbol{x}_1$, $\boldsymbol{x}_2\})$ & 74   &62    \\ 
 $\text{Sym}(\{\boldsymbol{x}_1$, $\boldsymbol{x}_2$, $\boldsymbol{x}_3\})$ & 14  & 30  \\ 
 $\text{Sym}(\{\boldsymbol{x}_1$, $\boldsymbol{x}_2, \mathbf{x}_j\})$ & 4  & 0   \\ 
 $\text{Sym}(\{\boldsymbol{x}_1$, $\boldsymbol{x}_2$, $\boldsymbol{x}_3, \mathbf{x}_j\})$ & 8  & 8   \\ 

\hline

\multicolumn{3}{c}{ }\\
\multicolumn{3}{c}{Second Survival }\\
\hline
\hline

 & \textbf{Freq. (\%)}  
 & \textbf{Freq. (\%)}  \\
\hline
 $\text{Sym}(\{\boldsymbol{x}_1$, $\boldsymbol{x}_3\})$ & 4   & 8    \\
 \rowcolor{lgray}
 $\text{Sym}(\{\boldsymbol{x}_1$, $\boldsymbol{x}_2$, $ \boldsymbol{x}_3\})$ & 96  & 92   \\ 
  $\text{Sym}(\{\boldsymbol{x}_1$, $\boldsymbol{x}_2, \mathbf{x}_j\})$ & 0   &0  \\ 
 $\text{Sym}(\{\boldsymbol{x}_1$, $\boldsymbol{x}_2$, $\boldsymbol{x}_3, \mathbf{x}_j\})$ &0   &0  \\ 
 \hline
\end{tabular}
\caption{Results from simulations highlight the frequency with which relevant sets are selected by the \texttt{BRBVS()} function in \texttt{R}. The notation $\text{Sym}(\cdot)$ signifies all possible permutations; for instance, $\text{Sym}(\{\boldsymbol{x}_1, \boldsymbol{x}_2\})$ includes both $\{\boldsymbol{x}_1, \boldsymbol{x}_2\}$ and $\{\boldsymbol{x}_2, \boldsymbol{x}_1\}$. The inclusion of $\mathbf{x}_j$ within a set indicates the selection of one or more \textit{noise} features by the algorithm. 
}
\label{tab:SimMIvsAbs2}
\end{table}

\begin{table}[ht]
\ContinuedFloat
\begin{tabular}{lc|c}
\multicolumn{3}{c}{ }\\
\multicolumn{3}{c}{ }\\
  \multicolumn{3}{c}{$n=1000;m=500;p=200$ }\\
  \multicolumn{3}{c}{$\text{\textbf{Scenario A}}: \eta_{3i}=\beta_{30}$ } \\ 
  \multicolumn{3}{c}{First Survival }\\
   \hline 
\hline
\multicolumn{1}{c}{\textbf{Sets }}&\multicolumn{1}{c|}{\textbf{Fisher Information Measure ($\hat\omega$)}} & \multicolumn{1}{c}{\textbf{Absolute value of the coefficients ($\hat\phi$)}} \\
 & \textbf{Freq. (\%)}  
 & \textbf{Freq. (\%)}  \\
 \hline
 \rowcolor{lgray}
 $\text{Sym}(\{\boldsymbol{x}_1$, $\boldsymbol{x}_2\})$ & 88   &70   \\ 
 $\text{Sym}(\{\boldsymbol{x}_1$, $\boldsymbol{x}_2$, $\boldsymbol{x}_3$\}) & 6   & 28   \\
 $\text{Sym}(\{\boldsymbol{x}_1$, $\boldsymbol{x}_2, \mathbf{x}_j\})$ & 0  & 0   \\ 
 $\text{Sym}(\{\boldsymbol{x}_1$, $\boldsymbol{x}_2$, $\boldsymbol{x}_3, \mathbf{x}_j\})$ & 6  & 2  \\ 
\hline

\multicolumn{3}{c}{ }\\
\multicolumn{3}{c}{Second Survival }\\
\hline
\hline

 & \textbf{Freq. (\%)}  
 & \textbf{Freq. (\%)}  \\
\hline
     $\text{Sym}(\{\boldsymbol{x}_1$, $\boldsymbol{x}_3\})$ & 4  &   6   \\ 
     \rowcolor{lgray}
 $\text{Sym}(\{\boldsymbol{x}_1$, $\boldsymbol{x}_2$, $\boldsymbol{x}_3\})$ & 94 &   94   \\ 
   $\text{Sym}(\{\boldsymbol{x}_1$, $\boldsymbol{x}_2, \mathbf{x}_j\})$ & 0   &0   \\ 
 $\text{Sym}(\{\boldsymbol{x}_1$, $\boldsymbol{x}_2$, $\boldsymbol{x}_3, \mathbf{x}_j\})$ & 2 &  0   \\  
 \hline 
  
\end{tabular}

\caption*{Table \thetable: Cont.}
\end{table}

\begin{table}[ht]

\begin{tabular}{lc|c}

   \multicolumn{3}{c}{$n=800;m=400;p=100$ }\\
   \multicolumn{3}{c}{$\text{\textbf{Scenario B}}: \eta_{3i}=\beta_{31}x_{1i}+\beta_{32}x_{2i}+\beta_{33}x_{3i} $} \\ 
  \multicolumn{3}{c}{First Survival }\\
  \hline 
\hline
\multicolumn{1}{c}{\textbf{Sets }}&\multicolumn{1}{c|}{\textbf{Fisher Information Measure ($\hat\omega$)}} & \multicolumn{1}{c}{\textbf{Absolute value of the coefficients ($\hat\phi$)}} \\
 & \textbf{Freq. (\%)}  
 & \textbf{Freq. (\%)}  \\
 \hline
 \rowcolor{lgray}
 $\text{Sym}(\{\boldsymbol{x}_1$, $\boldsymbol{x}_2\})$ & 94 &   74    \\ 
 $\text{Sym}(\{\boldsymbol{x}_1$, $\boldsymbol{x}_2$, $\boldsymbol{x}_3\})$ & 6  & 26   \\ 

  $\text{Sym}(\{\boldsymbol{x}_1$, $\boldsymbol{x}_2 ,\mathbf{x}_j\})$ &0  &0     \\ 
 $\text{Sym}(\{\boldsymbol{x}_1$, $\boldsymbol{x}_2$, $\boldsymbol{x}_3, \mathbf{x}_j\})$ &0   &0    \\ 

\hline

\multicolumn{3}{c}{ }\\
\multicolumn{3}{c}{Second Survival }\\
\hline
\hline

 & \textbf{Freq. (\%)}  
 & \textbf{Freq. (\%)}  \\
\hline
 $\text{Sym}(\{\boldsymbol{x}_1$, $\boldsymbol{x}_3\})$ &4   & 2  \\ 
 \rowcolor{lgray}
 $\text{Sym}(\{\boldsymbol{x}_1$, $\boldsymbol{x}_2$, $\boldsymbol{x}_3\})$ & 94   & 94   \\   
  $\text{Sym}(\{\boldsymbol{x}_1$, $\boldsymbol{x}_2, \mathbf{x}_j\})$ & 0   &0    \\ 
 $\text{Sym}(\{\boldsymbol{x}_1$, $\boldsymbol{x}_2$, $\boldsymbol{x}_3, \mathbf{x}_j\})$ &2    &4    \\

 \hline
\end{tabular}
\caption{
Results from simulations highlight the frequency with which relevant sets are selected by the \texttt{BRBVS()} function in \texttt{R}. The notation $\text{Sym}(\cdot)$ signifies all possible permutations; for instance, $\text{Sym}(\{\boldsymbol{x}_1, \boldsymbol{x}_2\})$ includes both $\{\boldsymbol{x}_1, \boldsymbol{x}_2\}$ and $\{\boldsymbol{x}_2, \boldsymbol{x}_1\}$. The inclusion of $\mathbf{x}_j$ within a set indicates the selection of one or more \textit{noise} features by the algorithm. 
}
\label{tab:SimMIvsAbs3}
\end{table}

\begin{table}[ht]
\ContinuedFloat 
\begin{tabular}{lc|c}

\multicolumn{3}{c}{ }\\
\multicolumn{3}{c}{ }\\
  \multicolumn{3}{c}{$n=1000;m=500;p=100$ }\\
  \multicolumn{3}{c}{$\text{\textbf{Scenario B}}: \eta_{3i}=\beta_{31}x_{1i}+\beta_{32}x_{2i}+\beta_{33}x_{3i} $} \\  
  \multicolumn{3}{c}{First Survival }\\
   \hline 
\hline
\multicolumn{1}{c}{\textbf{Sets }}&\multicolumn{1}{c|}{\textbf{Fisher Information Measure ($\hat\omega$)}} & \multicolumn{1}{c}{\textbf{Absolute value of the coefficients ($\hat\phi$)}} \\
 & \textbf{Freq. (\%)}  
 & \textbf{Freq. (\%)}  \\
 \hline
 \rowcolor{lgray}
 $\text{Sym}(\{\boldsymbol{x}_1$, $\boldsymbol{x}_2\})$ & 98  &  78   \\ 
 $\text{Sym}(\{\boldsymbol{x}_1$, $\boldsymbol{x}_2$, $\boldsymbol{x}_3\})$ & 2   &22    \\ 
  $\text{Sym}(\{\boldsymbol{x}_1$, $\boldsymbol{x}_2, \mathbf{x}_j\})$ & 0   & 0   \\ 
 $\text{Sym}(\{\boldsymbol{x}_1$, $\boldsymbol{x}_2$, $\boldsymbol{x}_3, \mathbf{x}_j\})$ &0    & 0   \\ 
\hline

\multicolumn{3}{c}{ }\\
\multicolumn{3}{c}{Second Survival }\\
\hline
\hline

 & \textbf{Freq. (\%)}  
 & \textbf{Freq. (\%)}  \\
\hline
  $\text{Sym}(\{\boldsymbol{x}_1$, $\boldsymbol{x}_3\})$ &0   & 2   \\ 
  \rowcolor{lgray}
 $\text{Sym}(\{\boldsymbol{x}_1$, $\boldsymbol{x}_2$, $\boldsymbol{x}_3\})$ &98   & 98   \\  
  $\text{Sym}(\{\boldsymbol{x}_1$, $\boldsymbol{x}_2, \mathbf{x}_j\})$ & 0 &   0   \\ 
 $\text{Sym}(\{\boldsymbol{x}_1$, $\boldsymbol{x}_2$, $\boldsymbol{x}_3, \mathbf{x}_j\})$ &2    & 0   \\ 
 \hline

\end{tabular}
\caption*{Table \thetable: Cont.} 
\end{table}

\begin{table}[ht]

\begin{tabular}{lc|c}

   \multicolumn{3}{c}{$n=800;m=400;p=200$ }\\
\multicolumn{3}{c}{$\text{\textbf{Scenario B}}: \eta_{3i}=\beta_{31}x_{1i}+\beta_{32}x_{2i}+\beta_{33}x_{3i} $} \\ 
  \multicolumn{3}{c}{First Survival }\\
  \hline 
\hline
\multicolumn{1}{c}{\textbf{Sets }}&\multicolumn{1}{c|}{\textbf{Fisher Information Measure ($\hat\omega$)}} & \multicolumn{1}{c}{\textbf{Absolute value of the coefficients ($\hat\phi$)}} \\
 & \textbf{Freq. (\%)}  
 & \textbf{Freq. (\%)}  \\
 \hline
 \rowcolor{lgray}
  $\text{Sym}(\{\boldsymbol{x}_1$, $\boldsymbol{x}_2\})$ &94    &84    \\ 
$\text{Sym}(\{\boldsymbol{x}_1$, $\boldsymbol{x}_2$, $\boldsymbol{x}_3\})$ &6   &16    \\
 $\text{Sym}(\{\boldsymbol{x}_1$, $\boldsymbol{x}_2, \mathbf{x}_j\})$ &0    &0    \\ 
 $\text{Sym}(\{\boldsymbol{x}_1$, $\boldsymbol{x}_2$, $\boldsymbol{x}_3, \mathbf{x}_j\})$ &0    &0    \\

\hline

\multicolumn{3}{c}{ }\\
\multicolumn{3}{c}{Second Survival }\\
\hline
\hline

 & \textbf{Freq. (\%)}  
 & \textbf{Freq. (\%)}  \\
\hline
$\text{Sym}(\{\boldsymbol{x}_1$, $\boldsymbol{x}_3\})$ & 0   & 2  \\ 
\rowcolor{lgray}
 $\text{Sym}(\{\boldsymbol{x}_1$, $\boldsymbol{x}_2$, $\boldsymbol{x}_3\})$ &96    &96    \\
  $\text{Sym}(\{\boldsymbol{x}_1$, $\boldsymbol{x}_2, \mathbf{x}_j\})$ & 0   &0    \\ 
 $\text{Sym}(\{\boldsymbol{x}_1$, $\boldsymbol{x}_2$, $\boldsymbol{x}_3, \mathbf{x}_j\})$ & 4   & 2   \\

 \hline

 \end{tabular}
\caption{
Results from simulations highlight the frequency with which relevant sets are selected by the \texttt{BRBVS()} function in \texttt{R}. The notation $\text{Sym}(\cdot)$ signifies all possible permutations; for instance, $\text{Sym}(\{\boldsymbol{x}_1, \boldsymbol{x}_2\})$ includes both $\{\boldsymbol{x}_1, \boldsymbol{x}_2\}$ and $\{\boldsymbol{x}_2, \boldsymbol{x}_1\}$. The inclusion of $\mathbf{x}_j$ within a set indicates the selection of one or more \textit{noise} features by the algorithm. 
}
\label{tab:SimMIvsAbs4}
 \end{table}

\begin{table}[ht]
\ContinuedFloat 
\begin{tabular}{lc|c}

\multicolumn{3}{c}{ }\\
\multicolumn{3}{c}{ }\\
  \multicolumn{3}{c}{$n=1000;m=500;p=200$ }\\
  \multicolumn{3}{c}{$\text{\textbf{Scenario B}}: \eta_{3i}=\beta_{31}x_{1i}+\beta_{32}x_{2i}+\beta_{33}x_{3i} $} \\ 
  \multicolumn{3}{c}{First Survival }\\
   \hline 
\hline
\multicolumn{1}{c}{\textbf{Sets }}&\multicolumn{1}{c|}{\textbf{Fisher Information Measure ($\hat\omega$)}} & \multicolumn{1}{c}{\textbf{Absolute value of the coefficients ($\hat\phi$)}} \\
 & \textbf{Freq. (\%)}  
 & \textbf{Freq. (\%)}  \\
 \hline
 \rowcolor{lgray}
 $\text{Sym}(\{\boldsymbol{x}_1$, $\boldsymbol{x}_2\})$ & 92   &66   \\ 
 $\text{Sym}(\{\boldsymbol{x}_1$, $\boldsymbol{x}_2$, $\boldsymbol{x}_3\})$ & 6   &30    \\ 
  $\text{Sym}(\{\boldsymbol{x}_1$, $\boldsymbol{x}_2, \mathbf{x}_j\})$ &2    & 0   \\ 
 $\text{Sym}(\{\boldsymbol{x}_1$, $\boldsymbol{x}_2$, $\boldsymbol{x}_3, \mathbf{x}_j\})$ & 0   & 4   \\ 
\hline

\multicolumn{3}{c}{ }\\
\multicolumn{3}{c}{Second Survival }\\
\hline
\hline

 & \textbf{Freq. (\%)}  
 & \textbf{Freq. (\%)}  \\
\hline
  $\text{Sym}(\{\boldsymbol{x}_1$, $\boldsymbol{x}_3\})$ & 8   &6    \\ 
  \rowcolor{lgray}
 $\text{Sym}(\{\boldsymbol{x}_1$, $\boldsymbol{x}_2$, $\boldsymbol{x}_3\})$ & 90    &92    \\
   $\text{Sym}(\{\boldsymbol{x}_1$, $\boldsymbol{x}_2, \mathbf{x}_j\})$ &0    & 0   \\ 
 $\text{Sym}(\{\boldsymbol{x}_1$, $\boldsymbol{x}_2$, $\boldsymbol{x}_3, \mathbf{x}_j\})$ & 2    & 2   \\ 
 \hline 
 \hline
  
\end{tabular}
\caption*{Table \thetable: Cont.}

\end{table}

\clearpage
\subsection{Mutual Information vs Fisher Information Measure}
\label{ch:simMIvsFIM}

To empirically assess the performance of Mutual Information (MI) within the BRBVS algorithm framework and compare it to the performance of the proposed measure \eqref{eq:FisherMeasure} based on the Fisher Information, a Monte Carlo simulation study is conducted. Specifically, the study considers \(n_{rep} = 100\) replicates, with \(n = 800\) units and $p = \{100, 200\}$ covariates. The matrices \(\boldsymbol{X}\) and \(\boldsymbol{Z}\) are generated according to Section \ref{Ch:Simulation}.

Once the data are generated and both noisy and informative covariates are merged together with the margins and survival functions, the rank of each covariate with the two aforementioned measures is evaluated.

For the dependence margin, we consider the simpler setup of \textbf{Scenario A}: $\eta_{3i}=\beta_{30}=1.2$. We set the proportion of right censoring at 11\% for the first time-to-event and 32\% for the second. As previously mentioned, our analysis involves two distinct measures:
\begin{itemize}
    \item[(a)] the measure based on Fisher Information, given by \eqref{eq:FisherMeasure}; 
    \item[(b)] the Mutual Information, denoted as \( MI_{\nu j}(T_\nu, X_j)=\int_{[0,1]^2}
	C(\mathbf{u})\log{C(\mathbf{u})} d \mathbf{u} \), where $ \mathbf{u}=\big[F_{\eta_\nu}(z), $ $ F_{X_j}(x)\big] $ is a vector of cumulative distribution functions with associated copula density \( c(\mathbf{u})=\frac{\partial^2 C(\mathbf{u})}{\partial F_{\eta_\nu}(z)  \partial  F_{X_j}(x)} \),
 for $j=1,\ldots, p$ and $\nu=1,2$.
\end{itemize}

The Mutual Information is estimated by using the \texttt{copent} package in \texttt{R}, as in \cite{ma2022copula}.

Subsequently, for each \( j \in \{1, \dots, p\}\), the relative frequency, over the $n_{rep}$ Monte Carlo replicates, with which the \( j^{\text{th}} \) covariate occurred in a given position is assessed.

In Table \ref{tab:SimMIvsFIM} the results are summarized evaluating the ranking assigned to the selected covariates of both margins.  It can be clearly noted the better performance of the $\hat\omega$ measure with respect to the measure based on the MI ($\widehat{ MI}$). For both margins, the percentage of the cases where the relevant variables are correctly selected is quite high when $\hat{\omega}$ is used. On the contrary, the MI measure discards a relevant variable in the first margin and further includes two irrelevant variables, even if with low frequency.
The presented results clearly show the weakness of the MI for variable selection in the presence of bivariate survival models, also under the simple model of Scenario A. 

Given the weakness of the MI measure in {\em Scenario A}, we do not show the results for the more complex {\em Scenario B}.

\begin{table}[ht]
\centering
\begin{tabular}{cccc||cccc}
\multicolumn{8}{c}{$n=800, p=100$}\\
\hline
\multicolumn{4}{|c||}{\textbf{Fisher Information Measure ($\hat\omega$)}} & \multicolumn{4}{c|}{\textbf{Mutual Information ($\widehat{MI}$)}} \\
\hline
\textbf{Position} & \textbf{Features} & \textbf{Freq. (\%)} & \textbf{Margin} & \textbf{Position} & \textbf{Features} & \textbf{Freq. (\%)} & \textbf{Margin} \\
\hline

$1^{\text{st}}$ & $\boldsymbol x_{2}$  & $98\%$  & First   & $1^{\text{st}}$ & $\boldsymbol x_{2}$  & $20\%$ & First   \\
$2^{\text{nd}}$ & $\boldsymbol x_{1}$  & $98\%$ & First   & $2^{\text{nd}}$ & $\boldsymbol{x}_{20}$ & $2\%$  & First   \\
   - &   -  &   -  &                 -            & $3^{\text{rd}}$ & $\boldsymbol x_{19}$ & $2\%$  & First   \\
    \hline
    \hline
$1^{\text{st}}$ & $\boldsymbol x_{3}$  & $72\%$  & Second  & $1^{\text{st}}$ & $\boldsymbol x_{1}$  & $50\%$ & Second  \\
$2^{\text{nd}}$ & $\boldsymbol x_{1}$  & $98\%$  & Second  & $2^{\text{nd}}$ & $\boldsymbol x_{3}$  & $54\%$ & Second  \\
$3^{\text{rd}}$ & $\boldsymbol x_2$  & $100\%$ & Second  & $3^{\text{rd}}$ & $\boldsymbol x_2$  & $64\%$ & Second  \\
\hline
\hline
\multicolumn{8}{c}{}\\
\multicolumn{8}{c}{$n=800, p=200$}\\
\hline
\multicolumn{4}{|c||}{\textbf{Fisher Information Measure ($\hat\omega$)}} & \multicolumn{4}{c|}{\textbf{Mutual Information ($\widehat{MI}$)}} \\
\hline
\textbf{Position} & \textbf{Features} & \textbf{Freq. (\%)} & \textbf{Margin} & \textbf{Position} & \textbf{Features} & \textbf{Freq. (\%)} & \textbf{Margin} \\
\hline

$1^{\text{st}}$ & $\boldsymbol x_2$  & $100\%$  & First   & $1^{\text{st}}$ & $\boldsymbol x_{182}$  & $6\%$ & First   \\
$2^{\text{nd}}$ & $\boldsymbol x_{1}$  & $98\%$ & First   & $2^{\text{nd}}$ & $\boldsymbol x_{2}$ & $2\%$  & First   \\
   - &   -  &   -  &                 -            & $3^{\text{rd}}$ & $\boldsymbol x_{1}$ & $2\%$  & First   \\
    \hline
    \hline
$1^{\text{st}}$ & $\boldsymbol x_{3}$  & $58\%$  & Second  & $1^{\text{st}}$ & $\boldsymbol x_{3}$  & $52\%$ & Second  \\
$2^{\text{nd}}$ & $\boldsymbol x_{1}$  & $90\%$  & Second  & $2^{\text{nd}}$ & $\boldsymbol x_{1}$  & $66\%$ & Second  \\
$3^{\text{rd}}$ & $\boldsymbol x_2$  & $100\%$ & Second  & $3^{\text{rd}}$ & $\boldsymbol x_2$  & $48\%$ & Second  \\
\hline
\end{tabular}
\caption{Monte Carlo Simulation results $p=\{100,200\}$. Comparison of the covariates ranking obtained from the two measures based on the Mutual Information ($\widehat{MI}$) and on the Fisher Information ($\hat \omega$) respectively. The relative frequency (in percentage) of  observing the \( j ^{\text{th}}\) covariate in the top three positions is evaluated for the two survivals; the covariates with percentage lower than \( 1\% \) are disregarded. 
 }
\label{tab:SimMIvsFIM}
\end{table}

\clearpage

\section{Bivariate Variable Selection via \texttt{BRBVS} \texttt{R}-package}
\label{ch:Rsoftware}
\subsection{The BRBVS Algorithm}
The Bivariate Ranking-Based variable Selection (BRBVS) method is implemented in a newly developed \texttt{R} package called \texttt{BRBVS}. This package offers both textual and visual outputs, enhancing usability for practitioners.

The BRBVS algorithm can be sketched in four main steps as in Algorithm \ref{alg:BRBVS}: the first two for the variable ranking and the last two for the variable selection.

\begin{algorithm}
 \caption{BRBVS Algorithm.}\label{alg:BRBVS}
 {{\bf Input:} $BRBVS(\boldsymbol Z, B, m, k_{max}, \tau$)}
\BlankLine
 \# $\boldsymbol Z$: $[n\times (p+2)]$ data matrix with $\mathbf{z}_i=(t_{i1}, t_{i2}, x_{i1}, \ldots, x_{ip})$;\\
 \# $B$: bootstrap replicates;\\
 \# $m$: subsets size;\\
 \# $k_{max}$: maximum number of covariates in each $\eta_\nu$, for $\nu=1,2$;\\
 \# $\tau$: fixed value to calculate the ratio \eqref{eq:hat_s_v}\\
 \For{$b\leftarrow 1$ \KwTo $B$}{
 {\bf Step 1:} draw (uniformly without replacement) $r$ subsamples from $\boldsymbol Z$, with $r=n/m$;\\
 {\bf Step 2:} estimate the weights $\hat\omega_{\nu j}(\boldsymbol Z)$ for $j=1,\ldots, p$ and $\nu=1, 2$ and rank the corresponding variables
}
{\bf Step 3}: estimate the probabilities $\hat\pi_{\nu, m}(\mathcal A_v)$ in \eqref{eq:hat_pi_v}, for $\nu=1,2$\\
{\bf Step 4:} given $\tau$, select the sets of important covariates $\hat s_\nu$, for $\nu=1,2$\\
{{\bf Output:} $(\hat s_1, \hat s_2)$}
\end{algorithm}

\begin{remark}
 Note that the algorithm requires to fix three parameters: $(B,\tau, k_{\max})$. The number of bootstrap replicates $B$ is defined to jointly combine the accuracy of the estimates and the control of the computational burden; the value assigned to $\tau$ is not evaluated critical for the algorithm in \cite{Baranowski2020}, mainly when $\tau$ is not too close to zero. For this reason it is here fixed to $0.5$ (following the cited literature). Finally, the value $k_{\max}$ needs to be selected, also trying to reduce the computational weight of the algorithm. We recommend to start with $k_{\max}<\min(n,p)$ and if the number of selected variables is exactly $k_{\max}$, this value needs to be increased until $\max(|\hat s_1|, |\hat s_2|)<k_{max}$. $\qed$
 \end{remark}

\subsection{The BRBVS Algorithm applied to the AREDS data}
The Age-Related Eye Disease Study (AREDS) is a comprehensive, multi-center, long-term study aimed at understanding the clinical progression of age-related macular degeneration (AMD) and cataract. The study encompasses both observational research and clinical trials evaluating the efficacy of high-dose vitamin and mineral supplements in treating AMD and cataracts. Participants aged between 55 to 80 years at the time of enrollment are selected based on their ocular condition, excluding those with any significant illness or condition affecting long-term participation or medication adherence. The original study successfully enrolled 4,757 participants across various stages of AMD, monitored through extensive visual and ophthalmological assessments. The clinical trials within AREDS, which last for a median of 6.5 years, are followed by an additional five-year period of natural history data collection \citep{AREDSstudy1999}.\\
The primary event of interest in this study is the late-stage progression of AMD, a leading cause of blindness in developed nations.

 The dataset is subject to various types of censoring due to the periodic nature of assessments conducted semi-annually for the initial six years and annually thereafter. Specifically, it exhibits right censoring in \(46\%\) and \(44\%\) of cases and interval censoring in \(53\%\) and \(55\%\) of cases for the first and second eye respectively, \(22\%\) of the data are affected by mixed censoring.

The variables included in the AREDS dataset are described in Table \ref{tab:AREDS}.

\begin{sidewaystable}
\centering
\footnotesize
\begin{tabular}{@{}lcl@{}}
  \toprule
  \textbf{Features} & \textbf{Allowed values} & \textbf{Description} \\ 
  \midrule
  \texttt{SevScale1E} & $4-9$ & Severity scale associated with the right eye \\
  \texttt{SevScale2E} & $4-9$ & Severity scale associated with the left eye\\  
  \texttt{ENROLLAGE} & $1-99$ & Age at baseline\\ 
  \texttt{rs2284665} & $0, 1, 2$ & SNP covariate highly associated with late-AMD progression\\ 
  \texttt{enroll\_day1E} & decimal & Start of follow-up in days right eye \\
  \texttt{AMD\_recurrence1E} & decimal & Time to recurrence or last follow-up in days right eye\\
  \texttt{enroll\_day2E} & decimal & Start of follow-up in days left eye\\
  \texttt{AMD\_recurrence2E} & decimal & Time to recurrence or last follow-up in days left eye\\
  \texttt{censoring\_status1E} & recurrence, no recurrence & Recurrence censoring variable right eye\\
  \texttt{censoring\_status2E} & recurrence, no recurrence & Recurrence censoring variable left eye\\
  \bottomrule
\end{tabular}
\caption{AREDS data.}
\label{tab:AREDS}
\end{sidewaystable}

Below is an example of a function call utilized in the data analysis in Section \ref{Ch:Application}.

\begin{verbatim}
install.packages('BRBVS')
library(BRBVS)
data(AREDS)
Y<- AREDS[,c('t11','t12', 't21', 't22', 'cens1', 'cens2', 'cens')]
X<- AREDS[,c(3,4,5, 9)]
X$SevScale1E <- scale(as.numeric( X$SevScale1E))
X$SevScale2E <- scale(as.numeric(X$SevScale1E))
X$GG<- ifelse(X$rs2284665==0,1,0)
X$GT<- ifelse(X$rs2284665==1,1,0)
X$TT<- ifelse(X$rs2284665==2,1,0)
X$ENROLLAGE <- scale( X$ENROLLAGE)


Bivrbvs<- BRBVS(y=Y, x=X, 
                kmax=10, # maximum number of covariates in the sets
                tau=0.5, # threshold parameter usually in (0,1]
                n.rep=100, # number of bootstrap 
                copula='PL', # copula function
                margins=c('PO','PO'), #margins
                m=nrow(AREDS)/2 , # subsample
                metric='FIM' # Measure: Fisher information Measure
                )

summary(Bivrbvs)

Sets of Relevant Covariates
================================

Metric: FIM 
kmax: 10
Copula: PL 
Margins: PO PO 

================================

Survival Function  1 :
  -  1nd: SevScale1E (53.00%)
  -  2rd: SevScale2E (93.00%)
  -  3rd: ENROLLAGE  (22.00%)

Survival Function  2 :
  -  1nd: SevScale1E (47.00%)
  -  2rd: SevScale2E (91.00%)

plotBRBVS(Bivrbvs)
\end{verbatim}

Before calling the function  \texttt{BRBVS}, it is important to separate the time variables $t_{11}, t_{12}, t_{21}, t_{22}$ and those related to censoring from the other covariates. In our example, the time and censoring variables were assigned to the \texttt{Y} object, while the covariates $\{ \texttt{SevScale1E}, \texttt{SevScale2E}, \texttt{GG},$ $ \texttt{GT}, \texttt{TT}, \texttt{ENROLLAGE}\}$ in \texttt{X}. 

Starting from the top, \texttt{kmax} defines the maximum number of covariates considered in the two sets, \texttt{tau} is a threshold parameter, and \texttt{n.rep} specifies the number of bootstrap replications. The choice of copula is made through the parameter \texttt{copula}, which in our example sets a Plackett copula, and \texttt{margins} determines the margins for the copula, proportional odds in the code above. Various copula configurations can be chosen, for details we refer to the  \texttt{GJRM} package. Finally, the \texttt{m} parameter establishes the subsample, and \texttt{metric} defines the measure to be used. Three different measures can be specified: \texttt{FIM} for the Fisher Information Measure, \texttt{Abs} for the absolute value of the coefficients, and \texttt{CE} for the Copula Entropy (Mutual Information). Applying the \texttt{summary()} function directly to an object of the \texttt{BRBVS} class returns a summary table with the selection frequencies in percentage. In detail,  the analysis setup is summarized, recalling the measure employed, the maximum number of covariates in the two sets, the specified copula, and the margins. For each of the time-to-events, the ranking is shown along with the corresponding selection frequency. The function \texttt{plotBRBVS()} generates a bar plot that represents the selection frequency of the covariates. More details about the \texttt{BRBVS()} function can be found in the \texttt{BRBVS} package in \texttt{R}.

Once we have the two sets of relevant variables using the function \texttt{BRBVS}, we can estimate the model using the \texttt{gjrm()} function in the \texttt{GJRM} package and obtain parameter estimates.

\begin{verbatim}
eq1 <- t11 ~ s(t11, bs = "mpi") + s(ENROLLAGE) + SevScale1E + SevScale2E
eq2 <- t21 ~ s(t21, bs = "mpi")  + SevScale1E + SevScale2E
eq3 <- ~ SevScale1E + SevScale2E
\end{verbatim}
 we are specifying \(\eta_1\) (\texttt{eq1}), \(\eta_2\) (\texttt{eq2}), and \(\eta_3\) (\texttt{eq3}).
\begin{verbatim}
f.list <- list(eq1, eq2, eq3)

out <- gjrm(f.list, data = AREDS, surv = TRUE,
            copula = "PL", margins = c("PO", "PO"),
            cens1 = cens1, cens2 = cens2, model = "B",
            upperBt1 = 't12', upperBt2 = 't22')   
\end{verbatim}
We can then check the convergence of the model using the function. \texttt{conv.check()}.
\begin{verbatim}
conv.check(out)

Largest absolute gradient value: 4.166472e-05
Observed information matrix is positive definite
Eigenvalue range: [0.008964266,164318.3]

Trust region iterations before smoothing parameter estimation: 71
Loops for smoothing parameter estimation: 8
Trust region iterations within smoothing loops: 18
Estimated overall probability range: 0.02390308 0.9999404
Estimated overall density range: 5.964254e-05 7.811673


\end{verbatim}
The model successfully converged, as evidenced by the number of iterations completed by the trust region algorithm. Additionally, the probability values obtained are coherent. Furthermore, the positively defined nature of the observed information matrix reaffirms the reliability of our results.
\begin{verbatim}
summary(out)

COPULA:   Plackett
MARGIN 1: survival with -logit link
MARGIN 2: survival with -logit link

EQUATION 1
Formula: t11 ~ s(t11, bs = "mpi") + s(ENROLLAGE) + SevScale1E + SevScale2E

Parametric coefficients:
            Estimate Std. Error z value Pr(>|z|)    
(Intercept) -18.3487     4.3956  -4.174 2.99e-05 ***
SevScale1E5   0.6875     0.2645   2.599 0.009355 ** 
SevScale1E6   0.8084     0.2552   3.168 0.001537 ** 
SevScale1E7   1.7219     0.2750   6.261 3.81e-10 ***
SevScale1E8   2.5884     0.3608   7.175 7.25e-13 ***
SevScale2E5   0.4063     0.2813   1.444 0.148612    
SevScale2E6   0.8768     0.2692   3.258 0.001123 ** 
SevScale2E7   1.0164     0.2900   3.505 0.000457 ***
SevScale2E8   1.4862     0.3403   4.367 1.26e-05 ***
---
Signif. codes:  0 ‘***’ 0.001 ‘**’ 0.01 ‘*’ 0.05 ‘.’ 0.1 ‘ ’ 1

Smooth components' approximate significance:
               edf Ref.df   Chi.sq p-value    
s(t11)       6.649  7.674 1836.475  <2e-16 ***
s(ENROLLAGE) 1.625  2.039    5.807  0.0548 .  
---
Signif. codes:  0 ‘***’ 0.001 ‘**’ 0.01 ‘*’ 0.05 ‘.’ 0.1 ‘ ’ 1


EQUATION 2
Formula: t21 ~ s(t21, bs = "mpi") + SevScale1E + SevScale2E

Parametric coefficients:
            Estimate Std. Error z value Pr(>|z|)    
(Intercept) -30.4468    10.5514  -2.886 0.003907 ** 
SevScale1E5   0.2684     0.2572   1.043 0.296794    
SevScale1E6   0.3454     0.2459   1.405 0.160059    
SevScale1E7   0.8530     0.2592   3.291 0.000999 ***
SevScale1E8   0.9947     0.3286   3.027 0.002468 ** 
SevScale2E5   0.8682     0.2796   3.105 0.001903 ** 
SevScale2E6   1.2294     0.2747   4.475 7.64e-06 ***
SevScale2E7   2.2515     0.2988   7.534 4.92e-14 ***
SevScale2E8   3.4251     0.3632   9.431  < 2e-16 ***
---
Signif. codes:  0 ‘***’ 0.001 ‘**’ 0.01 ‘*’ 0.05 ‘.’ 0.1 ‘ ’ 1

Smooth components' approximate significance:
         edf Ref.df Chi.sq p-value    
s(t21) 7.374  8.204   3808  <2e-16 ***
---
Signif. codes:  0 ‘***’ 0.001 ‘**’ 0.01 ‘*’ 0.05 ‘.’ 0.1 ‘ ’ 1


EQUATION 3
Link function for theta: log 
Formula: ~SevScale1E + SevScale2E

Parametric coefficients:
            Estimate Std. Error z value Pr(>|z|)    
(Intercept)   0.9658     0.5089   1.898 0.057702 .  
SevScale1E5  -0.2302     0.5196  -0.443 0.657677    
SevScale1E6  -0.6698     0.4964  -1.349 0.177259    
SevScale1E7  -1.1222     0.5165  -2.173 0.029812 *  
SevScale1E8  -1.0086     0.6035  -1.671 0.094662 .  
SevScale2E5   1.2973     0.5464   2.374 0.017580 *  
SevScale2E6   1.9442     0.5167   3.763 0.000168 ***
SevScale2E7   1.7939     0.5365   3.344 0.000826 ***
SevScale2E8   1.1878     0.6251   1.900 0.057421 .  
---
Signif. codes:  0 ‘***’ 0.001 ‘**’ 0.01 ‘*’ 0.05 ‘.’ 0.1 ‘ ’ 1

theta = 6.44(3.13,13.8)  tau = 0.354(0.203,0.496)
n = 628  total edf = 42.6

\end{verbatim}
Our analysis involves three distinct margin specifications, and correspondingly, the \texttt{summary()} function returns three tables, each dedicated to one margin. These output tables are organized into two key sections: `Parametric Effects' and `Smooth Effects'. In the `Parametric Effects' section, we present the estimates alongside their associated standard errors, including the z-value and the corresponding p-values. The `Smooth Effects' section focuses on the complexity of the model's smooth components. This is detailed by the number of degrees of freedom (\texttt{edf}), where, for instance, a value of `1' suggests a simple straight line, while `2' suggests a curve. Additionally, \texttt{Ref.df} (Reference degrees of freedom) and \texttt{chi-square} values are utilized to assess the statistical significance of these smooth components. Finally,  we have the value of \texttt{theta}, which represents $\theta_i= m\left\{\eta_{3 i}\left(\mathbf{x}_{3 i}; \boldsymbol{\beta}_{3}\right)\right\}$,  \texttt{tau} denotes the estimate of Kendall's tau, the sample size \texttt{n} and the estimate of the degrees of freedom  \texttt{total.edf} for the model specified. The functions \texttt{AIC()} and \texttt{BIC()} work in a similar fashion as those for the classical statistical models. More details about the model output produced by the function \texttt{gjrm()} can be found in the \texttt{GJRM} package in \texttt{R}.

\end{appendix}


\begin{thebibliography}{9}


\bibitem[Akaike, 1973]{Akaike}
Akaike, H. (1973).
\newblock Information theory and an extension of the maximum likelihood
  principle.
\newblock {\em In: Petrov, B.N., Csaki, B.F. (eds.) Second International
  Symposium on Information Theory. Academiai Kiado, Budapest}.

\bibitem[Amari, 2012]{amari2012differential}
Amari, S.~i. (2012).
\newblock {\em Differential-geometrical methods in statistics}, volume~28.
\newblock Springer Science \& Business Media.

\bibitem[Armstrong et~al., 2004]{armstrong2004illustration}
Armstrong, J., Davis, M., Gangnon, R., Lee, L.-Y., Klein, R., Klein, B.,
  Milton, R., Ferris, F., Hubbard, L., Group, A.-R. E. D.~S., et~al. (2004).
\newblock Illustration of the amd severity scale from the age--related eye
  diseases study.
\newblock {\em Investigative Ophthalmology \& Visual Science}, 45(13),
  3052--3052.

\bibitem[Baranowski et~al., 2020]{Baranowski2020}
Baranowski, R., Chen, Y., \& Fryzlewicz, P. (2020).
\newblock Ranking-based variable selection for high-dimensional data.
\newblock {\em Statistica Sinica}, 30(3), 1485--1516.

\bibitem[Cai et~al., 2005]{cai2005variable}
Cai, J., Fan, J., Li, R., \& Zhou, H. (2005).
\newblock Variable selection for multivariate failure time data.
\newblock {\em Biometrika}, 92(2), 303--316.

\bibitem[Cheng et~al., 2022]{cheng2022variable}
Cheng, J., Sun, J., Yao, K., Xu, M., \& Cao, Y. (2022).
\newblock A variable selection method based on mutual information and variance
  inflation factor.
\newblock {\em Spectrochimica Acta Part A: Molecular and Biomolecular
  Spectroscopy}, 268, 120652.

\bibitem[Cho \& Fryzlewicz, 2012]{Fryzlewicz2012}
Cho, H. \& Fryzlewicz, P. (2012).
\newblock {High dimensional variable selection via tilting}.
\newblock {\em Journal of the Royal Statistical Society Series B}, 74(3),
  593--622.

\bibitem[Fan et~al., 2010]{fan2010high}
Fan, J., Feng, Y., \& Wu, Y. (2010).
\newblock High-dimensional variable selection for cox’s proportional hazards
  model.
\newblock In {\em Borrowing strength: Theory powering applications--a
  Festschrift for Lawrence D. Brown}, volume~6  (pp.\ 70--87). Institute of
  Mathematical Statistics.

\bibitem[Fan \& Li, 2001]{fan2001variable}
Fan, J. \& Li, R. (2001).
\newblock Variable selection via nonconcave penalized likelihood and its oracle
  properties.
\newblock {\em Journal of the American statistical Association}, 96(456),
  1348--1360.

\bibitem[Fan \& Li, 2002]{fan2002variable}
Fan, J. \& Li, R. (2002).
\newblock Variable selection for cox's proportional hazards model and frailty
  model.
\newblock {\em The Annals of Statistics}, 30(1), 74--99.

\bibitem[Fan \& Lv, 2008]{fan2008}
Fan, J. \& Lv, J. (2008).
\newblock Sure independence screening for ultrahigh dimensional feature space.
\newblock {\em Journal of the Royal Statistical Society Series B: Statistical
  Methodology}, 70(5), 849--911.

\bibitem[Fan \& Lv, 2010]{fan2010overviewvarableselectio}
Fan, J. \& Lv, J. (2010).
\newblock A selective overview of variable selection in high dimensional
  feature space.
\newblock {\em Statistica Sinica}, 20(1), 101.

\bibitem[Fan \& Song, 2010]{FanSong2010}
Fan, J. \& Song, R. (2010).
\newblock {Sure independence screening in generalized linear models with
  NP-dimensionality}.
\newblock {\em The Annals of Statistics}, 38(6), 3567 -- 3604.

\bibitem[Fan \& Wu, 2009]{Fan2009VS}
Fan, Jianqing, S.~R. \& Wu, Y. (2009).
\newblock Ultrahigh dimensional feature selection: Beyond the linear model.
\newblock {\em Journal of Machine Learning Research}, 10, 2013--2038.

\bibitem[Fr{\'e}nay et~al., 2013]{frenay2013mutual}
Fr{\'e}nay, B., Doquire, G., \& Verleysen, M. (2013).
\newblock Is mutual information adequate for feature selection in regression?
\newblock {\em Neural Networks}, 48, 1--7.

\bibitem[Friedman et~al., 2010]{Friedman2010}
Friedman, J., Tibshirani, R., \& Hastie, T. (2010).
\newblock Regularization paths for generalized linear models via coordinate
  descent.
\newblock {\em Journal of Statistical Software}, 33, 1--22.

\bibitem[Hall \& Miller, 2009]{Hall2009}
Hall, P. \& Miller, H. (2009).
\newblock Using generalized correlation to effect variable selection in very
  high dimensional problems.
\newblock {\em Journal of Computational and Graphical Statistics}, 18(3),
  533--550.

\bibitem[Harrell, 2001]{harrell2001}
Harrell, F. (2001).
\newblock {\em Regression Modeling Strategies}.
\newblock Springer, New York.

\bibitem[He et~al., 2021]{he2021surescreening}
He, D., Zhou, Y., \& Zou, H. (2021).
\newblock On sure screening with multiple responses.
\newblock {\em Statistica Sinica}, 31(4), 1749--1777.

\bibitem[He et~al., 2013]{he2013quantile}
He, X., Wang, L., \& Hong, H.~G. (2013).
\newblock Quantile-adaptive model-free variable screening for high-dimensional
  heterogeneous data.
\newblock {\em The Annals of Statistics}, (pp.\ 342--369).

\bibitem[Heinze et~al., 2018]{HeinzeEtAl2018}
Heinze, G., Wallisch, C., \& D., D. (2018).
\newblock Variable selection – a review and recommendations for the
  practicing statistician.
\newblock {\em Biometrical Journal}, 60, 431–449.

\bibitem[Khan \& Akhter, 2022]{Khan2022}
Khan, M. H.~R. \& Akhter, M. (2022).
\newblock Ranking based variable selection for censored data using aft models.
\newblock {\em Communications in Statistics-Simulation and Computation}, (pp.\
  1--23).

\bibitem[Kwon et~al., 2020]{Kwon2019copula}
Kwon, S., Ha, I.~D., \& Kim, J.-M. (2020).
\newblock Penalized variable selection in copula survival models for clustered
  time-to-event data.
\newblock {\em Journal of Statistical Computation and Simulation}, 90(4),
  657--675.

\bibitem[Li et~al., 2016]{li2016survival}
Li, J., Zheng, Q., Peng, L., \& Huang, Z. (2016).
\newblock Survival impact index and ultrahigh-dimensional model-free screening
  with survival outcomes.
\newblock {\em Biometrics}, 72(4), 1145--1154.

\bibitem[Li et~al., 2012]{li2012b}
Li, R., Zhong, W., \& Zhu, L. (2012).
\newblock Feature screening via distance correlation learning.
\newblock {\em Journal of the American Statistical Association}, 107(499),
  1129--1139.
  
\bibitem[Lindblad et~al., 1999]{AREDSstudy1999}
Lindblad, A., Kassoff, A., Kieval, S., Mehu, M., Buehler, J., Eglow, M., Kaufman, F., Margherio, R., Cox, M., Garretson, B., et~al. (1999).
\newblock The age-related eye disease study ({AREDS}): Design implications {AREDS} report no. 1.
\newblock {\em Controlled clinical trials}, 20(6), 573--600.

\bibitem[Liu et~al., 2018]{Liu2018}
Liu, X.-R., Pawitan, Y., \& Clements, M. (2018).
\newblock Parametric and penalized generalized survival models.
\newblock {\em Statistical Methods in Medical Research}, 27(5), 1531--1546.

\bibitem[Loann \& Desboulets, 2018]{desboulets2018review}
Loann, D. \& Desboulets, D. (2018).
\newblock A review on variable selection in regression analysis.
\newblock {\em Econometrics}, 6(4), 45.

\bibitem[Ma, 2022]{ma2022copula}
Ma, J. (2022).
\newblock Copula entropy based variable selection for survival analysis.
\newblock {\em arXiv preprint arXiv:2209.01561}.

\bibitem[Marra \& Radice, 2020]{Marra2020}
Marra, G. \& Radice, R. (2020).
\newblock Copula link-based additive models for right-censored event time data.
\newblock {\em Journal of the American Statistical Association}, 115, 886--895.

\bibitem[Marra \& Radice, 2023]{Man:GJRM}
Marra, G. \& Radice, R. (2023).
\newblock {\em GJRM: Generalised Joint Regression Modelling}.

\bibitem[Park \& Ha, 2019]{park2018aft}
Park, E. \& Ha, I.~D. (2019).
\newblock Penalized variable selection for accelerated failure time models.
\newblock {\em Statistics in Medicine}, 38(5), 878--892.

\bibitem[Patton, 2002]{pattonT}
Patton, A.~J. (2002).
\newblock Applications of copula theory in financial econometrics.
\newblock {\em Ph.D. thesis, University of California, San Diego}.

\bibitem[Petti et~al., 2022]{PETTI22}
Petti, D., Eletti, A., Marra, G., \& Radice, R. (2022).
\newblock Copula link-based additive models for bivariate time-to-event
  outcomes with general censoring scheme.
\newblock {\em Computational Statistics and Data Analysis}, 175, 107550.

\bibitem[Petti et~al., 2024]{Petti2024Package}
Petti, D., Niglio, M., \& Restaino, M. (2024).
\newblock {\em BRBVS: Variable Ranking in Copula Survival Models Affected by
  General Censoring Scheme}.

\bibitem[Royston \& Parmar, 2002]{Royston2002}
Royston, P. \& Parmar, M. (2002).
\newblock Flexible parametric proportional-hazards and proportional-odds models
  for censored survival data, with application to prognostic modelling and
  estimation of treatment effects.
\newblock {\em Statistics in Medicine}, 21(15), 2175--2197.

\bibitem[Ruppert et~al., 2003]{Rupp03}
Ruppert, D., Wand, M.~P., \& Carroll, R.~J. (2003).
\newblock {\em Semiparametric Regression}.
\newblock Cambridge University Press, New York.

\bibitem[Shao \& Zhang, 2014]{Shao2014MartingaleDC}
Shao, X. \& Zhang, J. (2014).
\newblock Martingale difference correlation and its use in high-dimensional
  variable screening.
\newblock {\em Journal of the American Statistical Association}, 109, 1302 --
  1318.

\bibitem[Shih et~al., 2012]{shih2012development}
Shih, G.~C., Nigam, D., Agron, E., Chew, E.~Y., Group, A.-R. E. D. S.~R.,
  et~al. (2012).
\newblock The development of amd in participants with no or small drusen in the
  age-related eye disease study (areds).
\newblock {\em Investigative Ophthalmology \& Visual Science}, 53(14),
  847--847.

\bibitem[Sklar, 1973]{Sklar}
Sklar, A. (1973).
\newblock Random variables, joint distributions, and copulas.
\newblock {\em Kybernetica}, 9(6), 449--460.

\bibitem[Song et~al., 2014]{song2014censored}
Song, R., Lu, W., Ma, S., \& Jessie~Jeng, X. (2014).
\newblock Censored rank independence screening for high-dimensional survival
  data.
\newblock {\em Biometrika}, 101(4), 799--814.

\bibitem[Sun \& Ding, 2021]{sun2021copula}
Sun, T. \& Ding, Y. (2021).
\newblock Copula-based semiparametric regression method for bivariate data
  under general interval censoring.
\newblock {\em Biostatistics}, 22(2), 315--330.

\bibitem[Tibshirani, 1996]{tibshirani1996regression}
Tibshirani, R. (1996).
\newblock Regression shrinkage and selection via the lasso.
\newblock {\em Journal of the Royal Statistical Society Series B: Statistical
  Methodology}, 58(1), 267--288.

\bibitem[Tibshirani, 1997]{tibshirani1997lasso}
Tibshirani, R. (1997).
\newblock The lasso method for variable selection in the cox model.
\newblock {\em Statistics in medicine}, 16(4), 385--395.

\bibitem[Van~der Vaart, 2000]{van2000asymptotic}
Van~der Vaart, A.~W. (2000).
\newblock {\em Asymptotic Statistics}, volume~3.
\newblock Cambridge University Press.

\bibitem[Wang et~al., 2023]{wang2022spearman}
Wang, H., Yan, J., \& Yan, X. (2023).
\newblock Spearman rank correlation screening for ultrahigh-dimensional
  censored data.

\bibitem[Wood, 2017]{Wood2017}
Wood, S.~N. (2017).
\newblock {\em Generalized Additive Models: An Introduction With R}.
\newblock Second Edition, Chapman \& Hall/CRC, London.

\bibitem[Zhu et~al., 2011]{zhu2011model}
Zhu, L.-P., Li, L., Li, R., \& Zhu, L.-X. (2011).
\newblock Model-free feature screening for ultrahigh-dimensional data.
\newblock {\em Journal of the American Statistical Association}, 106(496),
  1464--1475.

\bibitem[Zou, 2008]{zou2008note}
Zou, H. (2008).
\newblock A note on path-based variable selection in the penalized proportional
  hazards model.
\newblock {\em Biometrika}, 95(1), 241--247.


\end{thebibliography}
\end{document}